\documentclass[%
  jmp,
 amsmath,amssymb,
 reprint
]{revtex4-1}

\usepackage{graphicx}
\usepackage{dcolumn}
\usepackage{bm}

\usepackage{xurl}

\usepackage{mathptmx}
\usepackage{etoolbox}

\usepackage{bm}
\usepackage[mathscr]{euscript}
\usepackage{amssymb,amsmath}
\usepackage{amsthm}
\usepackage{comment}
\usepackage{epstopdf}
\usepackage{color}
\usepackage[english]{babel}
\usepackage{enumerate}
\usepackage{bbold}
\usepackage{leftidx}
\usepackage{mathrsfs}
\usepackage{mathtools}
\usepackage{array}
\usepackage{hyperref}

\usepackage{tikz}
\usepackage{tikz-cd}
\usetikzlibrary{shapes.geometric,arrows.meta,positioning}  
\usetikzlibrary{matrix,calc}
\usetikzlibrary {shapes.misc,decorations.pathreplacing,topaths}
\usetikzlibrary{graphs,scopes,chains}

\usepackage{xr}
\makeatletter

\hypersetup{
           breaklinks=true,   
           colorlinks=true   
           }

\definecolor{MyOrange}{rgb}{0.8, 0.33, 0}
\definecolor{MyBrown}{rgb}{0.28, 0.20, 0.20}
\definecolor{MyBlue}{rgb}{0.12, 0.55, 1.73}

\newtheorem{theorem}{Theorem}[section]
\newtheorem{corollary}{Corollary}[theorem]
\newtheorem{lemma}[theorem]{Lemma}
\newtheorem{definition}{Definition}
\newtheorem{proposition}{Proposition}

\makeatletter
\def\@email#1#2{%
 \endgroup
 \patchcmd{\titleblock@produce}
  {\frontmatter@RRAPformat}
  {\frontmatter@RRAPformat{\produce@RRAP{*#1\href{mailto:#2}{#2}}}\frontmatter@RRAPformat}
  {}{}
}%
\makeatother
\begin{document}

\tikzset{   
            every picture/.style={ultra thick,text height=1.5ex,text depth=.25ex},
            roundnode/.style={circle, draw=MyBlue!100, fill=MyBlue!5, ultra thick, minimum size=5mm},
            specialnode/.style={circle, draw=red!100, fill=red!5, ultra thick, minimum size=5mm},
            sigmanode/.style={circle, draw=MyOrange!100, fill=MyOrange!5,  minimum size=5mm, ultra thick},
            spin/.style={circle, draw=black, inner sep=0mm, fill=black, ultra thick, minimum size=2mm},
            squarenode/.style={rectangle,minimum size=15mm,rounded corners=5mm,ultra thick,draw=black!50,top color=white,bottom color=black!20},
            ampersand replacement =\&}


\title{Decay of quantum conditional mutual information for purely generated finitely correlated states}
\author{Pavel Svetlichnyy}
 
\author{T.A.B. Kennedy}%
\affiliation{School of Physics, Georgia Institute of Technology, Atlanta, GA, 30332-0430, USA.}

\date{\today}

\begin{abstract}
The connection between quantum state recovery and quantum conditional mutual information (QCMI) is studied for the class of purely generated finitely correlated states (pgFCS) of one-dimensional quantum spin chains. For a tripartition of the chain into two subsystems separated by a buffer region, it is shown that a pgFCS is an approximate quantum Markov chain, and stronger, may be approximated by a quantum Markov chain in trace distance, with an error exponentially small in the buffer size. This implies that, (1) a locally corrupted state can be approximately recovered by action of a quantum channel on the buffer system, and (2) QCMI is exponentially small in the size of the buffer region. Bounds on the exponential decay rate of QCMI and examples of quantum recovery channels are presented.
\end{abstract}

\maketitle

\section{Introduction} 

In this paper we investigate properties of quantum conditional mutual information (QCMI) and quantum state recovery for the class of many-particle states known alternatively as purely generated finitely correlated states (pgFCS) or uniform matrix product states.\cite{fcs,pgfcs,klumper} These states were introduced in the early 1990's and used to describe the ground states of translationally invariant one-dimensional spin chains, including the well-known model of Affleck, Kennedy Lieb and Tasaki (AKLT) as a special case.\cite{aklt} Independently, the matrix product states were introduced by Klümper et al \cite{klumper} and separately shown by Vidal \cite{vidal-mps} to be an efficient representation of slightly entangled states, such as the ground states of local gapped Hamiltonians.\cite{mps-repr} Proposals for the controlled experimental preparation of such states have been reported in the literature.\cite{sequent-gen,lukin} The larger class of finitely correlated states (FCS) and the closely related matrix product density operators (MPDO),\cite{mpdo-introd} can be used to describe mixed states of systems with local Hamiltonians and finite range interactions, such as finite temperature Gibbs states.\cite{kuwahara,kato}

Positivity of QCMI is one of the cornerstones of quantum information theory, being equivalent to the strong subadditivity of quantum entropy.\cite{wilde,lieb-subadd} In the theory of quantum state recovery,\cite{petz-1,petz-2,petz-recovery,universal-rec-1,universal-rec-2,universal-rec-3,storage} viewed as a generalized quantum error correction problem, the states for which recovery is exact are known as quantum Markov chains. These states are intimately connected with the QCMI as is clear in the following setting. Consider a quantum system (a collection of spatially separated spins) subdivided into three parts, denoted $A$, $B$ and $C,$ see Figure 1a. The QCMI, denoted $I(A:C|B)$ indicates in broad terms the quantum information mutually shared by subsystems $A$ and $C$ given specific information about the state of subsystem $B$. Two essential and non-trivial properties of $I(A:C|B)$ are that it is non-negative $I(A:C|B) \geq 0,$ as a direct consequence of the strong subadditivity of quantum entropy, \cite{lieb-subadd} and that the equality condition $I(A:C|B) = 0$
defines the class of quantum states with density operator $\rho_{ABC}$, which are precisely the quantum Markov chains. In this setting the problem of quantum state recovery, in either exact or approximate form, may be approached by investigating when $I(A:C|B)$ is zero or very small, conditions which require investigation of quantum states described as either exact or approximate quantum Markov chains. This specific quantum information-theoretic task gives QCMI a compelling physical interpretation.

We study quantum state recovery for a one-dimensional quantum system $ABC$ for which the density operator $\rho_{ABC}$ is the reduced density operator of a pgFCS, respectively, uniform matrix product state. From here on we will principally use the former appellation and refer the reader to the dictionary between these two languages compiled in Appendix \ref{app-dictionary}. As we discuss further below, $\rho_{ABC}$ is an approximate quantum Markov chain and hence can be approximately recovered. 

The pgFCS is a subset of the class of FCS, some of whose properties we review in Section \ref{subsec-fcs-itrod}. The FCS can efficiently describe systems with limited entanglement and their preparation by using quantum circuits of limited depth.\cite{kato,eff-prep} It was recently shown that FCS may be used to approximate Gibbs states of systems described by one-dimensional local Hamiltonians.\cite{kato, kuwahara} (In Refs. \onlinecite{efficient-mpdo,improved-thermal} it was shown that the same class of states can be approximated by matrix product operators (MPO). It is not immediately clear that the resulting MPO are FCS or MPDO. The latter are called \emph{locally purifiable} in the sense of Ref. \onlinecite{mpdo-introd}.) Conversely, it is conjectured that a generic FCS approximates the Gibbs state of some local gapped Hamiltonian.\cite{fcs-mpdo} While Gibbs states are of broad physical interest, their sampling has been employed in a quantum algorithm proposed to speed up the computation of semi-definite programs.\cite{q-sdp-1,q-sdp-2,q-sdp-3}

The QCMI is defined for a system $ABC$ with the Hilbert space $\mathcal{H}=\mathcal{H}_A\otimes\mathcal{H}_B\otimes\mathcal{H}_C$ by \cite{watrouse}
\begin{equation}\label{qcmi-def}
I(A:C|B) := S(\rho_{AB})+S(\rho_{BC})-S(\rho_{ABC})-S(\rho_{B}),
\end{equation} 
where $S(\rho)=-\mathrm{tr}(\rho\log\rho)$ is the von Neumann entropy, and omission of a subscript implies taking a partial trace, e.g., $\rho_{AB}=\mbox{tr}_C\rho_{ABC}$. As noted earlier, QCMI is non-negative, $I(A:C|B)\geq 0,$ as an immediate consequence of the strong subadditivity of von-Neumann entropy. \cite{lieb-subadd} In Refs. \onlinecite{petz-1, petz-2} it was established that the condition $I(A:C|B)=0$ ($\rho_{ABC}$ is a quantum Markov chain) is equivalent to the existence of a quantum channel, a completely positive trace-preserving map, \cite{stinespring, watrouse} $\mathcal{R}_{B\rightarrow BC}$, which recovers $\rho_{ABC}$ from $\rho_{AB}$ exactly, i.e., $\mathcal{R}_{B\rightarrow BC}(\rho_{AB})=\rho_{ABC}$. From this property follows the particular structure of quantum Markov chains which we summarize in Theorem \ref{qmc-structure}. \cite{petz-recovery} The recovery map is not unique, and we can always explicitly construct the Petz recovery map, which is defined (on the support of $\rho_{B}$) by \cite{petz-1, petz-2, petz-recovery}
\begin{equation}\label{petz-map}
    \mathcal{P}_{B\rightarrow BC}(X):=\rho_{BC}^{\frac{1}{2}}(\rho_B^{-\frac{1}{2}}X\rho_B^{-\frac{1}{2}}\otimes\mathbb{1}_C)\rho_{BC}^{\frac{1}{2}},
\end{equation}
where $\rho_B^{-\frac{1}{2}}$ is a pseudo-inverse of $\rho_B^{\frac{1}{2}}$. In the case when a recovery quantum channel does not restore the state exactly, the \emph{recovery error} is quantified by the trace distance, 
\begin{equation*}
    \epsilon:=\|\rho_{ABC}-\mathcal{R}_{B\rightarrow BC}(\rho_{AB})\|_1, 
\end{equation*}
where $\|X\|_1=\mathrm{tr}(\sqrt{X^\dagger X})$. The recovery error, for a candidate recovery channel $\mathcal{R}_{B\rightarrow BC}$, bounds QCMI from above via an Alicki-Fannes type inequality \cite{alicki-fannes, tight-bounds} (see Appendix \ref{app-bound}),
\begin{align}\label{qcmi-rec}
I(A:C|B)&\leq \frac{1}{2}\epsilon \ln \dim\mathcal{H}_A+(1+\frac{1}{2}\epsilon)h\left(\frac{\epsilon}{2+\epsilon}\right), 
\end{align}
where $h(p)=-p\ln p-(1-p)\ln(1-p)$ is the binary entropy. This bound was used to show that a Gibbs state of a 1D local Hamiltonian is an approximate quantum Markov chain.\cite{kato} By constructing a specific recovery channel, for which the recovery error is subexponentially small in the size of the region $B$, it was shown that the QCMI is also subexponentially small. In this paper we similarly construct an approximate recovery channel for pgFCS, and bound QCMI from above using (\ref{qcmi-rec}).

Theoretical developments regarding the reversibility of quantum channels obtained in a series of works Refs. \onlinecite{universal-rec-1, universal-rec-2, universal-rec-3}, led to the discovery of a recovery channel $\mathcal{R}^u_{B\rightarrow BC}$, called the universal recovery map, which may be used to bound QCMI from below,
\begin{equation}\label{rec-qcmi}
    \frac{1}{4\ln2}\|\rho_{ABC}-\mathcal{R}^u_{B\rightarrow BC}(\rho_{AB})\|_1^2\leq I(A:C|B).
\end{equation}
The universal recovery map, which is a generalization of the Petz recovery map (\ref{petz-map}), is given explicitly in Ref. \onlinecite{universal-rec-3} and, as in the case of the Petz recovery map, depends only on the marginal $\rho_{BC}$. The inequality (\ref{rec-qcmi}) implies that states with small QCMI are guaranteed to have a small recovery error with respect to the universal recovery map. In Ref. \onlinecite{kuwahara}, it was shown for local Hamiltonian systems of arbitrary dimension, and high enough temperature, that QCMI for Gibbs states decays exponentially with the width of region $B$, separating regions $A$ and $C$. By (\ref{rec-qcmi}) this implies good recovery by the universal recovery map. (It is conjectured, \cite{kato} that in one dimension the exponential decay of QCMI holds for any temperature.) Similar conclusions were found for finite temperature Gibbs states of free fermions, free bosons, conformal field theories, and holographic models.\cite{swingle}

\begin{figure*}[t]
    \centering
    \includegraphics[width=0.95\textwidth]{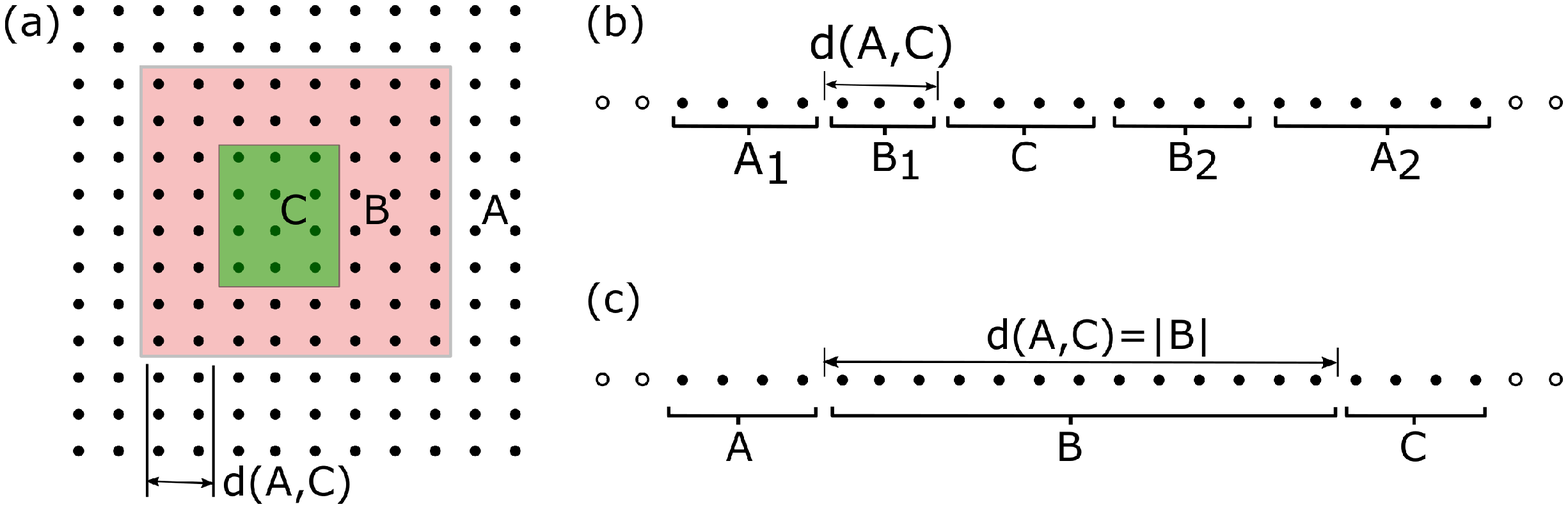}
    \caption{The figure shows lattices of spins partitioned into subsystems $A$, $B$, and $C$, which may be analyzed with the help of QCMI. In (a) we show a typical partition of a two dimensional spin lattice with the distance $d(A,C)$ shown. In (b) and (c) subsystem $B$ is disconnected and connected, respectively, and the distance $d(A,C)$ is again shown. In this paper we focus on the partition (c).}
    \label{fig1}
\end{figure*}

A more physical picture of the recovery map is as an experimental repair process, with an accuracy bounded by how rapidly QCMI decreases as a function of the distance $d(A,C)$ shown in Figure \ref{fig1}.\cite{storage}
For example, for the lattice of spins shown in Figure \ref{fig1}(a), suppose that quantum state corruption or erasure has happened in the region $C$. It is then natural to partition the lattice by means of a buffer zone $B$ surrounding $C$.\cite{swingle, eff-prep} Qualitatively, quantum states with rapidly decaying QCMI may be recovered accurately on erased regions with small buffer zones $B$.\cite{swingle, eff-prep, kato} In one dimension the analysis can be reduced to the partition shown in Figure \ref{fig1}(c),\cite{kato} which we focus on in this paper. We briefly consider the setting of Figure \ref{fig1}(b) in Appendix \ref{disk-setup}.

In the language of quantum memories, QCMI is connected to the task of storing quantum information in the presence of quantum error correction,\cite{storage} and to state preparation by (circuits of) local quantum channels.\cite{eff-prep, swingle, kato}  Let us assume that quantum channels can be applied experimentally over regions with size limited by the linear dimension $l$. Conceptually, a quantum state may be prepared on a much larger region by first constructing reduced states on a grid of disconnected regions and then patching them together by the application of a second layer of quantum channels. The deviation of the constructed state from the target state will depend on the value of QCMI for the partitions with $d(A,C) = O(l)$. An example of Gibbs state construction by the application of layers of universal recovery channels is given in Ref. \onlinecite{eff-prep}. A simpler, one-dimensional example, in which the circuit consists of two layers of quantum channels, is presented in Ref. \onlinecite{kato}. Note that in the latter reference the quantum channels proposed are not universal recovery channels. In this work we also avoid use of the universal recovery maps, relying instead on the explicit structure of pgFCS.

Our main result, a bound on the recovery error and the related decay of QCMI, is summarized in Theorem \ref{main-result} below. The latter can be compared to and contrasted with the behavior of quantum mutual information (QMI), defined by $I(A:C)=S(\rho_{A})+S(\rho_{C})-S(\rho_{AC})$. The QMI has the property that it bounds the quantum correlations between the regions $A$ and $C$ via the quantum Pinsker's inequality.\cite{quant-pinsker} Those pgFCS which correspond to injective MPS\cite{mps-repr} exhibit exponential decay of both QMI\cite{area-laws} and correlations.\cite{fcs, exp-decay-area-law} A theoretical bound on QMI can be obtained more straightforwardly than a similar bound on QCMI, since in the former case the separating region $B$ is traced out and information about its state is lost. We note that for pgFCS corresponding to non-injective MPS QMI does not necessarily converge to zero with the growth of region $B$ (see Appendix \ref{qmi-examples} for examples). By contrast, we show that for any pgFCS the QCMI converges to zero.   

The remainder of this paper is organized around the proof of 
Theorem \ref{main-result}, which we state in Section \ref{sec-main}. In Section \ref{sec-prelim} we establish our notation and conventions, and provide relevant background and theorems on FCS, pgFCS, quantum channels and quantum Markov chains, which we use throughout the paper. In Section \ref{sec-proof} we prove Theorem \ref{main-result} starting with the simplest case of pgFCS (corresponding to the injective MPS), and proceed to more general cases. While the proof does not require consideration of specific recovery channels, we give examples of the latter in Section \ref{sec-rec-maps}. In Sections \ref{sec-prelim},  \ref{sec-proof}, and \ref{sec-rec-maps} we rely on additional technical results, presented in a sequence of appendices. We summarize our conclusions in Section \ref{sec-concl}.

\section{Main result}\label{sec-main}
In this paper we consider a pgFCS density operator, denoted by $\rho_{ABC},$ for the tripartite system $ABC$ depicted in Figure \ref{fig1}(c). We construct a QMC approximating $\rho_{ABC}$ and a recovery map for which both the trace distance error and the recovery error are exponentially small in the size of the separating region $B$. Using inequality (\ref{qcmi-rec}), we conclude that QCMI decays exponentially. The methods we use are built upon those of Refs. \onlinecite{fcs, mps-repr, area-laws, exp-decay-area-law}, which we augment with the continuity theorem for Stinespring's dilation,\cite{infdisturb} cited in Section \ref{subsec-quant-chan} as Theorem \ref{inf-dist-theorem}. 

We now state our principle result,
\begin{theorem}\label{main-result}
Let $\rho_{ABC}$ be the reduced density operator for a pgFCS on a finite contiguous one-dimensional region $ABC$, where the subsystem $B$ separates the subsystems $A$ and $C$. Then, provided that the region $B$ is large enough,
\begin{enumerate}
    \item
    There exists a QMC, denoted $\tilde{\rho}_{ABC}$, and constants $q>0$, $K>0$, such that
    $$\|\rho_{ABC}-\tilde{\rho}_{ABC}\|_1\leq \frac{1}{2}Ke^{-q|B|}.$$
    \item 
    There exists a quantum channel $\mathcal{R}_{B\rightarrow BC}:\mathcal{B}(\mathcal{H}_B)\rightarrow\mathcal{B}(\mathcal{H}_B\otimes \mathcal{H}_C)$, such that $$\|\rho_{ABC}-(\mathrm{id}_A\otimes \mathcal{R}_{B\rightarrow BC}) \left( \rho_{AB} \right)\|_1\leq Ke^{-q|B|}.$$ 
    \item 
    Furthermore, the QCMI satisfies $I(A:C|B)\leq \tilde{K}e^{-q|B|}$, where $\tilde{K}$ depends linearly on the sizes of the regions $A$ and $B$.  
\end{enumerate}
\end{theorem}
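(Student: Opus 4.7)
The strategy is to exploit the tensor-network representation of the pgFCS: there is a finite-dimensional bond space $\mathcal{K}$, a generating isometry $V:\mathcal{K}\to\mathcal{H}_{\mathrm{spin}}\otimes\mathcal{K}$, and a completely positive unital transfer operator $\mathcal{E}:\mathcal{B}(\mathcal{K})\to\mathcal{B}(\mathcal{K})$, in terms of which $\rho_{ABC}$ depends on the buffer $B$ only through $\mathcal{E}^{|B|}$ acting on the bond between the $A$- and $C$-tensors. In the injective case $\mathcal{E}$ has a unique fixed point and spectral gap $\lambda<1$, so $\mathcal{E}^{|B|}$ converges exponentially (at rate $-\ln\lambda$) to the rank-one projector $\mathcal{E}^{\infty}$ onto its fixed point. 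The plan is to define $\tilde{\rho}_{ABC}$ by substituting $\mathcal{E}^{\infty}$ for $\mathcal{E}^{|B|}$ in the tensor-network expression; this rank-one insertion decouples the left and right halves of the network and, by a direct inspection, exhibits $\tilde{\rho}_{ABC}$ in the form characterising a QMC via Theorem~\ref{qmc-structure}. Standard trace-norm estimates for MPS contractions then yield $\|\rho_{ABC}-\tilde{\rho}_{ABC}\|_{1}\le K\lambda^{|B|}$ with $K$ independent of $|A|$ and $|C|$, proving part~(1) with $q=-\ln\lambda$.

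For part~(2) I would take $\mathcal{R}_{B\to BC}$ to be the Petz map $\tilde{\mathcal{P}}_{B\to BC}$ built from $\tilde{\rho}_{BC}$. Because $\tilde{\rho}_{ABC}$ is a QMC, $(\mathrm{id}_{A}\otimes\tilde{\mathcal{P}}_{B\to BC})(\tilde{\rho}_{AB})=\tilde{\rho}_{ABC}$ exactly; inserting and removing this reference and using that both partial trace and $\tilde{\mathcal{P}}_{B\to BC}$ are contractive in trace norm gives
\[\|\rho_{ABC}-(\mathrm{id}_{A}\otimes\tilde{\mathcal{P}}_{B\to BC})(\rho_{AB})\|_{1}\le 2\,\|\rho_{ABC}-\tilde{\rho}_{ABC}\|_{1}\le 2Ke^{-q|B|}.\]
(Alternatively, one can construct a recovery map directly from the generating isometry $V$ by invoking the Stinespring continuity result, Theorem~\ref{inf-dist-theorem}.) Part~(3) then follows immediately from the Alicki-Fannes-type bound~(\ref{qcmi-rec}): the $\tfrac{1}{2}\epsilon\ln\dim\mathcal{H}_{A}$ term contributes $O(|A|)\cdot e^{-q|B|}$ since $\ln\dim\mathcal{H}_{A}$ is linear in $|A|$, while the binary-entropy term contributes $O(|B|)\cdot e^{-q|B|}$ since $h(\epsilon/(2+\epsilon))\sim(\epsilon/2)|\ln\epsilon|$, yielding the advertised $\tilde{K}=O(|A|+|B|)$.

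The principal obstacle I foresee is the non-injective case, where $\mathcal{E}$ has several eigenvalues of modulus one arising from a block decomposition of the bond space and possibly from a periodic structure realised by roots of unity. The simple limit $\mathcal{E}^{\infty}$ no longer exists, and the patched state must instead be built using the full peripheral spectral projector of $\mathcal{E}$ together with the canonical block decomposition of the MPS. The delicate point is that a convex combination of QMCs with the same tripartition is generally \emph{not} itself a QMC, so Markovianity will have to be restored by conditioning on a small classical register (for instance the block label) situated inside $B$; this is presumably why Section~\ref{sec-proof} is organised to treat the injective case first and then extend to the general pgFCS. A secondary caveat is the hypothesis that $|B|$ be large enough: this is needed so that the supports of $\tilde{\rho}_{B}$ and $\tilde{\rho}_{BC}$ are compatible and the Petz map $\tilde{\mathcal{P}}_{B\to BC}$ is a bona fide quantum channel.
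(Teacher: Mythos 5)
There is a genuine gap at the heart of your construction. You assert that ``$\rho_{ABC}$ depends on the buffer $B$ only through $\mathcal{E}^{|B|}$'' and propose to define $\tilde{\rho}_{ABC}$ by substituting the limiting projector for $\mathcal{E}^{|B|}$. But $\mathcal{E}^{|B|}=\mathrm{tr}_B(V_B\,\cdot\,V_B^\dagger)$ has the physical $B$ legs already contracted: it is $\rho_{AC}$, not $\rho_{ABC}$, that depends on $B$ only through $\mathcal{E}^{|B|}$. The state $\rho_{ABC}$ (and the QMC you need, whose Markov structure by Theorem~\ref{qmc-structure} lives precisely in a factorization of $\mathrm{supp}(\rho_B)$) depends on $B$ through the isometry $V_B$ with the $B$ output kept. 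So your substitution does not define a state on $ABC$, and ``standard trace-norm estimates'' cannot be run, because closeness of the superoperators $\mathcal{E}^{|B|}$ and $\tilde{\mathcal{E}}^{|B|}$ says nothing, by itself, about the channels that retain the $B$ system. The missing step --- and the crux of the paper's proof --- is to lift the approximation to the isometry level: take $\tilde{V}_B$ to be a Stinespring dilation of $\tilde{\mathcal{E}}^{|B|}$ with dilation space embedded in $\mathcal{H}_B$ (this is the real content of the ``$|B|$ large enough'' hypothesis, $\dim\mathcal{H}_B\ge d_M^2$, not support compatibility for the Petz map), and control $\|V_B-\tilde{V}_B\|$ via the continuity theorem for Stinespring dilations, Theorem~\ref{inf-dist-theorem}, which you mention only as a parenthetical alternative for the recovery map. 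This lifting also costs a square root: the paper obtains $q=-\tfrac12\ln\nu$ with $\nu$ just above the gap, so your claimed rate $q=-\ln\lambda$ is not what this route delivers, and you give no argument that would achieve it. The explicit form of $\tilde{V}_B$, Eq.~(\ref{tilde-iso-form}), is then what makes the ``decoupling'' and the verification of condition~3 of Theorem~\ref{qmc-structure} a genuine computation rather than an inspection.

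Your treatment of the non-injective case is only an anticipated difficulty, but your guess at the remedy is in the right direction: in the paper the block (and period) labels are recorded in mutually orthogonal subspaces $\mathcal{H}_{B_k},\mathcal{H}_{B_\alpha}\subseteq\mathcal{H}_B$, so $\tilde{\rho}_{ABC}$ becomes a direct sum of QMCs over orthogonal $B$-subspaces, which is again a QMC by Corollary~\ref{qmc-structure-cor} --- exactly your ``classical register inside $B$'', and another place where $\dim\mathcal{H}_B\ge d_M^2$ is used. Parts~(2) and~(3), granting part~(1), are handled as you say (the factor $2$ from contractivity, then the bound (\ref{qcmi-rec})), and your bookkeeping $\tilde{K}=O(|A|+|B|)$ matches the paper. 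But as written, part~(1) --- the construction of $\tilde{\rho}_{ABC}$ and its trace-norm estimate --- is not established.
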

\noindent \textbf{Remarks}:
\begin{enumerate}
    \item We will show that $q=-\frac{1}{2}\log\nu$, where $\nu_{\mathrm{gap}}<\nu<1$ with $\nu_{\mathrm{gap}}:=\max_{|\nu_i|<1}\{|\nu_i|\}$, where $\nu_i$ are the eigenvalues of the quantum channel $\mathcal{E}$ induced by the pgFCS under consideration (see Section \ref{subsec-fcs-itrod}). The value of $\nu$ may be chosen arbitrarily close to $\nu_{\mathrm{gap}}$, with compensating increase in $n_0$, defined in the Remark 2(b) immediately below. See Lemma \ref{b-2} proved in Appendix \ref{app-b} for details.

    \item Let $d_s$ be the dimension of the Hilbert space $\mathcal{H}_s$ of a single spin. Our proof requires that size of the region $B$, denoted by $|B|$, is large enough that $\dim\mathcal{H}_{B}=d_s^{|B|}\geq d_M^2$, where $d_M$ is the dimension of the \emph{memory space}, defined for FCS in Section \ref{subsec-fcs-itrod}. This demand is quite mild, since $\dim\mathcal{H}_B$ grows exponentially with $|B|$.

    \item We may write a bound that does not require a choice of the parameter $\nu$, however this bound will have a logarithmic correction in the exponent,
    $$
        \|\rho_{ABC}-(\mathrm{id}_A\otimes \mathcal{R}_{B\rightarrow BC}) \left( \rho_{AB} \right)\|_1\leq K'e^{-q'|B|+u'\log|B|},
    $$
    where $q'=-\frac{1}{2}\log\nu_{\mathrm{gap}}$, and $K'$ and $u'$ are constants. See Lemma \ref{b-2-app} for details. A similar modification may be done to the bound on QCMI.
    
    \item For the setup of Figure \ref{fig1}(b) we have a result similar to Theorem \ref{main-result} with an additional factor of $2$ in the pre-exponent and $d(A,C)=\min\{|B_1|,|B_2|\}$ replacing $|B|$. In this case we have to demand that both $B_1$ and $B_2$ are sufficiently large, i.e., $\dim\mathcal{H}_{B_1},\dim\mathcal{H}_{B_2}\geq d_M^2$. We prove this result in Appendix \ref{disk-setup}.
    
    \item In Theorem \ref{main-result}, the points $2$ and $3$ are consequences of the stronger statement $1$: there exist approximate QMC's, which cannot be approximated by (exact) QMC's. \cite{book-sutter,Christandl2012,Ibinson2008} 
\end{enumerate}

\section{Preliminaries}\label{sec-prelim}

In this section we take the opportunity to introduce some
relevant background on quantum channels, finitely correlated states and quantum Markov chains, and to compile and restate several important results from the literature in our notation. These are needed for the proof of Theorem \ref{main-result} in Section \ref{sec-proof}.

\subsection{Notation and conventions}\label{subsec-prelim}
All Hilbert spaces considered in this paper are finite-dimensional and are denoted, up to subscripts, by either $\mathcal{H}$ or $\mathcal{K}$. In particular, we denote the Hilbert space of a single spin by $\mathcal{H}_s$, and denote its dimension by $d_s:=\dim\mathcal{H}_s$. We define $\mathcal{B}(\mathcal{H})$ to be the space of linear operators acting on $\mathcal{H}$; the space $\mathcal{B}(\mathcal{H})$ is isomorphic to the space of $\dim\mathcal{H}\times\dim\mathcal{H}$ complex matrices. The set of density operators on the Hilbert space $\mathcal{H}$ is denoted by $\mathcal{D}(\mathcal{H})$. The $n$-fold composition of a quantum channel, for instance $\mathcal{E}$, is denoted by $\mathcal{E}^n$. 

To lighten the notation we often omit the identity operator/map in tensor products of operators/maps. For example, for a quantum channel $\Phi:\mathcal{B}(\mathcal{H})\rightarrow \mathcal{B}(\mathcal{H}_s)\otimes \mathcal{B}(\mathcal{H})$, the quantum channel $\Phi^2\equiv\Phi\circ\Phi: \mathcal{B}(\mathcal{H})\rightarrow\mathcal{B}(\mathcal{H}_s)\otimes\mathcal{B}(\mathcal{H}_s)\otimes\mathcal{B}(\mathcal{H})$ denotes the map $X\mapsto(\mathrm{id}_s\otimes\Phi)(\Phi(X))$, where $\mathrm{id}_s$ is the identity map on $\mathcal{B}(\mathcal{H}_s)$. Analogously,  if $U:\mathcal{H}\rightarrow\mathcal{K}_2\otimes\mathcal{H}$ and $W:\mathcal{H}\rightarrow\mathcal{K}_1\otimes\mathcal{H}$ are isometries, then $U W:\mathcal{H}\rightarrow\mathcal{K}_1\otimes\mathcal{K}_2\otimes\mathcal{H}$ denotes the isometry $(\mathbb{1}_{\mathcal{K}_1}\otimes U)W$, which we can also express in terms of an orthonormal basis of $\mathcal H$ as
\begin{equation}\label{prod-isom-def}
    UW:=\sum_{i,j,k=1}^{\dim\mathcal{H}}  \langle k|W|j\rangle \otimes \langle i|U|k\rangle\otimes  |i\rangle\langle j|.
\end{equation}
For system $ABC$ the notation $\rho_{AB}$ means the partial trace of $\rho_{ABC}$ over the Hilbert space of system $C$, i.e., $\rho_{AB}:=\mbox{tr}_C\rho_{ABC}$. We denote the number of spins in the region $A$ by $|A|$. The support of an operator $O:\mathcal{K}\rightarrow\mathcal{H}$, i.e., the closure of the set $\left\{|\psi\rangle\in\mathcal{K}\;|\;O|\psi\rangle\neq 0\right\}$, is denoted by $\mathrm{supp}(O)$. 

\subsection{Quantum channels}\label{subsec-quant-chan}
A quantum channel $\mathcal{M}$ is a completely positive trace-preserving map $\mathcal{M}:\mathcal{B}(\mathcal{K}_{in})\rightarrow\mathcal{B}(\mathcal{K}_{out})$. The complete positivity means that for operators in the space $\mathcal{B}(\mathcal{K}_R\otimes\mathcal{K}_{in})$, where $\mathcal{K}_R$ is a reference Hilbert space of arbitrary dimension, the map $\mathrm{id}_R\otimes\mathcal{M}$ is positive, i.e., for any $\dim\mathcal{K}_R\in\mathbb{N}$ and for any $\mathcal{B}(\mathcal{K}_R\otimes\mathcal{K}_{in})\ni X\geq 0$, $(\mathrm{id}_R\otimes\mathcal{M})(X)\geq 0$. For finite-dimensional $\mathcal{K}_{in}$, as considered here, it suffices to demand that $\mathrm{id}_R\otimes\mathcal{M}$ is positive for the case $\dim\mathcal{K}_R=\dim\mathcal{K}_{in}$. The trace-preservation implies that $\mathrm{tr}\mathcal{M}(X)=\mathrm{tr}X$ for any $X\in\mathcal{B}(\mathcal{K}_{in})$. For $p,q\in[1,\infty)$ we define the $p\rightarrow q$ norm $\|\mathcal{M}\|_{p\rightarrow q}:=\sup_{X\in\mathcal{B}(\mathcal{K}_{in})}{\|\mathcal{M}(X)\|_q}/{\|X\|_p}$, where $\|\cdot\|_r$ is the Schatten $r$-norm. \cite{watrouse} 

By Stinespring's dilation theorem, \cite{stinespring} in the finite-dimensional setting, a quantum channel $\mathcal{M}:\mathcal{B}(\mathcal{K}_{in})\rightarrow\mathcal{B}(\mathcal{K}_{out})$ may be represented in the form \cite{watrouse}
\begin{equation}\label{quant-ch}
    \mathcal{M}(X)=\mathrm{tr}_E\left(WXW^\dagger\right),
\end{equation}
where $W:\mathcal{K}_{in}\rightarrow\mathcal{K}_E\otimes\mathcal{K}_{out}$ is an isometry, $W^\dagger W=\mathbb{1}_{in}$, called the dilating isometry, and the Hilbert space $\mathcal{K}_E$ is called the dilation space (or environment). The pair $\{\mathcal{K}_E,W\}$ is referred to as a dilation of $\mathcal{M}$. For brevity, we will write that $W$ dilates the quantum channel $\mathcal{M}$. We note that $\dim\mathcal{K}_E\leq\dim\mathcal{K}_{in}\dim\mathcal{K}_{out}$.\cite{watrouse} The isometric representation is not unique,\cite{infdisturb} and, if $W:\mathcal{K}_{in}\rightarrow\mathcal{K}_E\otimes\mathcal{K}_{out}$ and $W':\mathcal{K}_{in}\rightarrow\mathcal{K}_{E'}\otimes\mathcal{K}_{out}$ are different dilations with $\dim\mathcal{K}_{E'}\geq\dim\mathcal{K}_{E}$, then $W'=UW$ for some isometry $U:\mathcal{K}_{E}\rightarrow\mathcal{K}_{E'}$, $U^\dagger U=\mathbb{1}_E$. In the case of $\dim\mathcal{K}_E=\dim\mathcal{K}_{E'}$, the isometry $U$ is unitary.

The space of all quantum channels $\mathcal{M}:\mathcal{B}(\mathcal{K}_{in})\rightarrow\mathcal{B}(\mathcal{K}_{out})$ with fixed domain and codomain can be endowed with the so-called \emph{diamond norm},\cite{kitaev} also called \emph{completely bounded norm}, $\|\mathcal{M}\|_\diamond:=\sup_{\dim{\mathcal{K}_R}}\|\mathrm{id}_R\otimes\mathcal{M}\|_{1\rightarrow 1}$. In the finite-dimensional case the supremum is achieved for $\dim\mathcal{K}_R=\dim\mathcal{K}_{in}$.\cite{kitaev} The quantum channels are continuous with respect to the diamond norm, a result that is expressed by the Continuity Theorem for Stinespring's Representation,
\begin{theorem}[Ref.\,\onlinecite{infdisturb}]\label{inf-dist-theorem}
    Let $\mathcal{K}_{in}$ and $\mathcal{K}_{out}$ be finite-dimensional Hilbert spaces, and suppose that $\mathcal{M}_1,\mathcal{M}_2:\mathcal{B}(\mathcal{K}_{in})\rightarrow\mathcal{B}(\mathcal{K}_{out})$ are quantum channels with Stinespring isometries $W_1, W_2: \mathcal{K}_{in} \rightarrow \mathcal{K}_E\otimes\mathcal{K}_{out}$ and a common dilation space $\mathcal{K}_E$. We then have
    \begin{equation}\label{inf-ineq}
        \inf_{U_E}\|W_1-U_EW_2\|^2\leq \|\mathcal{M}_1-\mathcal{M}_2\|_\diamond\leq 2\inf_{U_E}\|W_1-U_EW_2\|,
    \end{equation}
    where the minimization is with respect to all unitary $U_E\in\mathcal{B}(\mathcal{K}_E)$.
\end{theorem}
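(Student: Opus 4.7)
The inequality (\ref{inf-ineq}) has two parts; the plan is to prove each separately. The upper bound is elementary, while the lower bound will rely on Uhlmann's theorem and the Fuchs--van de Graaf inequality.

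For the upper bound $\|\mathcal{M}_1 - \mathcal{M}_2\|_\diamond \leq 2 \inf_{U_E}\|W_1 - U_EW_2\|$, I would fix an arbitrary unitary $U_E$ on $\mathcal{K}_E$ and note that $(U_E \otimes \mathbb{1}_{\mathrm{out}})W_2$ is another Stinespring isometry of $\mathcal{M}_2$ with the same dilation space. For any $Y \in \mathcal{B}(\mathcal{K}_R \otimes \mathcal{K}_{in})$ with $\|Y\|_1 \leq 1$, I would express $(\mathrm{id}_R \otimes(\mathcal{M}_1-\mathcal{M}_2))(Y)$ as a partial trace of a difference of conjugations, apply the algebraic identity $AYA^\dagger - BYB^\dagger = (A-B)YA^\dagger + BY(A-B)^\dagger$, use contractivity of the partial trace in trace norm, and close with H\"older's inequality $\|XZ\|_1 \leq \|X\|\,\|Z\|_1$. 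Using $\|W_i\|=1$ and $\|\mathbb{1}_R \otimes X\| = \|X\|$, this chain of routine manipulations will produce the bound $\|(\mathrm{id}_R \otimes(\mathcal{M}_1-\mathcal{M}_2))(Y)\|_1 \leq 2\|W_1-U_EW_2\|$, and supremizing over $Y, \mathcal{K}_R$ and infimizing over $U_E$ delivers the upper inequality.

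For the lower bound $\inf_{U_E}\|W_1 - U_EW_2\|^2 \leq \|\mathcal{M}_1-\mathcal{M}_2\|_\diamond$, I would begin pointwise: for each unit vector $|v\rangle \in \mathcal{K}_{in}$, the vectors $W_i|v\rangle$ are purifications of $\mathcal{M}_i(|v\rangle\langle v|)$ on the shared environment $\mathcal{K}_E$. Uhlmann's theorem then yields
\begin{equation*}
\min_{U_E}\|(W_1-U_EW_2)|v\rangle\|^2 = 2\bigl(1-F(\mathcal{M}_1(|v\rangle\langle v|),\mathcal{M}_2(|v\rangle\langle v|))\bigr),
\end{equation*}
and combining this with the Fuchs--van de Graaf inequality $2(1-F(\rho,\sigma))\leq\|\rho-\sigma\|_1$ and the definition of the diamond norm bounds the left-hand side by $\|\mathcal{M}_1-\mathcal{M}_2\|_\diamond$ for every $|v\rangle$.

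The main obstacle, and the step I expect to be delicate, is promoting this vector-wise bound to the operator norm: $\inf_{U_E}\|W_1-U_EW_2\|^2=\inf_{U_E}\sup_{|v\rangle}\|(W_1-U_EW_2)|v\rangle\|^2$, which requires exchanging the order of $\inf$ and $\sup$. The set of unitaries on $\mathcal{K}_E$ is compact but not convex, so Sion's minimax theorem does not apply directly. My plan is to relax the infimum to the convex compact set of contractions on $\mathcal{K}_E$, apply Sion's theorem to the bilinear function $(U,\rho)\mapsto \mathrm{Re}\,\mathrm{tr}(\rho\, W_1^\dagger U W_2)$ on contractions times density operators, and then appeal to the common dilation space together with finite dimensionality to argue that the optimizing contraction may be taken unitary without loss of generality. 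This final minimax-to-unitary step is precisely the delicate content of Ref.~\onlinecite{infdisturb}, to which I would defer for the technical details.
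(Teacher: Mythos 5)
The paper never proves this statement: Theorem~\ref{inf-dist-theorem} is imported verbatim from Ref.~\onlinecite{infdisturb}, so there is no internal proof to measure your attempt against, only the original reference. Judged on its own terms, your outline of the right-hand inequality is correct and complete in spirit: $U_EW_2$ is again a Stinespring isometry of $\mathcal{M}_2$, and the identity $AYA^\dagger-BYB^\dagger=(A-B)YA^\dagger+BY(A-B)^\dagger$ with contractivity of the partial trace and H\"older gives $\|(\mathrm{id}_R\otimes(\mathcal{M}_1-\mathcal{M}_2))(Y)\|_1\leq 2\|W_1-U_EW_2\|\,\|Y\|_1$, after which the suprema and the infimum over $U_E$ can be taken. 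The pointwise part of your lower bound is also sound: $W_1|v\rangle$ and $U_EW_2|v\rangle$ are purifications of the two outputs over the common environment, every purification of $\mathcal{M}_2(|v\rangle\langle v|)$ in $\mathcal{K}_E\otimes\mathcal{K}_{out}$ has the form $(U_E\otimes\mathbb{1})W_2|v\rangle$, so Uhlmann gives $\min_{U_E}\|(W_1-U_EW_2)|v\rangle\|^2=2(1-F)$, and Fuchs--van de Graaf bounds this by $\|(\mathcal{M}_1-\mathcal{M}_2)(|v\rangle\langle v|)\|_1\leq\|\mathcal{M}_1-\mathcal{M}_2\|_\diamond$.

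The genuine gap is the step you defer, which is not a technicality but the entire nontrivial content of the theorem, and your stated plan for it does not yet close it. Relaxing the infimum from unitaries to the convex compact set of contractions can only decrease the inf--sup, so Sion's theorem controls $\inf_{A}\sup_\rho\bigl(2-2\,\mathrm{Re}\,\mathrm{tr}[\rho\,W_1^\dagger(A\otimes\mathbb{1})W_2]\bigr)$; to obtain the inequality as stated you must still show that the optimal contraction can be exchanged for a unitary on the \emph{same} $\mathcal{K}_E$ without increasing the supremum over input states. That is not a routine extreme-point argument: for fixed $A$ the quantity to be controlled is $2-2\lambda_{\min}\mathrm{Re}\bigl(W_1^\dagger(A\otimes\mathbb{1})W_2\bigr)$, i.e.\ you are maximizing a concave function of $A$ over a convex set, whose maximizer need not be an extreme point (a unitary). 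The standard repair, dilating the optimal contraction to a unitary, a priori enlarges the environment: on $\mathcal{K}_E\oplus\mathcal{K}_E$ one checks $\|W_1-(U'\otimes\mathbb{1})W_2\|^2=\sup_\rho\bigl(2-2\,\mathrm{Re}\,\mathrm{tr}[\rho\,W_1^\dagger(A\otimes\mathbb{1})W_2]\bigr)$ for the unitary dilation $U'$ of $A$, which proves the bound for a suitably enlarged common dilation space but not verbatim for the given $\mathcal{K}_E$. Supplying the argument that makes the unitary on the original common environment suffice (or that the enlargement is harmless for the stated claim) is exactly what Ref.~\onlinecite{infdisturb} does; as written, your proposal reproduces the known strategy but, at the decisive step, does what the paper already does, namely cite that reference.
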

\noindent We denoted by $\|W\|$ the operator norm of $W$. In the case $\mathcal{M}_1=\mathcal{M}_2$, we obtain $W_1=U_EW_2$ for some unitary $U_E$, which implies that the dilating isometry is defined up to a unitary on the dilation space, consistent with (\ref{quant-ch}).

For a quantum channel $\mathcal{M}$, regarded as a linear map on the $\dim\mathcal{K}_{in}^2$-dimensional vector space $\mathcal{B}(\mathcal{K}_{in})$, we define the spectrum in the usual way: $X \in\mathcal{B}(\mathcal{K}_{in})$ is an eigenvector with the eigenvalue $\nu$, if $X$ is non-zero and $\mathcal{M}(X)=\nu X$. Each eigenvalue $\nu_i$ ($i=1,\cdot\cdot\cdot ,\dim^2\mathcal{K}_{in}$) of $\mathcal{M}$ satisfies $|\nu_i|\leq 1$, and there always exists at least one fixed point with the eigenvalue equal to $1$.\cite{irred-pos-maps} Hence the spectral radius of a quantum channel is equal to $1$. The set of eigenvalues with $|\nu_i|=1$ is known as the \emph{peripheral spectrum}.\cite{fcs} When the peripheral spectrum is a singleton set, consisting of the single eigenvalue $1$, it is said to be trivial.   

We now employ the spectral decomposition to construct a quantum channel $\tilde{\mathcal{E}}$ from a given quantum channel $\mathcal{E}$.\cite{wolf-szehr} This will be useful because that the $n$-fold composition $\tilde{\mathcal{E}}^n$ is an excellent approximation to $\mathcal{E}^n$ in the limit $n \rightarrow \infty$.\cite{wolf-szehr}  A dilating isometry associated to $\tilde{\mathcal{E}}^n$ possesses convenient properties, and we will take advantage of these. 

As shown in Appendix \ref{app-vec}, the space of all maps $\mathcal{B}(\mathcal{K})\rightarrow\mathcal{B}(\mathcal{K})$ is isomorphic to $\mathcal{B}(\mathcal{K}\otimes\mathcal{K})$.

\begin{definition}\label{e-tilde-e}
Let $\mathcal{K}$ be a finite-dimensional Hilbert space, and $\mathcal{E}:\mathcal{B}(\mathcal{K})\rightarrow\mathcal{B}(\mathcal{K})$ be a map. Let $E\in\mathcal{B}(\mathcal{K}\otimes\mathcal{K})$ be the operator isomorphic to $\mathcal{E}$, $E:=\mathrm{vec}(\mathcal{E})$. Let $E$ have the Jordan decomposition $E=\sum_{i}(\nu_iP_{\nu_i}+N_{\nu_i})$, with the $P_{\nu_i}$ orthogonal projectors onto the subspace of the eigenvalue $\nu_i$, and the $N_{\nu_i}$ nilpotent operators. Define $P_{\mathcal{E}}:=\sum_{|\nu_i|=1}P_{\nu_i}$, the projector onto the peripheral spectrum of $\mathcal{E}$, and $\tilde{E}:=P_{\mathcal{E}}E=\sum_{|\nu_i|=1}(\nu_iP_{\nu_i}+N_{\nu_i})$. Then we define the map $\tilde{\mathcal{E}}:=\mathrm{vec^{-1}}(\tilde{E})=\mathrm{vec^{-1}}(P_{\mathcal{E}}\cdot\mathrm{vec}(\mathcal{E}))$. 
\end{definition}
For $\mathcal{E}$ and $\tilde{\mathcal{E}}$ as given in Definition \ref{e-tilde-e} we have,
\begin{lemma}[Ref. \onlinecite{wolf-szehr}]
If $\mathcal{E}$ is a quantum channel, then $\tilde{\mathcal{E}}$ is a quantum channel.
\end{lemma}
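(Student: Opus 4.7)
The plan is to realize $\tilde{\mathcal{E}}$ as a limit of powers $\mathcal{E}^{n_k+1}$ along a well-chosen subsequence $n_k\to\infty$, and then appeal to closedness of the set of quantum channels. In finite dimensions this last point is routine: complete positivity is equivalent to positive semidefiniteness of the Choi matrix, and trace preservation is a linear condition, so both properties survive pointwise (equivalently, norm) limits.

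First I would show that the peripheral spectrum is semisimple, i.e., $N_{\nu_i}=0$ whenever $|\nu_i|=1$, so that $\tilde{E}=\sum_{|\nu_i|=1}\nu_i P_{\nu_i}$. This uses power-boundedness: each $\mathcal{E}^n$ is a quantum channel, hence $\|\mathcal{E}^n\|_{1\to 1}\leq 1$ for every $n$. On the generalized eigenspace for $\nu_i$ the restriction of $E$ acts as $\nu_i P_{\nu_i}+N_{\nu_i}$, whose $n$th power expands via the binomial formula into $\sum_{j=0}^{d_i-1}\binom{n}{j}\nu_i^{n-j}N_{\nu_i}^j$. For $|\nu_i|=1$ these terms grow polynomially in $n$ as soon as $N_{\nu_i}\neq 0$, contradicting power-boundedness. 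Next I would extract a subsequence $n_k\to\infty$ with $\nu_i^{n_k}\to 1$ simultaneously for every peripheral $\nu_i$. Since there are finitely many such $\nu_i$ and each lies on the unit circle, the orbit $(\nu_1^n,\dots,\nu_K^n)$ in the compact torus $(S^1)^K$ has an accumulation point; differencing two nearby indices along it and passing to a subsequence yields the required $n_k$.

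Finally, using that $P_{\mathcal{E}}$ commutes with $E$, I would decompose $E^{n_k+1}=P_{\mathcal{E}}E^{n_k+1}+\bigl((I-P_{\mathcal{E}})E\bigr)^{n_k+1}$. By semisimplicity the first summand equals $\tilde{E}^{n_k+1}=\sum_i \nu_i\cdot\nu_i^{n_k}P_{\nu_i}$, which by the choice of $n_k$ converges to $\sum_i\nu_i P_{\nu_i}=\tilde{E}$. The second summand is the $(n_k+1)$th power of an operator of spectral radius strictly less than $1$, and hence tends to $0$. Applying $\mathrm{vec}^{-1}$ yields $\mathcal{E}^{n_k+1}\to\tilde{\mathcal{E}}$, and since every $\mathcal{E}^{n_k+1}$ is a composition of quantum channels, the closure of the set of quantum channels forces $\tilde{\mathcal{E}}$ itself to be a quantum channel. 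I expect the main obstacle to be the first step, where one must leverage power-boundedness to rule out non-trivial Jordan blocks on the unit circle; the recurrence in the second step is purely kinematic, and the closure argument at the end is automatic.
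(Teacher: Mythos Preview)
Your proof is correct and follows the same overall strategy as the paper's: realize $\tilde{\mathcal{E}}$ as a limit of powers $\mathcal{E}^{n_k}$ and then use that the set of quantum channels is closed. The differences are in two details. First, for semisimplicity of the peripheral spectrum the paper simply cites the fact that $N_{\nu_i}=0$ when $|\nu_i|=1$, whereas you supply the standard power-boundedness argument; yours is more self-contained. Second, to select the subsequence the paper invokes the structural fact (specific to quantum channels) that the peripheral eigenvalues form a finite cyclic group under multiplication, so there is an integer $m$ with $\tilde{\mathcal{E}}^m=\tilde{\mathcal{E}}$ and one may take $n_k=k(m-1)+1$, along which $\tilde{\mathcal{E}}^{n_k}=\tilde{\mathcal{E}}$ exactly; then Lemma~\ref{b-2} gives $\mathcal{E}^{n_k}-\tilde{\mathcal{E}}^{n_k}\to 0$. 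You instead use compactness of the torus $(S^1)^K$ to obtain $\nu_i^{n_k}\to 1$ simultaneously for all peripheral $\nu_i$, which avoids the cyclic-group theorem entirely at the price of only approximate (rather than exact) recurrence on the peripheral part. For the closure argument this makes no difference, so your route is a slightly more elementary variant of the same proof.
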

\begin{proof}
One may argue (see the proof of Proposition 3.3 in Ref. \onlinecite{fcs}, for example) that the nilpotents $N_{\nu_i}=0$ for the peripheral eigenvalues $\nu_i$, satisfying $|\nu_i|=1$. Moreover, the peripheral eigenvalues form a cyclic group under multiplication, hence there exists $m\in\mathbb{N}$, such that $\tilde{E}^m=\tilde{E}$, and $\tilde{\mathcal{E}}^m=\tilde{\mathcal{E}}$. Then for any $n\in\mathbb{N}$, $\tilde{\mathcal{E}}^{n(m-1)+1}=\tilde{\mathcal{E}}$. As Lemma \ref{b-2} below assures, the sequence $\{\mathcal{E}^{n}-\tilde{\mathcal{E}}^{n}\}_{n\in\mathbb{N}}$ converges to zero in 2-2 norm (and in diamond norm, since all the spaces are finite-dimensional), hence the subsequence $\{\mathcal{E}^{n(m-1)+1}-\tilde{\mathcal{E}}^{n(m-1)+1}\}_{n\in\mathbb{N}}$ converges to zero as well. Then $\{\mathcal{E}^{n(m-1)+1}\}_{n\in\mathbb{N}}$ converges to $\tilde{\mathcal{E}}$. For any $X\in\mathcal{B}(\mathcal{K})$,  $\mathrm{tr}(\mathcal{E}^{n(m-1)+1}(X))=\mathrm{tr}(X)$, hence $\mathrm{tr}\tilde{\mathcal{E}}(X)=\mathrm{tr}(X)$, thus $\tilde{\mathcal{E}}$ is trace-preserving. To show that $\tilde{\mathcal{E}}$ is completely positive, it suffices to prove that for any $X \geq 0$ in $\mathcal{B}(\mathcal{K}\otimes\mathcal{K})$, the condition $(\mathrm{id}\otimes\tilde{\mathcal{E}})(X)\geq 0$ is satisfied. Select $X \geq 0$ in $\mathcal{B}(\mathcal{K}\otimes\mathcal{K})$, and note that $(\mathrm{id}\otimes\mathcal{E}^{n(m-1)+1})(X)\geq 0$. The sequence $\{(\mathrm{id}\otimes\mathcal{E}^{n(m-1)+1})(X)\}_{n\in\mathbb{N}}$ converges to $(\mathrm{id}\otimes\tilde{\mathcal{E}})(X)$. Since the set of positive semidefinite operators on a finite-dimensional Hilbert space is closed, then $(\mathrm{id}\otimes\tilde{\mathcal{E}})(X)=\lim_{n\rightarrow \infty}(\mathrm{id}\otimes\mathcal{E}^{n(m-1)+1})(X)\geq 0$, completing the proof.
\end{proof}
The following lemma gathers results on convergence from Refs. \onlinecite{fcs,wolf-szehr}. For the sake of completeness, we provide its proof in our notation in Appendix \ref{app-b}.
\begin{lemma}[Theorem III.2 of Ref.\,\onlinecite{wolf-szehr}; Ref.\,\onlinecite{fcs}]\label{b-2}
Let $\mathcal{E}:\mathcal{B}(\mathcal{K})\rightarrow\mathcal{B}(\mathcal{K})$ be a map with the spectral radius $1$, and let $\tilde{\mathcal{E}}$ be the map obtained from $\mathcal{E}$ as described in Definition \ref{e-tilde-e}. Then for any $\nu\in\mathbb{R}$, such that $\nu_{\mathrm{gap}}<\nu<1$, with $\nu_{\mathrm{gap}}$ defined in Theorem \ref{main-result}, there exists the constant $c>0$, depending on $\nu$, such that
\begin{equation*}
    \|\mathcal{E}^n-\tilde{\mathcal{E}}^n\|_{2\rightarrow 2}\leq c\nu^n.
\end{equation*}
\end{lemma}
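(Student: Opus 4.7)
The plan is to pass to the vectorized picture and reduce the lemma to an operator-norm estimate on $\mathcal{K}\otimes\mathcal{K}$. The vec isomorphism of Appendix \ref{app-vec} identifies superoperators $\mathcal{B}(\mathcal{K})\to\mathcal{B}(\mathcal{K})$ with operators on $\mathcal{K}\otimes\mathcal{K}$, and equipping $\mathcal{B}(\mathcal{K})$ with the Hilbert--Schmidt inner product makes the $2\!\to\!2$ superoperator norm coincide with the usual operator norm of the vectorization. So it suffices to show $\|E^n-\tilde{E}^n\|_{\mathrm{op}}\leq c\nu^n$, with $E=\mathrm{vec}(\mathcal{E})$ and $\tilde{E}=\mathrm{vec}(\tilde{\mathcal{E}})$ as in Definition \ref{e-tilde-e}.

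Next, I would exploit the Jordan decomposition directly. Split $E=E_p+E_{np}$, where
$$E_p:=\sum_{|\nu_i|=1}(\nu_i P_{\nu_i}+N_{\nu_i}), \qquad E_{np}:=\sum_{|\nu_i|<1}(\nu_i P_{\nu_i}+N_{\nu_i}),$$
so that $E_p=\tilde{E}$ by definition. The spectral projectors obey $P_{\nu_i}P_{\nu_j}=\delta_{ij}P_{\nu_i}$ and $\sum_i P_{\nu_i}=\mathbb{1}$, while the Jordan nilpotents satisfy $N_{\nu_i}P_{\nu_j}=P_{\nu_j}N_{\nu_i}=\delta_{ij}N_{\nu_i}$. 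Consequently $E_p$ and $E_{np}$ live on complementary invariant subspaces and satisfy $E_p E_{np}=E_{np}E_p=0$, so $E^n=E_p^n+E_{np}^n$ and hence
$$E^n-\tilde{E}^n \;=\; E_{np}^n \;=\; \sum_{|\nu_i|<1}\sum_{k=0}^{d_i-1}\binom{n}{k}\,\nu_i^{\,n-k}\,N_{\nu_i}^k,$$
where $d_i$ is the smallest positive integer with $N_{\nu_i}^{d_i}=0$ and I used the commutativity of $\nu_i P_{\nu_i}$ and $N_{\nu_i}$ on the corresponding generalized eigenspace to apply the binomial theorem.

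The remaining step is a routine asymptotic estimate. Using the triangle inequality, $|\nu_i|\leq\nu_{\mathrm{gap}}$ on the non-peripheral spectrum, and $\binom{n}{k}\leq n^k$, I would arrive at
$$\|E^n-\tilde{E}^n\|_{\mathrm{op}} \;\leq\; M\sum_{|\nu_i|<1}\sum_{k=0}^{d_i-1} n^k\,\nu_{\mathrm{gap}}^{\,n-k},$$
where $M$ uniformly bounds $\|N_{\nu_i}^k\|_{\mathrm{op}}$ over the finitely many relevant pairs $(i,k)$ (the number of distinct eigenvalues and the block sizes are both bounded by $(\dim\mathcal{K})^2$). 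For each fixed $k$ one has $n^k\,\nu_{\mathrm{gap}}^{\,n-k}/\nu^n = \nu_{\mathrm{gap}}^{-k}\,n^k(\nu_{\mathrm{gap}}/\nu)^n\to 0$ as $n\to\infty$, because $\nu>\nu_{\mathrm{gap}}$, so the supremum over $n$ is finite. Absorbing the finitely many $k$'s and eigenvalues into a single constant yields the required $c>0$ with $\|E^n-\tilde{E}^n\|_{\mathrm{op}}\leq c\nu^n$.

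The only delicate point is the clean orthogonal splitting $E^n=E_p^n+E_{np}^n$, which rests on the fact that the peripheral and non-peripheral generalized eigenspaces are disjoint $E$-invariant subspaces whose spectral projectors annihilate one another; this is where the hypothesis ``spectral radius equal to $1$'' enters to cleanly separate the two regimes. Once that algebraic separation is secured, no additional input about quantum-channel-specific structure (such as the vanishing of peripheral nilpotents invoked in the preceding lemma) is needed: the bound is purely a Jordan-block statement.
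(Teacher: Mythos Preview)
Your proof is correct and follows essentially the same approach as the paper: vectorize to reduce to an operator-norm estimate, use the Jordan decomposition to isolate $E^n-\tilde{E}^n=\sum_{|\nu_i|<1}(\nu_i P_{\nu_i}+N_{\nu_i})^n$, expand binomially, and absorb the polynomial factor into the exponential by passing from $\nu_{\mathrm{gap}}$ to any $\nu\in(\nu_{\mathrm{gap}},1)$. Your treatment is in fact slightly cleaner: you bound the sum by the triangle inequality rather than the paper's $\sup_{\|\psi\|=1}$ calculation, and your closing remark that the vanishing of peripheral nilpotents is not needed is correct and sharper than the paper, which invokes that fact (via Ref.~\onlinecite{fcs}) even though the identity $E^n-\tilde{E}^n=E_{np}^n$ holds regardless.
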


\subsection{Finitely correlated states}\label{subsec-fcs-itrod}
FCS are a special class of translationally invariant quantum states on a chain of identical finite dimensional quantum systems (spins). The structure of FCS was characterized in Ref. \onlinecite{fcs}, and is summarized here in the form convenient for our purposes. We refer the reader to the original papers Refs. \onlinecite{fcs, pgfcs} for the unabridged treatment of FCS.

A FCS can be described by the pair of a full-rank density operator $\sigma\in\mathcal{B}(\mathcal{H}_M)$ and a quantum channel $\Phi:\mathcal{B}(\mathcal{H}_M)\rightarrow\mathcal{B}(\mathcal{H}_s)\otimes\mathcal{B}(\mathcal{H}_M)$, satisfying a compatibility condition $\mathrm{tr}_s\Phi(\sigma)=\sigma$. \cite{fcs} Here $\mathcal{H}_M$ and $\mathcal{H}_s$ are Hilbert spaces with dimensions $d_M$ and $d_s$, respectively. The space $\mathcal{H}_M$ is referred to as the \emph{memory space}, and $\mathcal{H}_s$ is the Hilbert space of a single spin. The FCS reduced density operator for a continuous region $R$ with $|R|$ spins, is generated by $\Phi$ and $\sigma$ as 
\begin{equation}\label{fcs-r}
    \rho_R=\mbox{tr}_M(\Phi^{|R|}(\sigma)).
\end{equation}
Here $\mathrm{tr}_M$ is a partial trace over the Hilbert space $\mathcal{H}_M$; it should not be confused with the trace over the spins in the chain outside the region $R$. Colloquially, each $\Phi$ generates a single spin, and a composition of $\Phi$ generates consecutive spins in the chain. Note that the pair of $\Phi$ and $\sigma$ generating $\rho_R$ is not necessarily unique, and using the results of Refs. \onlinecite{fcs, pgfcs} we may choose the most convenient representation for our purposes. In this subsection we list all the relevant properties of the representation we choose. 
 
For the FCS $(\Phi,\sigma)$, we define the induced quantum channel $\mathcal{E}:\mathcal{B}(\mathcal{H}_M)\rightarrow\mathcal{B}(\mathcal{H}_M)$ by \begin{equation}\label{E-gen-fcs}
    \mathcal{E}(X)=\mathrm{tr}_s\Phi(X),
\end{equation}
for which $\sigma$ is a fixed point, $\mathcal{E}(\sigma)=\sigma$.

\subsubsection{Ergodic FCS} 
We consider a subcollection of FCS, called \emph{ergodic} FCS. By definition, an ergodic state is extremal in the convex set of translationally invariant states. However, in the context of this paper, two other equivalent definitions (see Proposition 3.1 of Ref. \onlinecite{fcs} and Lemma 4.1 of Ref. \onlinecite{irred-pos-maps}) will be more useful: (i) the state for which the eigenvalue $1$ of $\mathcal{E}$ is non-degenerate; (ii) the state for which $\mathcal{E}$ is \emph{irreducible} in the sense that there does not exist a non-trivial projector $\tilde{\Pi}$, such that $\tilde{\Pi}\mathcal{B}(\mathcal{H}_M)\tilde{\Pi}$ is invariant under $\mathcal{E}$. The importance of ergodic states lies in the fact that any FCS can be decomposed into a convex sum of ergodic FCS, which are also referred to as \emph{ergodic components} (Corollary 3.2 of Ref. \onlinecite{fcs}). We collect together various observations that lead to this conclusion in the proposition below. The item 1 follows from the Theorem 3.1 of Ref. \onlinecite{irred-pos-maps} applied to $\mathcal{E}$ and $\sigma$. The items 2 and 3 are obtained using the same reasoning as in the proof of Proposition 3.3 of Ref. \onlinecite{fcs}. The item 4 follows from Lemma 4.1 of Ref. \onlinecite{irred-pos-maps} and item 2.
\begin{proposition}\label{fcs-erg-decomp}
    Without loss of generality, we may assume that a FCS is generated by a pair $(\Phi,\sigma)$, with the induced quantum channel $\mathcal{E}$, for which the following properties hold
    \begin{enumerate}
        \item There exist $J$ orthogonal projectors $\tilde{\Pi}_\alpha:\mathcal{H}_M\rightarrow\mathcal{H}_M$, $\sum_{\alpha=1}^{J}\tilde{\Pi}_\alpha=\mathbb{1}_M$, such that $\mathcal{E}(\tilde{\Pi}_\alpha\mathcal{B}(\mathcal{H}_M)\tilde{\Pi}_\alpha)\subseteq\tilde{\Pi}_\alpha\mathcal{B}(\mathcal{H}_M)\tilde{\Pi}_\alpha$. The restriction of $\mathcal{E}$ to $\tilde{\Pi}_\alpha\mathcal{B}(\mathcal{H}_M)\tilde{\Pi}_\alpha$ is irreducible.
        \item The density operator $\sigma$ is block-diagonal with respect to the decomposition $\mathcal{H}_M=\bigoplus_{\alpha=1}^{J}\tilde{\Pi}_\alpha\mathcal{H}_M$, i.e.,
        $\sigma=\bigoplus_{\alpha=1}^{J}\tilde{\Pi}_\alpha\sigma\tilde{\Pi}_\alpha$.
        \item $\Phi(\tilde{\Pi}_\alpha\mathcal{B}(\mathcal{H}_M)\tilde{\Pi}_\alpha)\subseteq\mathcal{B}(\mathcal{H}_s)\otimes\tilde{\Pi}_\alpha\mathcal{B}(\mathcal{H}_M)\tilde{\Pi}_\alpha$.
        \item The restriction of $\mathcal{E}$ to $\tilde{\Pi}_\alpha\mathcal{B}(\mathcal{H}_M)\tilde{\Pi}_\alpha$ has a non-degenerate eigenvalue equal to 1, corresponding to the fixed point $\tilde{\Pi}_\alpha\sigma\tilde{\Pi}_\alpha$.
    \end{enumerate}
\end{proposition}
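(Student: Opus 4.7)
The proposition asserts that every FCS admits a generating pair $(\Phi,\sigma)$ satisfying the four listed structural properties. My plan is to produce such a pair directly from the ergodic decomposition, so that items 2 and 3 hold by construction, while items 1 and 4 follow from the block decomposition of the induced channel.

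I start from an arbitrary generating pair $(\Phi_0,\sigma_0)$. Replacing $\mathcal{H}_M$ by $\mathrm{supp}(\sigma_0)$ does not change $\rho_R=\mathrm{tr}_M\Phi_0^{|R|}(\sigma_0)$, so WLOG $\sigma_0$ is full-rank and $\mathcal{E}_0$ has $\sigma_0$ as a faithful fixed point. Theorem 3.1 of Ref.~\onlinecite{irred-pos-maps} then produces orthogonal projectors $\{\tilde{\Pi}_\alpha\}_{\alpha=1}^{J}$ partitioning $\mathbb{1}_M$ such that each compression $\tilde{\Pi}_\alpha\mathcal{B}(\mathcal{H}_M)\tilde{\Pi}_\alpha$ is invariant under $\mathcal{E}_0$ and the restriction is irreducible; this is item 1. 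Lemma 4.1 of Ref.~\onlinecite{irred-pos-maps} then delivers on each block a unique faithful fixed point $\sigma_\alpha$ on $\tilde{\Pi}_\alpha\mathcal{H}_M$ of the block restriction.

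The main construction invokes the ergodic decomposition of Corollary 3.2 of Ref.~\onlinecite{fcs}, by which $\rho=\sum_\alpha p_\alpha\rho^\alpha$ with each $\rho^\alpha$ generated by an ergodic pair $(\Phi^\alpha,\sigma_\alpha)$ on $\tilde{\Pi}_\alpha\mathcal{H}_M$. Taking Kraus representations $\Phi^\alpha(X)=\sum_{i,j}|i\rangle\langle j|\otimes L_i^\alpha X(L_j^\alpha)^\dagger$ with $L_i^\alpha$ supported on $\tilde{\Pi}_\alpha\mathcal{H}_M$, I define
$$
\sigma:=\bigoplus_\alpha p_\alpha\sigma_\alpha, \qquad L_i:=\sum_\alpha L_i^\alpha, \qquad \Phi(X):=\sum_{i,j}|i\rangle\langle j|\otimes L_iXL_j^\dagger.
$$
The operator-support disjointness of the $L_i^\alpha$ across $\alpha$ makes $\Phi$ CPTP and manifestly block-preserving (item 3), and makes $\sigma$ manifestly block-diagonal (item 2). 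A direct computation then gives $\mathrm{tr}_M\Phi^{|R|}(\sigma)=\sum_\alpha p_\alpha\,\mathrm{tr}_M(\Phi^\alpha)^{|R|}(\sigma_\alpha)=\sum_\alpha p_\alpha\rho^\alpha=\rho$, so $(\Phi,\sigma)$ generates the same FCS as $(\Phi_0,\sigma_0)$. Item 4 is now immediate: the restriction $\mathcal{E}\big|_{\tilde{\Pi}_\alpha\mathcal{B}(\mathcal{H}_M)\tilde{\Pi}_\alpha}=\mathcal{E}^\alpha$ is irreducible, so by Lemma 4.1 of Ref.~\onlinecite{irred-pos-maps} has $1$ as a non-degenerate eigenvalue with unique fixed point $\sigma_\alpha=\tilde{\Pi}_\alpha\sigma\tilde{\Pi}_\alpha/p_\alpha$.

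I expect the main obstacle to be the ergodic decomposition of Corollary 3.2 of Ref.~\onlinecite{fcs}, which I am invoking as a black box: producing the ergodic pairs $(\Phi^\alpha,\sigma_\alpha)$ from the original $(\Phi_0,\sigma_0)$ requires combining the block decomposition of $\mathcal{E}_0$ with convergence of iterates $\mathcal{E}_0^n$ onto its peripheral eigenspace (Theorem III.2 of Ref.~\onlinecite{wolf-szehr}) to identify the block-diagonal fixed-point structure; once this structural input is in hand, the rest of the proof is a clean assembly.
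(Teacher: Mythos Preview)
Your proposal is correct but takes a different route from the paper. The paper does not construct a new generating pair; it argues that the \emph{original} pair $(\Phi,\sigma)$ (with $\sigma$ full-rank) already satisfies items 1--4. Specifically, item 1 is obtained by applying Theorem 3.1 of Ref.~\onlinecite{irred-pos-maps} to $\mathcal{E}$ and $\sigma$; items 2 and 3 are derived \emph{for the given $\Phi$ and $\sigma$} by the reasoning of Proposition 3.3 of Ref.~\onlinecite{fcs} (roughly: once $\mathcal{E}$ preserves each block and $\sigma$ is its faithful fixed point, $\sigma$ must be block-diagonal, and then $\Phi$ itself must preserve the blocks); item 4 comes from Lemma 4.1 of Ref.~\onlinecite{irred-pos-maps} together with item 2. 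You instead rebuild $\Phi$ from scratch by gluing the ergodic-component channels $\Phi^\alpha$ supplied by Corollary 3.2 of Ref.~\onlinecite{fcs}, so that items 2 and 3 hold by construction. This is a legitimate and arguably cleaner assembly, but note that in the paper's own logical flow the convex decomposition into ergodic components (your black box) is presented as a \emph{consequence} of this proposition, not a premise, so your route reverses the direction of inference relative to the paper. The paper's approach buys self-containment and avoids the overhead of reassembling a new $\Phi$; your approach buys transparency of items 2--3 at the cost of importing Corollary 3.2 wholesale, whose proof in Ref.~\onlinecite{fcs} already contains the structural ingredients you would otherwise need to establish directly.
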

The decomposition of FCS into ergodic components follows from items 2 and 3 of Proposition \ref{fcs-erg-decomp}, 
\begin{equation}\label{conv-erg-gen}
    \rho_{R}=\sum_{\alpha=1}^J \lambda_\alpha\rho^\alpha_{R},
\end{equation}
where $\rho^\alpha_{R}=\mathrm{tr}_M\Phi^{|R|}(\sigma_\alpha)$, $\sigma_\alpha=\tilde{\Pi}_\alpha\sigma\tilde{\Pi}_\alpha/\mathrm{tr}(\tilde{\Pi}_\alpha\sigma\tilde{\Pi}_\alpha)$, and $\lambda_\alpha=\mathrm{tr}(\tilde{\Pi}_\alpha\sigma\tilde{\Pi}_\alpha)$. We observe that each ergodic component is manifestly a FCS. One can think of ergodic components being, in the sense of items 2 and 3, independent of each other. 

The structure of ergodic components can be analyzed further. We summarize their properties in the proposition below, which is a restatement of Proposition 3.3 of Ref. \onlinecite{fcs} in the language of density operators
\begin{proposition}[Proposition 3.3, Ref. \onlinecite{fcs}]\label{ergpgfc} 
For an ergodic FCS, generated by $(\Phi,\sigma)$, with the induced quantum channel $\mathcal{E}$, we may assume that the following properties hold
    \begin{enumerate}
        \item The peripheral spectrum of $\mathcal{E}$ consists of $p\in\mathbb{N}$ non-degenerate eigenvalues $\{\exp(\frac{2\pi i}{p}k)\}\;|\;k=0,\cdot\cdot\cdot,{p-1}\}$. Here $p$ is referred to as the \emph{period} of the state.
        \item There exist $p$ orthogonal projectors $\Pi_k:\mathcal{H}_M\rightarrow\mathcal{H}_M$, $\sum_{k=0}^{p-1}\Pi_k=\mathbb{1}_M$, such that $\mathcal{E}(\Pi_k\mathcal{B}(\mathcal{H}_M)\Pi_k)\subseteq\Pi_{k+1}\mathcal{B}(\mathcal{H}_M)\Pi_{k+1}$ with the convention $\Pi_{p}:=\Pi_{0}$, which leads to $\mathcal{E}^p(\Pi_k\mathcal{B}(\mathcal{H}_M)\Pi_k)\subseteq\Pi_{k}\mathcal{B}(\mathcal{H}_M)\Pi_{k}$.
        \item The density operator $\sigma$ is block-diagonal with respect to the decomposition $\mathcal{H}_M=\bigoplus_{k=0}^{p-1}\Pi_k\mathcal{H}_M$, i.e.,
        $\sigma=\bigoplus_{k=0}^{p-1}\Pi_k\sigma\Pi_k$. Moreover $\mathcal{E}(\Pi_k\sigma\Pi_k)=\Pi_{k+1}\sigma\Pi_{k+1}$, which leads to $\mathrm{tr}(\Pi_k\sigma\Pi_k)={1}/{p}$ for any $k$.
        \item The restriction of $\mathcal{E}^p$ to $\Pi_k\mathcal{B}(\mathcal{H}_M)\Pi_k$ has a trivial peripheral spectrum with $\Pi_k\sigma\Pi_k$ the fixed point.
    \end{enumerate}
\end{proposition}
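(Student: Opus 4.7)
The plan is to reduce the proposition to the Perron--Frobenius theorem for irreducible positive maps on matrix algebras (Evans--H{\o}egh-Krohn and Schrader), applied to the induced channel $\mathcal{E}$ under the ergodicity hypothesis. Ergodicity, in the sense of Proposition \ref{fcs-erg-decomp}(1) restricted to a single block, means that $\mathcal{E}$ is irreducible: no nontrivial projector $\tilde{\Pi}$ satisfies $\tilde{\Pi}\mathcal{B}(\mathcal{H}_M)\tilde{\Pi}$ invariant under $\mathcal{E}$. Since $\mathcal{E}$ is a quantum channel it has spectral radius $1$, so the Perron--Frobenius machinery applies.

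For items 1 and 2, I would invoke the following form of the theorem for an irreducible CP map: the peripheral spectrum of $\mathcal{E}$ consists of simple eigenvalues that form a finite cyclic group, hence $\{e^{2\pi i k/p}\,:\,k=0,\ldots,p-1\}$ for some $p\in\mathbb{N}$, with the eigenvalue $1$ corresponding to a unique full-rank positive fixed point which, after suitable normalization, is the given $\sigma$. Moreover, each peripheral eigenvector can be written in the form $X_k=\sigma^{1/2}U^{k}\sigma^{1/2}$, where $U$ is a unitary on $\mathrm{supp}(\sigma)=\mathcal{H}_M$ whose spectrum is precisely the set of $p$-th roots of unity. Let $\Pi_k$ be the spectral projector of $U$ for the eigenvalue $e^{2\pi i k/p}$. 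Then $\sum_{k=0}^{p-1}\Pi_k=\mathbb{1}_M$ and $\Pi_k$ are mutually orthogonal. The eigenrelation $\mathcal{E}(X_1)=e^{2\pi i/p}X_1$ together with the specific form $X_1=\sigma^{1/2}U\sigma^{1/2}$ implies, after a short computation, the cyclic shift property $\mathcal{E}(\Pi_k\mathcal{B}(\mathcal{H}_M)\Pi_k)\subseteq\Pi_{k+1}\mathcal{B}(\mathcal{H}_M)\Pi_{k+1}$, establishing item 2.

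For item 3, I would use that $U$ commutes with $\sigma$ (both are constructed from the spectral data of the same Perron eigenvector), so $\sigma$ commutes with every $\Pi_k$ and hence $\sigma=\bigoplus_{k=0}^{p-1}\Pi_k\sigma\Pi_k$. Applying $\mathrm{tr}_s\Phi=\mathcal{E}$ to this decomposition and using the cyclic shift from item 2 gives $\mathcal{E}(\Pi_k\sigma\Pi_k)=\Pi_{k+1}\sigma\Pi_{k+1}$, since both sides are positive fixed points of $\mathcal{E}^p$ on the respective block with the same trace. Taking the trace and iterating yields $\mathrm{tr}(\Pi_k\sigma\Pi_k)=\mathrm{tr}(\Pi_0\sigma\Pi_0)$ for all $k$, and normalization $\mathrm{tr}(\sigma)=1$ forces each weight to be $1/p$. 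Item 4 then follows because the restriction of $\mathcal{E}^p$ to $\Pi_k\mathcal{B}(\mathcal{H}_M)\Pi_k$ is a CP map whose peripheral spectrum is the image of the peripheral spectrum of $\mathcal{E}$ under $z\mapsto z^p$, which collapses to the singleton $\{1\}$; non-degeneracy of this eigenvalue is inherited from the simplicity of each peripheral eigenvalue of $\mathcal{E}$, and $\Pi_k\sigma\Pi_k$ is the associated fixed point.

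The main obstacle is the careful construction of the unitary $U$ and the verification that its spectral projectors effect the cyclic shift under $\mathcal{E}$; this is the substantive content of Proposition~3.3 in Ref.~\onlinecite{fcs}, and it requires one to exploit both complete positivity and the irreducibility assumption to conclude that peripheral eigenvectors are of the claimed form. Once that structural lemma is in hand, items 2--4 are essentially bookkeeping: the projector decomposition is transported to $\sigma$ via commutation with $U$, and the statement about $\Phi$ and $\mathcal{E}^p$ on each block follows from tracking how $\mathcal{E}=\mathrm{tr}_s\Phi$ shifts the labels $k\mapsto k+1\pmod p$.
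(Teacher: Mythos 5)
Your proposal is correct and follows essentially the same route as the paper, which gives no independent proof but presents this proposition as a restatement of Proposition 3.3 of Ref.~\onlinecite{fcs}, i.e., the Perron--Frobenius structure theory for irreducible positive maps that you invoke (the paper's Appendix \ref{app-erg-e-tilde} likewise uses the unitary peripheral eigenvectors $u_r=\sum_k e^{2\pi i rk/p}\Pi_k$ of the adjoint map, the Heisenberg-picture counterpart of your $\sigma^{1/2}U^k\sigma^{1/2}$). Like the paper, you defer the substantive structural lemma (the construction of $U$ and the cyclic-shift property of its spectral projectors) to that reference, and the remaining bookkeeping for items 2--4 is handled in the standard way.
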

We observe similarities between the statements of Proposition \ref{fcs-erg-decomp} and Proposition \ref{ergpgfc}. In particular, item 1 of Proposition \ref{fcs-erg-decomp} implies that the algebra of operators on the memory space, $\mathcal{B}(\mathcal{H}_M)$, contains $J$ orthogonal subalgebras invariant under $\mathcal{E}$. By item 2 of Proposition \ref{ergpgfc}, each of these subalgebras is supported on the memory space $\tilde{\Pi}_\alpha\mathcal{H}_M$ of a corresponding ergodic component, and further contains $p$ orthogonal subalgebras that are cyclically permuted by $\mathcal{E}$.

\subsubsection{Purely Generated FCS}
A subcollection of FCS, called \emph{purely generated FCS}, is characterized by a pure quantum channel $\Phi,$ of the form $\Phi(X)=VXV^\dagger$, where $V:\mathcal{H}_M\rightarrow\mathcal{H}_s\otimes\mathcal{H}_M$ is an isometry.\cite{fcs, pgfcs} We will say that pgFCS is induced, or generated, by the pair $(V,\sigma)$. Equivalently these states are characterized by vanishing mean entropy, i.e., for a continuous region $R$ of $|R|$ spins, $\lim_{|R|\rightarrow +\infty}S(\rho_R)/|R|=0$. \cite{pgfcs} 

In the language of MPS, a pgFCS generated by $(V,\sigma)$ corresponds to the reduced density operator of the properly defined translationally invariant limit of the MPS $|\Psi\rangle$ as $n\rightarrow +\infty$\cite{mps-repr} (see Appendix \ref{app-dictionary} for the diagrammatic illustration),
\begin{align*}
    |\Psi\rangle=\sum_{s_{-n},\dots,s_{n}=1}^{d_s} & \left(L\right| M^{s_{-n}}M^{s_{-n+1}}\cdot\cdot\cdot M^{s_{n}}\left|R\right) \\ \nonumber
    &\times|s_{-n}\rangle\otimes|s_{-n+1}\rangle\otimes\cdot\cdot\cdot\otimes|s_{n}\rangle,
\end{align*}
where $M^{s}$ is a $d_M\times d_M$ complex matrix, defined by its elements, 
\begin{equation*}
    M^{s}_{ij}:=(\langle s|\otimes \langle i|) V |j\rangle,
\end{equation*}
and $|L),|R)\in\mathbb{C}^{d_M}$ are boundary factors. In the limit of the infinite chain $\mathrm{tr}_M$ and $\sigma$ play a similar role as the left and right boundary factors $|L)$ and $|R)$, respectively.

For the case of pgFCS the induced channel $\mathcal{E}$ defined in (\ref{E-gen-fcs}) becomes
\begin{equation}\label{E-pgfcs}
    \mathcal{E}(X)=\mbox{tr}_s\left(VXV^\dagger\right).
\end{equation} 

To formulate the study of recoverability and QCMI we subdivide a finite region $R$ into continuous adjacent subregions $A$, $B$, and $C$, as discussed in the introduction. A pgFCS defined on this region may be expressed as
\begin{equation}\label{pgfcs-form}
\rho_{ABC} = \mbox{tr}_M \left(V_C V_B V_A \sigma V_A^{\dagger} V_B^{\dagger} V_C^{\dagger} \right).
\end{equation}
The isometry $V_A:\mathcal{H}_M\rightarrow\mathcal{H}_A\otimes\mathcal{H}_M$, generating all spins in the region $A$, is the $|A|$-fold product of $V$, $V_A:=VV\cdot\cdot\cdot V$, in the sense of (\ref{prod-isom-def}). The isometries $V_B$ and $V_C$ generate all spins in the regions $B$ and $C$, respectively, and are defined in an analogous manner. The products of these isometries are also defined by (\ref{prod-isom-def}). 

Note that while ergodic FCS and pgFCS form distinct subsets of FCS, their intersection, the ergodic pgFCS, is non-empty and plays an important role in this study. A pgFCS can be decomposed into a convex sum (\ref{conv-erg-gen}) of ergodic pgFCS $\rho^\alpha_{ABC}$, as shown in Lemma \ref{pgfcs-erg-pgfcs} below,\cite{fcs, pgfcs}
\begin{equation}\label{conv-erg}
    \rho_{ABC}=\sum_{\alpha=1}^J \lambda_\alpha\rho^\alpha_{ABC}.
\end{equation}
\begin{lemma}\label{pgfcs-erg-pgfcs}
A FCS is a pgFCS if and only if all its ergodic components are pgFCS.
\end{lemma}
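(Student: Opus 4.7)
The plan is to prove each implication by leveraging the algebra-level ergodic decomposition of a general FCS established in Proposition \ref{fcs-erg-decomp}: there exist orthogonal projectors $\tilde{\Pi}_\alpha$ on $\mathcal{H}_M$ such that $\Phi(\tilde{\Pi}_\alpha\mathcal{B}(\mathcal{H}_M)\tilde{\Pi}_\alpha)\subseteq\mathcal{B}(\mathcal{H}_s)\otimes\tilde{\Pi}_\alpha\mathcal{B}(\mathcal{H}_M)\tilde{\Pi}_\alpha$, the state $\sigma$ is block-diagonal in this decomposition, and the ergodic components are $\rho_R^\alpha=\mathrm{tr}_M\Phi^{|R|}(\sigma_\alpha)$ with $\sigma_\alpha=\tilde{\Pi}_\alpha\sigma\tilde{\Pi}_\alpha/\lambda_\alpha$.

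For the forward direction, assume $\rho_R$ is pgFCS with $\Phi(X)=VXV^\dagger$ for an isometry $V$. The key step is to upgrade the above algebra invariance to the subspace invariance $V\tilde{\Pi}_\alpha\mathcal{H}_M\subseteq\mathcal{H}_s\otimes\tilde{\Pi}_\alpha\mathcal{H}_M$. Applying the inclusion to the rank-one operator $|\psi\rangle\langle\psi|$ with $|\psi\rangle\in\tilde{\Pi}_\alpha\mathcal{H}_M$ yields $V|\psi\rangle\langle\psi|V^\dagger=(\mathbb{1}_s\otimes\tilde{\Pi}_\alpha)V|\psi\rangle\langle\psi|V^\dagger(\mathbb{1}_s\otimes\tilde{\Pi}_\alpha)$; comparing the ranges of the two rank-one sides forces $V|\psi\rangle\in\mathcal{H}_s\otimes\tilde{\Pi}_\alpha\mathcal{H}_M$. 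The restriction $V_\alpha:=V|_{\tilde{\Pi}_\alpha\mathcal{H}_M}$ is then an isometry into $\mathcal{H}_s\otimes\tilde{\Pi}_\alpha\mathcal{H}_M$, and iterating gives $V^{|R|}\sigma_\alpha(V^\dagger)^{|R|}=V_\alpha^{|R|}\sigma_\alpha(V_\alpha^\dagger)^{|R|}$, so $(V_\alpha,\sigma_\alpha)$ is a pure generator of $\rho_R^\alpha$, making each ergodic component a pgFCS.

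For the converse, take pure generators $(V^\alpha,\sigma^\alpha)$ of each ergodic component, with isometries $V^\alpha:\mathcal{H}_M^\alpha\to\mathcal{H}_s\otimes\mathcal{H}_M^\alpha$, and assemble them into a single pure generator on the direct-sum memory space $\mathcal{H}'_M:=\bigoplus_\alpha\mathcal{H}_M^\alpha$. Setting $V':=\bigoplus_\alpha V^\alpha$, viewed as an isometry $\mathcal{H}'_M\to\mathcal{H}_s\otimes\mathcal{H}'_M$ via the natural identification $\bigoplus_\alpha(\mathcal{H}_s\otimes\mathcal{H}_M^\alpha)\cong\mathcal{H}_s\otimes\mathcal{H}'_M$, and $\sigma':=\bigoplus_\alpha\lambda_\alpha\sigma^\alpha$, block-diagonality immediately yields $\mathcal{E}'(\sigma')=\sigma'$ and $\mathrm{tr}_M((V')^{|R|}\sigma'((V')^\dagger)^{|R|})=\sum_\alpha\lambda_\alpha\rho^\alpha_R=\rho_R$, exhibiting $\rho_R$ as a pgFCS.

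The principal obstacle is the forward direction, where purity of the global channel must be transferred to each invariant block. This is precisely where the purity hypothesis is essential: the algebra-level invariance only constrains the support of $\Phi(X)$ for $X$ in the subalgebra, but when $X$ is rank-one the image is rank-one as well, pinning its range inside $\mathcal{H}_s\otimes\tilde{\Pi}_\alpha\mathcal{H}_M$ rather than merely inside the tensor-product algebra. The converse is then essentially a formal bookkeeping argument, modulo the minor point that the assembled generator lives on a memory space that may differ from any originally given one, which is harmless since the representation $(\Phi,\sigma)$ of a FCS is not unique.
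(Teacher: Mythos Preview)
Your proof is correct but takes a genuinely different route from the paper. The paper's argument is entropic: it invokes the characterization (from Ref.~\onlinecite{pgfcs}) that a FCS is purely generated if and only if its mean entropy $\lim_{|R|\to\infty}S(\rho_R)/|R|$ vanishes, and then uses the convexity bounds $\sum_\alpha\lambda_\alpha S(\rho^\alpha_R)\leq S(\rho_R)\leq\sum_\alpha\lambda_\alpha S(\rho^\alpha_R)-\sum_\alpha\lambda_\alpha\ln\lambda_\alpha$ to conclude that the mean entropy of $\rho_R$ vanishes if and only if that of every component does. Your argument is instead structural: for the forward direction you extract the block isometries $V_\alpha$ directly from the invariance in Proposition~\ref{fcs-erg-decomp} by feeding in rank-one inputs, and for the converse you assemble a global pure generator as a direct sum. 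Your approach is more constructive---it actually produces the pure generators of the ergodic components, which is exactly what items 1--4 of Proposition~\ref{gen-pgfcs} record---and it avoids any appeal to the mean-entropy characterization; the paper's approach is shorter and handles both implications symmetrically with a single entropy estimate.
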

\begin{proof}
As was shown in Ref. \onlinecite{pgfcs}, a state is a pgFCS if and only if it has vanishing mean entropy. Note the well-known bounds on the convexity of the quantum entropy, $$\sum_{\alpha=1}^J\lambda_\alpha S(\rho^\alpha_{ABC})\leq S(\rho_{ABC})\leq\sum_{\alpha=1}^J\lambda_\alpha S(\rho^\alpha_{ABC})-\sum_{\alpha=1}^J\lambda_\alpha\ln\lambda_\alpha.$$ Since $$\lim_{|ABC|\rightarrow +\infty}\sum_{\alpha=1}^J\lambda_\alpha\ln\lambda_\alpha/|ABC|=0,$$ then $$\lim_{|ABC|\rightarrow +\infty}S(\rho_{ABC})/|ABC|=\sum_{\alpha=1}^J\lambda_\alpha \lim_{|ABC|\rightarrow +\infty}S(\rho^\alpha_{ABC})/|ABC|.$$ Hence $\lim_{|ABC|\rightarrow +\infty}S(\rho_{ABC})/|ABC|=0$ if and only if $\lim_{|ABC|\rightarrow +\infty}S(\rho^\alpha_{ABC})/|ABC|=0$ for each $\alpha$, which implies that every ergodic component is a pgFCS.
\end{proof}

Since each ergodic component is a pgFCS, we may express $\rho_{ABC}^\alpha$ in the form
\begin{equation}\label{ergpgfcs-alpha}
    \rho^\alpha_{ABC} =\mbox{tr}_{M_\alpha}\left(V_C^\alpha V_B^\alpha V_A^\alpha \sigma_\alpha V_A^{\alpha\dagger} V_B^{\alpha\dagger} V_C^{\alpha\dagger}\right),
\end{equation} 
where all the objects are defined as in (\ref{pgfcs-form}), and each $\sigma_\alpha$ is defined on a distinct memory space $\mathcal{H}_{M_\alpha}$, $\alpha=1,\cdot\cdot\cdot,J$. Naturally, we denote the elementary isometry inducing the ergodic component by $V_\alpha$, i.e., $V_{A}^{\alpha}$ is the $|A|$-fold product $V_{A}^{\alpha} := V_\alpha V_\alpha \cdot \cdot \cdot V_\alpha$ in the sense of (\ref{prod-isom-def}). Since $\rho^\alpha_{ABC}$ is ergodic, the corresponding quantum channel $\mathcal{E}_\alpha:\mathcal{B}(\mathcal{H}_{M_\alpha})\rightarrow\mathcal{B}(\mathcal{H}_{M_\alpha})$, defined as in (\ref{E-pgfcs}), $\mathcal{E}_{\alpha}(X)=\mbox{tr}_{s}(V_\alpha X V_\alpha^\dagger)$, and $\sigma_{\alpha}$ have the properties listed in Proposition \ref{ergpgfc}. Now we rewrite Proposition \ref{fcs-erg-decomp} for the case of pgFCS, making a connection  between $(V,\sigma)$ generating the pgFCS $\rho_{ABC}$ and the collection of $(V_\alpha,\sigma_\alpha)$, $\alpha=1,\cdot\cdot\cdot,J$, generating its ergodic pgFCS components $\rho^\alpha_{ABC}$. We add an additional property, item 5, whose interpretation is that ergodic components are not proportional to each other; in Appendix \ref{app-d} we show that we may assume, without loss of generality, that this property holds.  
\begin{proposition}[Ref.\,\onlinecite{fcs}]\label{gen-pgfcs}
    Without loss of generality, we may assume that a pgFCS is generated by a pair $(V,\sigma)$, with the induced quantum channel $\mathcal{E}$, and the following properties hold
    \begin{enumerate}
        \item $\mathcal{H}_M=\bigoplus_{\alpha=1}^J\mathcal{H}_{M_\alpha}$, where $J$ is the number of ergodic components in $\rho_{ABC}$, and $\mathcal{H}_{M_\alpha}:=\tilde{\Pi}_\alpha\mathcal{H}_M$. 
        \item $V=\sum_{\alpha=1}^{J}V_\alpha\tilde{\Pi}_\alpha$, where $\tilde{\Pi}_\alpha$, ${\alpha=1,\cdot\cdot\cdot,J}$ is the orthogonal projector onto $\mathcal{H}_{M_\alpha}$, and $V_\alpha:\mathcal{H}_{M_\alpha}\rightarrow\mathcal{H}_s\otimes\mathcal{H}_{M_\alpha}$ is an isometry.
        \item $\sigma=\bigoplus_{\alpha=1}^{J}\lambda_\alpha\sigma_\alpha$, where $\sigma_\alpha={\tilde{\Pi}_\alpha\sigma\tilde{\Pi}_{\alpha}}/{\mathrm{tr}(\tilde{\Pi}_\alpha\sigma\tilde{\Pi}_{\alpha})}$, $\lambda_\alpha=\mathrm{tr}(\tilde{\Pi}_\alpha\sigma\tilde{\Pi}_{\alpha})$.
        \item $(V_{\alpha},\sigma_{\alpha})$ generates an ergodic pgFCS in the form (\ref{ergpgfcs-alpha}), with the induced quantum channel $\mathcal{E}_{\alpha}(X)\equiv\mathcal{E}(\tilde{\Pi}_\alpha X\tilde{\Pi}_\alpha)=\mbox{tr}_s(V_\alpha X V^\dagger_\alpha)$, for which Proposition \ref{ergpgfc} applies.
        \item For $\alpha\neq\beta$ there is no unitary $U:\mathcal{H}_M\rightarrow\mathcal{H}_M$ and $\phi\in\mathbb{R}$, such that $V_\alpha=e^{i\phi}UV_\beta U^\dagger$.
    \end{enumerate}
\end{proposition}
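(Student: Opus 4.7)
The plan is to bootstrap from the general FCS ergodic decomposition (Proposition \ref{fcs-erg-decomp}) to the pure case, using the fact that for a pgFCS the generating channel $\Phi$ acts by conjugation with a single isometry $V$, which upgrades the block structure on $\mathcal{E}$ to a block structure on $V$ itself.

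First I would apply Proposition \ref{fcs-erg-decomp} to the pgFCS $\rho_{ABC}$ viewed as a FCS. This yields orthogonal projectors $\{\tilde{\Pi}_\alpha\}_{\alpha=1}^{J}$ with $\sum_\alpha \tilde{\Pi}_\alpha = \mathbb{1}_M$, defining $\mathcal{H}_{M_\alpha}:=\tilde{\Pi}_\alpha \mathcal{H}_M$, and ensures $\sigma = \bigoplus_\alpha \tilde{\Pi}_\alpha \sigma \tilde{\Pi}_\alpha$. Setting $\lambda_\alpha := \mathrm{tr}(\tilde{\Pi}_\alpha \sigma \tilde{\Pi}_\alpha)$ and $\sigma_\alpha := \tilde{\Pi}_\alpha \sigma \tilde{\Pi}_\alpha / \lambda_\alpha$ immediately establishes items 1 and 3.

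The main content is item 2. Item 3 of Proposition \ref{fcs-erg-decomp} gives $\Phi(\tilde{\Pi}_\alpha \mathcal{B}(\mathcal{H}_M) \tilde{\Pi}_\alpha) \subseteq \mathcal{B}(\mathcal{H}_s) \otimes \tilde{\Pi}_\alpha \mathcal{B}(\mathcal{H}_M) \tilde{\Pi}_\alpha$, and for a pgFCS $\Phi(X) = VXV^\dagger$. Applying this to the rank-one operator $X = |\psi\rangle\langle\psi|$ for an arbitrary unit vector $|\psi\rangle\in\mathcal{H}_{M_\alpha}$, the resulting $V|\psi\rangle\langle\psi|V^\dagger$ must lie in $\mathcal{B}(\mathcal{H}_s)\otimes \tilde{\Pi}_\alpha \mathcal{B}(\mathcal{H}_M)\tilde{\Pi}_\alpha$. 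Expanding $V|\psi\rangle = \sum_\beta (\mathbb{1}_s\otimes \tilde{\Pi}_\beta)V|\psi\rangle$ and requiring the $(\beta,\beta)$-diagonal block with $\beta\neq\alpha$ of the rank-one projector to vanish forces $(\mathbb{1}_s\otimes \tilde{\Pi}_\beta)V|\psi\rangle = 0$ for every $\beta\neq\alpha$. Hence $V\tilde{\Pi}_\alpha = (\mathbb{1}_s\otimes \tilde{\Pi}_\alpha) V \tilde{\Pi}_\alpha$, and I define $V_\alpha$ as the restriction of this map to $\mathcal{H}_{M_\alpha}$, viewed as an operator into $\mathcal{H}_s\otimes \mathcal{H}_{M_\alpha}$. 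From $V^\dagger V = \mathbb{1}_M$ and the block structure, $V_\alpha^\dagger V_\alpha = \tilde{\Pi}_\alpha V^\dagger V \tilde{\Pi}_\alpha = \tilde{\Pi}_\alpha$, so each $V_\alpha$ is an isometry on $\mathcal{H}_{M_\alpha}$ and $V = \sum_\alpha V_\alpha \tilde{\Pi}_\alpha$.

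Item 4 then follows: by construction $\mathcal{E}(\tilde{\Pi}_\alpha X \tilde{\Pi}_\alpha) = \mathrm{tr}_s(V_\alpha X V_\alpha^\dagger) =: \mathcal{E}_\alpha(X)$, and Lemma \ref{pgfcs-erg-pgfcs} guarantees that the ergodic component $\rho^\alpha_{ABC}$ is itself a pgFCS, so Proposition \ref{ergpgfc} applies verbatim to $(V_\alpha, \sigma_\alpha, \mathcal{E}_\alpha)$. Item 5 is the only genuinely "without loss of generality" clause. If $V_\alpha = e^{i\phi} U V_\beta U^\dagger$ for some unitary $U$ and some $\alpha\neq\beta$, then the two ergodic pgFCS reduce to the same density operator after partial tracing the memory, and the two components may be merged into one block, strictly decreasing $J$. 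Iterating this merging until no such equivalences remain gives the desired representation; I would defer the explicit merging bookkeeping to Appendix \ref{app-d}, as indicated in the statement.

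The principal obstacle is the block-preservation argument in paragraph two: it is crucial that $\Phi$ is implemented by a \emph{single} Kraus operator, so that off-diagonal blocks of $V|\psi\rangle$ cannot cancel in $V|\psi\rangle\langle\psi|V^\dagger$ as they might for a general channel with multiple Kraus operators. Handling item 5 cleanly is more of a bookkeeping task, but one should verify that merging two unitarily-equivalent ergodic components does not spoil the block-diagonal form of $\sigma$ or the isometric form of the combined generator.
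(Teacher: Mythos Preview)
Your proposal is correct and follows the same route as the paper. The paper itself treats items 1--4 as an immediate specialization of Proposition~\ref{fcs-erg-decomp} to the pure case and does not spell out the passage from the block structure of $\Phi$ to the block structure of $V$; your rank-one argument for item~2 is exactly the missing link and is valid precisely because, as you note, a pure channel has a single Kraus operator. For item~5, the paper's Appendix~\ref{app-d} does what you anticipate: it shows that if $V_\beta=e^{-i\phi}(\mathbb{1}_s\otimes W^\dagger)V_\alpha W$ then $W\sigma_\beta W^\dagger$ is a fixed point of $\mathcal{E}_\alpha$, hence equals $\sigma_\alpha$ by ergodicity, whence $\rho^\alpha_{ABC}=\rho^\beta_{ABC}$ and the two terms combine as $(\lambda_\alpha+\lambda_\beta)\rho^\alpha_{ABC}$---so the ``merging'' is simply dropping the $\beta$-block from $\mathcal{H}_M$, which automatically preserves the block-diagonal form of $\sigma$ and the isometric form of the generator, answering the concern you flag at the end.
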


\noindent\textbf{Remarks}:\ 
\begin{enumerate}
\item Proposition \ref{gen-pgfcs} contains all the pgFCS properties that we use in this paper. 
\item The decomposition (\ref{conv-erg}) into ergodic components is not necessarily the finest one that can be performed on a pgFCS -- each ergodic component may be further decomposed into a sum of ergodic pgFCS components of period $1$. \cite{fcs} The resulting components may, however, involve adjustment of the memory Hilbert space, and blocking of spins -- the procedure under which several consecutive spins are treated as one. We prefer to deal directly with the decomposition of a pgFCS into ergodic pgFCS components of arbitrary period. In Section \ref{sec-proof} it will be convenient to consider an ergodic pgFCS of period $1$ first, in order to establish the arguments in our proof.
\end{enumerate}

\subsection{Quantum Markov chains}
Quantum Markov chains are defined as those states for which QCMI vanishes, $I(A:C|B):=S(\rho_{AB})+S(\rho_{BC})-S(\rho_{ABC})-S(\rho_{B})=0$, \cite{petz-recovery} and are fully characterized by the following theorem,
\begin{theorem}[Ref.\,\onlinecite{petz-recovery}]\label{qmc-structure}
Let $\rho_{ABC}\in\mathcal{D}(\mathcal{K}_{A}\otimes\mathcal{K}_{B}\otimes\mathcal{K}_{C})$. The following three statements are equivalent:
\begin{enumerate}
    \item $\rho_{ABC}$ is a quantum Markov chain, \emph{i.e.}, $I(A:C|B)=0$.
    \item There exists a quantum channel $\mathcal{R}_{B\rightarrow BC}:\mathcal{B}(\mathcal{K}_B)\rightarrow\mathcal{B}(\mathcal{K}_B\otimes\mathcal{K}_C)$, such that $\rho_{ABC}=\mathcal{R}_{B\rightarrow BC}(\rho_{AB})$. On $\mathcal{B}(\mathrm{supp}(\rho_B))$ this channel can be taken to be the \emph{Petz recovery map} $\mathcal{P}_{B\rightarrow BC}(X):=\rho_{BC}^{\frac{1}{2}}\rho_B^{-\frac{1}{2}} X \rho_B^{-\frac{1}{2}}\rho_{BC}^{\frac{1}{2}}, $ although this choice is not unique.
    \item There is a decomposition $\mathrm{supp}(\rho_B)\cong\bigoplus_{k=1}^{k_{\mathrm{max}}}\mathcal{K}_{b^l_k}\otimes\mathcal{K}_{b^r_k}$, \emph{i.e.}, there exists a unitary isomorphism $I:\mathrm{supp}(\rho_B)\rightarrow\bigoplus_{k=1}^{k_{\mathrm{max}}}\mathcal{K}_{b^l_k}\otimes\mathcal{K}_{b^r_k}$, such that $I\rho_{ABC}I^\dagger=\bigoplus_{k=1}^{k_{\mathrm{max}}}\lambda_k\rho_{Ab^l_k}\otimes\rho_{b^r_kC}$, where $\rho_{Ab^l_k}\in\mathcal{D}(\mathcal{K}_A\otimes\mathcal{K}_{b^l_k})$, $\rho_{b^r_kC}\in\mathcal{D}(\mathcal{K}_{b^r_k}\otimes\mathcal{K}_C)$, $\lambda_k>0$ and $\sum_{k=1}^{k_{\mathrm{max}}}\lambda_k=1$.
\end{enumerate}
\end{theorem}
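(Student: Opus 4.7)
The plan is to prove the three equivalences cyclically: $(3) \Rightarrow (2) \Rightarrow (1) \Rightarrow (3)$. The only genuinely difficult direction is the last one, and the first two serve mainly to pin down the roles of the Petz map and the structural decomposition.

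For $(3) \Rightarrow (2)$, I would work in the basis provided by the isomorphism $I$, so that $\rho_{ABC} = \bigoplus_k \lambda_k \rho_{Ab_k^l} \otimes \rho_{b_k^r C}$. Taking partial traces gives $\rho_{AB} = \bigoplus_k \lambda_k \rho_{Ab_k^l} \otimes \rho_{b_k^r}$, $\rho_{BC} = \bigoplus_k \lambda_k \rho_{b_k^l} \otimes \rho_{b_k^r C}$, and $\rho_B = \bigoplus_k \lambda_k \rho_{b_k^l} \otimes \rho_{b_k^r}$, where $\rho_{b_k^l}$ and $\rho_{b_k^r}$ are the obvious marginals. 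All the operators involved in $\mathcal{P}_{B \to BC}$ are block-diagonal in the same decomposition, and the factors $\rho_{b_k^l}$ and $\rho_{b_k^r}$ commute on each block. A direct block-by-block computation of $\mathcal{P}_{B \to BC}(\rho_{AB})$ then yields $\rho_{ABC}$, so the Petz map furnishes the required channel. For $(2) \Rightarrow (1)$, assume $\rho_{ABC} = \mathcal{R}_{B \to BC}(\rho_{AB})$. The data-processing inequality for conditional entropy applied to $\mathrm{id}_A \otimes \mathcal{R}_{B \to BC}$ gives $S(A \mid BC)_\rho \geq S(A \mid B)_{\rho_{AB}}$, while strong subadditivity always gives the reverse inequality. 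Hence $I(A:C|B) = S(A|B) - S(A|BC) = 0$.

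The main obstacle is $(1) \Rightarrow (3)$. My approach is to use the well-known consequence of equality in strong subadditivity (due to Petz and refined by Ruskai) that $I(A:C|B) = 0$ is equivalent to the operator identity
\begin{equation*}
\log \rho_{ABC} - \log \rho_{BC} = \log \rho_{AB} - \log \rho_B
\end{equation*}
on the support of $\rho_{ABC}$, or equivalently
\begin{equation*}
\rho_{ABC}^{it}\, \rho_{BC}^{-it} = \rho_{AB}^{it}\, \rho_B^{-it} \quad \text{for all } t \in \mathbb{R}.
\end{equation*}
The left-hand side is supported on $A B$ (acts as the identity on $C$), the right-hand side is supported on $B C$ (acts as the identity on $A$), so both sides define a one-parameter unitary group $u_t$ acting only on $\mathrm{supp}(\rho_B) \subseteq \mathcal{H}_B$. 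I would then look at the commutant $\mathcal{N}$ in $\mathcal{B}(\mathrm{supp}(\rho_B))$ generated by $\{u_t\}_{t\in\mathbb{R}}$ together with the algebra $\{\rho_B^{is}\}$, and extract from it a finite-dimensional $*$-algebra whose centre produces the classical index $k$. By the standard structure theorem for finite-dimensional $C^*$-algebras, $\mathcal{N}$ has the form $\bigoplus_k \mathcal{B}(\mathcal{K}_{b_k^l}) \otimes \mathbb{1}_{b_k^r}$, which is exactly the unitary isomorphism $I: \mathrm{supp}(\rho_B) \cong \bigoplus_k \mathcal{K}_{b_k^l} \otimes \mathcal{K}_{b_k^r}$ needed in $(3)$.

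It then remains to verify that $\rho_{ABC}$ itself is block-diagonal in this decomposition and factorizes as $\bigoplus_k \lambda_k \rho_{Ab_k^l} \otimes \rho_{b_k^r C}$. Here I would use the commutation relations derived above: the identity $\rho_{ABC}^{it} \rho_{BC}^{-it} \in \mathcal{B}(\mathcal{H}_A \otimes \mathrm{supp}(\rho_B))$ forces $\rho_{ABC}$ to commute with all operators in $\mathbb{1}_A \otimes (\mathbb{1}_{b_k^l} \otimes \mathcal{B}(\mathcal{K}_{b_k^r})) \otimes \mathbb{1}_C$ that lie in the commutant picked out above, and symmetrically on the $C$ side. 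Standard algebraic arguments (e.g.\ Schur-type lemmas for the $C^*$-algebra generated by these commuting families) then yield the claimed tensor-product factorization on each block, with $\lambda_k = \mathrm{tr}(\Pi_k \rho_B \Pi_k)$ for $\Pi_k$ the central projector onto the $k$-th summand, completing the implication. The delicate part of the argument is therefore the extraction of the direct-sum/tensor-product decomposition from a pair of commuting modular-type unitary groups; once this is in hand, everything else is a check.
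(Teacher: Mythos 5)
First, a point of reference: the paper does not prove this theorem. It is quoted verbatim from Ref.~\onlinecite{petz-recovery} (Hayden--Jozsa--Petz--Winter) and used as a black box — the paper only ever verifies condition 3 for its constructed states $\tilde{\rho}_{ABC}$ and invokes the equivalence. So there is no in-paper proof to compare against; your proposal has to stand on its own as a reconstruction of the cited result. Your directions $(3)\Rightarrow(2)$ and $(2)\Rightarrow(1)$ are correct and standard: the block-by-block Petz computation works because $\rho_B$, $\rho_{BC}$ are simultaneously block-diagonal with commuting tensor factors, and the combination of data processing for $S(A|\cdot)$ with strong subadditivity does give $I(A:C|B)=0$.

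The gap is in $(1)\Rightarrow(3)$, and it is concrete. The modular identity $\rho_{ABC}^{it}\rho_{BC}^{-it}=\rho_{AB}^{it}\rho_{B}^{-it}$ is indeed an equivalent form of equality in SSA (Petz's sufficiency criterion applied to $\mathrm{tr}_C$ and the pair $\rho_{ABC}$, $\tfrac{1}{d_A}\mathbb{1}_A\otimes\rho_{BC}$), but your reading of it is wrong: the right-hand side $\rho_{AB}^{it}\rho_B^{-it}$ acts nontrivially on $\mathcal{K}_A\otimes\mathcal{K}_B$, not only on $\mathrm{supp}(\rho_B)$. The identity tells you that the a priori $ABC$-supported operator $\rho_{ABC}^{it}\rho_{BC}^{-it}$ actually lives in $\mathcal{B}(\mathcal{K}_A\otimes\mathcal{K}_B)\otimes\mathbb{1}_C$; the symmetric identity $\rho_{ABC}^{it}\rho_{AB}^{-it}=\mathbb{1}_A\otimes\rho_{BC}^{it}\rho_B^{-it}$ produces a \emph{different} family supported on $BC$. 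There is no single one-parameter group ``acting only on $\mathrm{supp}(\rho_B)$,'' so the algebra $\mathcal{N}$ you propose to generate from $\{u_t\}$ and $\{\rho_B^{is}\}$ is not defined as stated. What one actually gets is the factorization $\rho_{ABC}^{it}=\rho_{AB}^{it}\,\rho_B^{-it}\,\rho_{BC}^{it}$, and the substance of the Hayden--Jozsa--Petz--Winter proof is precisely to extract from this (or from the fixed-point algebra of the composition of the partial trace with the Petz map, via the Choi--Effros/multiplicative-domain machinery) a distinguished $*$-subalgebra $\mathcal{A}\subseteq\mathcal{B}(\mathrm{supp}(\rho_B))$ with $\mathcal{A}\cong\bigoplus_k\mathcal{B}(\mathcal{K}_{b^l_k})\otimes\mathbb{1}_{b^r_k}$ such that $\rho_{AB}$ commutes with $\mathbb{1}_A\otimes\mathcal{A}'$ and $\rho_{BC}$ with $\mathcal{A}\otimes\mathbb{1}_C$; only then do the Schur-type arguments you allude to give the blockwise tensor factorization. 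As written, the identification of the algebra and the verification that $\rho_{ABC}$ is block-diagonal and splits on each block — the entire difficulty of the theorem — are asserted rather than proved.
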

For a pgFCS $\rho_{ABC}$ we will construct an approximating state $\tilde{\rho}_{ABC}$, and show that it is a quantum Markov chain by proving that it satisfies the property $3$, whence properties $1$ and $2$ as well. 

The theorem has a useful corollary, whose proof is omitted, which we will employ in the next section,
\begin{corollary}\label{qmc-structure-cor}
    If $\mathcal{K}_B=\mathcal{K}_{B_1}\oplus\mathcal{K}_{B_2},$ and $\rho_{AB_1C}\in\mathcal{D}(\mathcal{K}_{A}\otimes\mathcal{K}_{B_1}\otimes\mathcal{K}_{C})$ and $\rho_{AB_2C}\in\mathcal{D}(\mathcal{K}_{A}\otimes\mathcal{K}_{B_2}\otimes\mathcal{K}_{C})$ are quantum Markov chains, then $\lambda\rho_{AB_1C}+(1-\lambda)\rho_{AB_2C}$, where $0\leq \lambda \leq 1$, is a quantum Markov chain on $\mathcal{D}(\mathcal{K}_A\otimes\mathcal{K}_B\otimes\mathcal{K}_C)$.
\end{corollary}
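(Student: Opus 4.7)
The plan is to reduce the corollary directly to property~$3$ of Theorem~\ref{qmc-structure}. Each hypothesis gives a structural decomposition of the support of the $B$-marginal, and since $\mathcal{K}_B=\mathcal{K}_{B_1}\oplus\mathcal{K}_{B_2}$ the convex combination is automatically block diagonal, so the two decompositions assemble into a single direct-sum decomposition of the right form. No approximation or analytic argument is needed; the whole proof is an exercise in bookkeeping.

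Concretely, first I would apply property~$3$ of Theorem~\ref{qmc-structure} to $\rho_{AB_1C}$, obtaining a unitary isomorphism
\begin{equation*}
    I_1:\mathrm{supp}(\rho_{B_1})\rightarrow\bigoplus_{k=1}^{K_1}\mathcal{K}_{b^{l,1}_k}\otimes\mathcal{K}_{b^{r,1}_k}
\end{equation*}
together with states $\rho^{(1)}_{Ab^{l,1}_k}$, $\rho^{(1)}_{b^{r,1}_kC}$ and weights $\mu^{(1)}_k>0$ summing to $1$ such that $(\mathbb{1}_A\otimes I_1\otimes \mathbb{1}_C)\rho_{AB_1C}(\mathbb{1}_A\otimes I_1\otimes \mathbb{1}_C)^\dagger=\bigoplus_k \mu^{(1)}_k\,\rho^{(1)}_{Ab^{l,1}_k}\otimes\rho^{(1)}_{b^{r,1}_kC}$, and similarly a decomposition of $\rho_{AB_2C}$ with data $I_2,K_2,\mu^{(2)}_k$.

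Second, because $\mathcal{K}_B=\mathcal{K}_{B_1}\oplus\mathcal{K}_{B_2}$, the state $\rho:=\lambda\rho_{AB_1C}+(1-\lambda)\rho_{AB_2C}$, viewed on $\mathcal{K}_A\otimes\mathcal{K}_B\otimes\mathcal{K}_C$, is block diagonal with respect to the $B_1/B_2$ splitting, so $\mathrm{supp}(\rho_B)=\mathrm{supp}(\rho_{B_1})\oplus\mathrm{supp}(\rho_{B_2})$. The isometry $I:=I_1\oplus I_2$ then provides the unitary isomorphism
\begin{equation*}
    \mathrm{supp}(\rho_B)\;\cong\;\bigoplus_{k=1}^{K_1}\mathcal{K}_{b^{l,1}_k}\otimes\mathcal{K}_{b^{r,1}_k}\;\oplus\;\bigoplus_{k=1}^{K_2}\mathcal{K}_{b^{l,2}_k}\otimes\mathcal{K}_{b^{r,2}_k},
\end{equation*}
which, after a trivial re-indexing of summands by a single index running over the disjoint union, has precisely the form demanded by property~$3$. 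Under this isomorphism $\rho$ becomes
\begin{equation*}
\bigoplus_{k=1}^{K_1}\lambda\mu^{(1)}_k\,\rho^{(1)}_{Ab^{l,1}_k}\otimes\rho^{(1)}_{b^{r,1}_kC}\;\oplus\;\bigoplus_{k=1}^{K_2}(1-\lambda)\mu^{(2)}_k\,\rho^{(2)}_{Ab^{l,2}_k}\otimes\rho^{(2)}_{b^{r,2}_kC},
\end{equation*}
and the combined weights $\{\lambda\mu^{(1)}_k\}\cup\{(1-\lambda)\mu^{(2)}_k\}$ are nonnegative and sum to $1$. Invoking the implication $3\Rightarrow 1$ of Theorem~\ref{qmc-structure} finishes the proof.

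The only subtlety, and the closest thing to an obstacle, is the bookkeeping at the edge cases $\lambda\in\{0,1\}$ (where one summand disappears and one has to discard the corresponding block so that all retained weights are strictly positive, as required by property~$3$) and at blocks where some $\mu^{(j)}_k$ is zero. Both issues are handled by restricting the direct sum to indices with positive weight, which affects neither the state nor its support.
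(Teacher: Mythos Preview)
Your proof is correct. The paper itself omits the proof of this corollary entirely (it writes ``whose proof is omitted''), so there is no argument to compare against; your reduction to property~$3$ of Theorem~\ref{qmc-structure} by taking the direct sum $I_1\oplus I_2$ of the two structural isomorphisms is the natural route and precisely what the authors would expect a reader to supply.
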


\section{Proof of theorem \ref{main-result}}\label{sec-proof}
In this section we prove the main result, Theorem \ref{main-result}. The proof relies on two steps: (i) approximating pgFCS $\rho_{ABC}$ by another state $\tilde{\rho}_{ABC}$ and (ii) the proof that $\tilde{\rho}_{ABC}$ is a quantum Markov chain. Conveniently, the approximation step is the same for any pgFCS, and we outline it immediately below. The proof that the constructed $\tilde{\rho}_{ABC}$ is a quantum Markov chain, proceeds in several steps, which we develop in subsequent subsections. We first consider the simplest case of an ergodic pgFCS of period $1$, before generalizing to arbitrary period $p.$ Finally, by taking convex sums, we deal with a general case of pgFCS. 

We now deal with the approximation step, assuming that $\tilde{\rho}_{ABC}$ is a quantum Markov chain. This is sufficient to prove the Theorem \ref{main-result}.

Any pgFCS has the form (\ref{pgfcs-form}), 
\begin{equation*}
\rho_{ABC} = \mbox{tr}_M \left(V_C V_B V_A \sigma V_A^{\dagger} V_B^{\dagger} V_C^{\dagger} \right).
\end{equation*}
We introduce an isometry $\tilde{V}_B:\mathcal{H}_A \otimes\mathcal{H}_M\rightarrow\mathcal{H}_A\otimes\mathcal{H}_B\otimes\mathcal{H}_M$ which acts trivially on the space $\mathcal{H}_A$, and use it to construct the state
\begin{equation}\label{pgfcs-tilde-sec}
\tilde{\rho}_{ABC} = \mbox{tr}_M \left(V_C \tilde{V}_B V_A \sigma V_A^{\dagger} \tilde{V}_B^{\dagger} V_C^{\dagger} \right).
\end{equation}
Note that the isometry $\tilde{V}_B$ generates all the spins in the region $B$, but unlike $V_B$, it is not a concatenation of isometric factors generating one spin at a time. It is a key step to choose $\tilde{V}_B$ to be of a particular form that ensures the state $\tilde{\rho}_{ABC}$ is simultaneously close (in trace norm distance) to $\rho_{ABC}$ and is an exact quantum Markov chain. 

The trace distance $\|\rho_{ABC}-\tilde{\rho}_{ABC}\|_1$ is bounded by twice the error in approximating $V_{B}$ by $\tilde{V}_{B}$ in operator norm,
\begin{align}\label{1-1-leq-opnorm}
    \|\rho_{ABC}-\tilde{\rho}_{ABC}\|_1 &= \|\mbox{tr}_M \left(V_C V_B V_A \sigma V_A^{\dagger} V_B^{\dagger} V_C^{\dagger} \right)\\ \nonumber
    &\quad-\mbox{tr}_M \left(V_C \tilde{V}_B V_A \sigma V_A^{\dagger} \tilde{V}_B^{\dagger} V_C^{\dagger} \right)\|_1 \\ \nonumber
    &\leq \|V_B V_A \sigma V_A^{\dagger} V_B^{\dagger}-\tilde{V}_B V_A \sigma V_A^{\dagger} \tilde{V}_B^{\dagger}\|_1 \\ \nonumber
    &\leq 2\|V_B-\tilde{V}_B\|,
\end{align}
where in the last line we have used the inequalities $\|XY\|_1\leq 
\|X\|\|Y\|_1$, $\|XZX^\dagger-YZY^\dagger\|_1\leq 2\|X-Y\|\|Z\|_1$, and the fact that $\|V_A\sigma V_A^\dagger\|_1=1$.  

For $\tilde{\rho}_{ABC}$ a quantum Markov chain, according to condition 2 of Theorem \ref{qmc-structure} there exists a recovery channel $\mathcal{R}_{B\rightarrow BC}$, such that $\mathcal{R}_{B\rightarrow BC}(\tilde{\rho}_{AB})=\tilde{\rho}_{ABC}$. We may select $\mathcal{R}_{B\rightarrow BC}$ to be the Petz recovery map, defined in statement 2 of Theorem \ref{qmc-structure}, or one of the alternatives that we discuss in Section \ref{sec-rec-maps}. If we use such an $\mathcal{R}_{B\rightarrow BC}$ to approximately recover $\rho_{ABC}$ from $\rho_{AB}$, then the recovery error is given by
\begin{align}\label{error-leq-1-1}
&\nonumber\|\rho_{ABC}-\mathcal{R}_{B\rightarrow BC}\left(\rho_{AB}\right)\|_1 =\|\rho_{ABC}-\mathcal{R}_{B\rightarrow BC}\circ \mathrm{tr}_C\left(\rho_{ABC}\right)\|_1\\ \nonumber
&= \|\rho_{ABC}-\tilde{\rho}_{ABC}+\mathcal{R}_{B\rightarrow BC}\circ \mathrm{tr}_C\left(\tilde{\rho}_{ABC}-\rho_{ABC}\right)\|_1  \\ \nonumber
&\leq \|\rho_{ABC}-\tilde{\rho}_{ABC}\|_1+\|\mathcal{R}_{B\rightarrow BC}\circ \mathrm{tr}_C\left(\rho_{ABC}-\tilde{\rho}_{ABC}\right)\|_1 \\ 
&\leq 2\|\rho_{ABC}-\tilde{\rho}_{ABC}\|_1 
\end{align}
where we have used the quantum Markov property, the triangle inequality, and the property that the trace norm is non-increasing under actions of quantum channels.

Combining (\ref{1-1-leq-opnorm}) and (\ref{error-leq-1-1}), we obtain a bound on the recovery error, 
\begin{equation}\label{error-leq-op} 
    \|\rho_{ABC}-\mathcal{R}_{B\rightarrow BC}\left(\rho_{AB}\right)\|_1\leq 4\|V_B-\tilde{V}_B\|.
\end{equation}

Our goal is to choose the isometry $\tilde{V}_B$ so that $\|V_B-\tilde{V}_B\|$ is exponentially small in the size of the region $B$. Consider the $|B|$-fold composition of $\mathcal{E}$, defined in (\ref{E-pgfcs}), $\mathcal{E}^{|B|}$. Since $\mathcal{E}$ is dilated by the elementary isometry $V$, then the $|B|$-fold composition of $V$, that is $V_B$, dilates $\mathcal{E}^{|B|}$.
From $\mathcal{E}$ we construct the quantum channel $\tilde{\mathcal{E}}$ via Definition \ref{e-tilde-e}, and then approximate $\mathcal{E}^{|B|}$ by $\tilde{\mathcal{E}}^{|B|}$, the $|B|$-fold composition of $\tilde{\mathcal{E}}$. As discussed in Section \ref{subsec-quant-chan}, the maximum dimension of the dilating space of $\tilde{\mathcal{E}}^{|B|}$ is equal to $\dim\mathcal{B}(\mathcal{H}_M)=d_M^2$. We will assume that $\dim\mathcal{H}_B\geq d_M^2$, so that the dilation space of $\tilde{\mathcal{E}}^{|B|}$ can be embedded into $\mathcal{H}_B$. Under this condition, we define $\tilde{V}_B$ to be an isometry dilating $\tilde{\mathcal{E}}^{|B|}$. Note that any isometry of the form $U_B\tilde{V}_B$, where $U_B:\mathcal{H}_B\rightarrow\mathcal{H}_B$ is unitary, will also be a dilating isometry for $\tilde{\mathcal{E}}^{|B|}$. We will use this freedom of choice to minimize $\|V_B-\tilde{V}_B\|$. 

By the continuity of Stinespring's dilation, \cite{infdisturb} we can relate the operator norm distance between $V_B$ and $\tilde{V}_B$ to the distance between channels $\mathcal{E}^{|B|}$ and $\mathcal{\tilde{E}}^{|B|}$ through the inequalities  
\begin{equation}\label{inf-dist}
    \inf_{U_{B}}\|V_{B}-U_{B}\tilde{V}_{B}\|^2\leq\|\mathcal{E}^{|B|}-\tilde{\mathcal{E}}^{|B|}\|_\diamond\leq 2\inf_{U_{B}}\|V_{B}-U_{B}\tilde{V}_{B}\|,
\end{equation}
where the infimum is taken over the set of all unitaries $U_B:\mathcal{H}_B\rightarrow\mathcal{H}_B.$ (Recall that the diamond norm is defined by $\|\mathcal{E}^{|B|}-\tilde{\mathcal{E}}^{|B|}\|_\diamond:=\sup_{n\in\mathbb{N}}\|\mbox{id}_n\otimes(\mathcal{E}^{|B|}-\tilde{\mathcal{E}}^{|B|})\|_{1\rightarrow 1}$.\cite{kitaev}) As the dimension of $\mathcal{H}_B$ is finite, the infimum is attained for some $U_B$, which we take to be the identity, $U_{B}=\mathbb{1}_{B}$, so that our choice of $\tilde{V}_B$ is optimal. Using the first inequality in (\ref{inf-dist}), we bound the distance between the isometries as 
\begin{equation}\label{opnorm-leq-diam}
    \|V_B-\tilde{V}_{B}\|\leq \sqrt{\|\mathcal{E}^{|B|}-\tilde{\mathcal{E}}^{|B|}\|_\diamond}.
\end{equation}

To bound the diamond norm $\|\mathcal{E}^{|B|}-\tilde{\mathcal{E}}^{|B|}\|_\diamond$ we take advantage of the $2\rightarrow 2$ norm, which is easier to estimate. (The $2\rightarrow 2$ norm is defined by $\|\mathcal{E}\|_{2\rightarrow 2}:=\sup_{\|X\|_2=1}\|\mathcal{E}(X)\|_2$, where the 2-norm, or Hilbert-Schmidt norm, is given by $\|X\|_2 :=\sqrt{\mbox{tr}(X^\dagger X)}$.) Since the Hilbert space $\mathcal{H}_M$ is finite dimensional, the supremum $\sup_{n\in\mathbb{N}}\|\mbox{id}_n\otimes(\mathcal{E}^{|B|}-\tilde{\mathcal{E}}^{|B|})\|_{1\rightarrow 1}$ is achieved when $n=d_M$,\cite{kitaev} hence $\|\mathcal{E}^{|B|}-\tilde{\mathcal{E}}^{|B|}\|_\diamond=\|\mbox{id}_{d_M}\otimes(\mathcal{E}^{|B|}-\tilde{\mathcal{E}}^{|B|})\|_{1\rightarrow 1}$. Then, using the relations $\|X\|_1\leq\mathrm{rank}(X)^{1/2}\|X\|_2$ and $\|\mbox{id}_n\otimes(\mathcal{E}^{|B|}-\tilde{\mathcal{E}}^{|B|})\|_{2\rightarrow 2}=\|\mathcal{E}^{|B|}-\tilde{\mathcal{E}}^{|B|}\|_{2\rightarrow 2}$ (tensoring with the identity map does not change the $2\rightarrow 2$ norm), and bounding the rank with the space dimension $\mathrm{rank}(X)\leq d_M^2$, we obtain the well-known estimate
\begin{equation}\label{diam-leq-2-2}
    \|\mathcal{E}^{|B|}-\tilde{\mathcal{E}}^{|B|}\|_\diamond\leq d_M\|\mathcal{E}^{|B|}-\tilde{\mathcal{E}}^{|B|}\|_{2\rightarrow 2}.
\end{equation}
From Lemma \ref{b-2} applied to the quantum channels $\mathcal{E}$ and $\tilde{\mathcal{E}}$ it follows that,
\begin{equation}\label{2-2-leq-exp}
    \|\mathcal{E}^{|B|}-\tilde{\mathcal{E}}^{|B|}\|_{2\rightarrow 2}\leq c\nu^{|B|},
\end{equation}
with $c>0$ and $\nu_{\mathrm{gap}}<\nu<1$, where $\nu_{\mathrm{gap}}:=\max_{|\nu_i|<1}\{|\nu_i|\}$, and $\nu_i$ are the eigenvalues of $\mathcal{E}$. Combining together (\ref{opnorm-leq-diam}), (\ref{diam-leq-2-2}), and (\ref{2-2-leq-exp}), we obtain
\begin{equation}\label{isonorm-leq-exp}
\|V_B-\tilde{V}_B\|\leq \sqrt{d_M c} \; \nu^{|B|/2},
\end{equation}
and, from (\ref{error-leq-op}), 
\begin{align}\label{error-bound}
    \|\rho_{ABC}-\tilde{\rho}_{ABC}\|_1&\leq 2\sqrt{d_Mc} \ \ \nu^{|B|/2}\\ \nonumber
    \|\rho_{ABC}-\mathcal{R}_{B\rightarrow BC}\left(\rho_{AB}\right)\|_1 &\leq 4\sqrt{d_Mc} \ \ \nu^{|B|/2}.
\end{align}
Observe that statements $1$ and $2$ of Theorem \ref{main-result} hold with the constants $K=4\sqrt{d_Mc}$ and $q=\ln\left(\nu^{-1}\right)/2$. The bound (\ref{error-bound}) is meaningful while $4\sqrt{d_Mc}\nu^{|B|/2}\leq 2$, since the trace distance between any two states cannot exceed $2$. The statement $3$ of the theorem follows from the relation (\ref{qcmi-1-1-norm}) in Appendix \ref{app-bound} with $\tilde{K}=\sqrt{d_Mc}\left(2|A|\ln d_s+2-2\ln(2\sqrt{d_Mc}\nu^{{|B|}/{2}})\right)$. Notice that $\ln(2\sqrt{d_Mc}\nu^{{|B|}/{2}})\leq 0$ when the bound (\ref{error-bound}) is useful. This completes the proof. 

It remains to prove that $\tilde{\rho}_{ABC}$ given in (\ref{pgfcs-tilde-sec}), and defined in terms of the isometry $\tilde{V}_B$, is a quantum Markov chain.

\subsection{Ergodic pgFCS of period 1}\label{subsec-pgfcs}

We first prove that $\tilde{\rho}_{ABC}$ is a quantum Markov chain for an ergodic pgFCS with $p=1.$ 
Since $\mathcal{E}$ has a trivial peripheral spectrum and $\sigma$ is its fixed point, then $\tilde{\mathcal{E}}$ constructed according to Definition \ref{e-tilde-e} sends any input into the state proportional to $\sigma$, $\tilde{\mathcal{E}}(X)=\mathrm{tr}(X)\sigma$. It follows that for any $n\in\mathbb{N}$, $\tilde{\mathcal{E}}^n=\tilde{\mathcal{E}}$, in particular $\tilde{\mathcal{E}}^{|B|}=\tilde{\mathcal{E}}$. We observe that the isometry $\tilde{V}'_{B}=\sum_{i,j=1}^{d_M}\sqrt{\sigma_i} |\xi_{ij}\rangle \otimes |i\rangle\langle j|$ is a dilation of $ \tilde{\mathcal{E}}^{|B|}=\tilde{\mathcal{E}}$, where we have introduced an arbitrary orthonormal set of vectors $\{| \xi_{ij}\rangle \in \mathcal{H}_B\;|\; i,j = 1 \cdot \cdot \cdot d_M \}$. The vectors $|i\rangle\in\mathcal{H}_M$ and $\sigma_i>0$ are the eigenvectors and corresponding eigenvalues of $\sigma$, respectively, i.e., $\sigma=\sum_{i=1}^{d_M}\sigma_i |i\rangle\langle i|$. Indeed,
\begin{align*}
    \mathrm{tr}_B(\tilde{V}'_B X \tilde{V}'^\dagger_B)&=\sum_{i,j,i',j'}\sqrt{\sigma_i\sigma_{i'}} \langle \xi_{i'j'}|\xi_{ij}\rangle \langle j|X|j'\rangle\, |i\rangle\langle i'|\\ \nonumber
    &=\sum_{j=1}^{d_M} \langle j|X|j\rangle\, \sum_{i=1}^{d_M}\sigma_i|i\rangle\langle i|\\ \nonumber
    &=\mbox{tr}(X)\sigma.
\end{align*}
Since $\tilde{V}'_{B}$ dilates $\tilde{\mathcal{E}}$, which is also dilated by $\tilde{V}_B$, there is unitary $U_B:\mathcal{H}_B\rightarrow\mathcal{H}_B$, such that $\tilde{V}_B=U_B\tilde{V}'_B$, and we observe that this amounts to a unitary change of the orthonormal basis $\{|\xi_{ij}\rangle\}$. Thus, considering the freedom of choice of the basis vectors $|\xi_{ij}\rangle$, we can assume without loss of generality that $\tilde{V}_B=\tilde{V}'_B$,
\begin{equation}\label{tilde-iso-form}
    \tilde{V}_{B}=\sum_{i,j=1}^{d_M}\sqrt{\sigma_i} |\xi_{ij}\rangle \otimes |i\rangle\langle j|.
\end{equation}

Substituting (\ref{tilde-iso-form}) into (\ref{pgfcs-tilde-sec}), $\tilde{\rho}_{ABC}$ may be written explicitly,
\begin{align}\label{tilde-rho-explic}
    \tilde{\rho}_{ABC}=\sum_{i,i',j,j'} \langle j|V_A \sigma V_A^\dagger|j'\rangle &\otimes\sqrt{\sigma_i}|\xi_{ij}\rangle\langle\xi_{i'j'}|\sqrt{\sigma_{i'}}\\ \nonumber &\otimes\mbox{tr}_M\left(V_C|i\rangle\langle i'|V_C^\dagger\right).
\end{align}
We will show by direct computation that $\tilde{\rho}_{ABC}$ satisfies the condition 3 of Theorem \ref{qmc-structure} and hence is a quantum Markov chain. We denote $\mathcal{H}_b:=\mathrm{span}\{|\xi_{ij}\rangle\;|\;i,j=1,\cdot\cdot\cdot,d_M \rangle\}$, and note that since their dimensions are equal, $\mathcal{H}_b$ and 
$\mathcal{H}_M\otimes\mathcal{H}_M$ are isomorphic. We construct the unitary isomorphism $I:\mathcal{H}_b\rightarrow\mathcal{H}_{M}\otimes\mathcal{H}_{M}$, defined by $I|\xi_{ij}\rangle=|j\rangle\otimes|i\rangle$. Under this isomorphism, a straightforward calculation gives, 
\begin{equation*}
    I\tilde{\rho}_{ABC}I^\dagger=V_A\sigma V^\dagger_{A}\otimes\mbox{tr}_M\left(V_C\sigma^{\frac{1}{2}}|+\rangle\langle +|\sigma^{\frac{1}{2}}V^\dagger_C\right),
\end{equation*}
where $|+\rangle=\sum_{i=1}^{d_M}|i\rangle\otimes|i\rangle\in\mathcal{H}_M\otimes\mathcal{H}_{M}$ is a non-normalized, maximally entangled state vector, and the partial trace $\mathrm{tr}_M$ is over the second factor. If we write $\mathcal{H}_b=\mathcal{H}_{b^l}\otimes\mathcal{H}_{b^r}$, where $\mathcal{H}_{b^l}$ and $\mathcal{H}_{b^r}$ are isomorphic to $\mathcal{H}_M$, we observe that $\tilde{\rho}_{ABC}$ satisfies condition 3 of Theorem \ref{qmc-structure} with
\begin{align*}
    &\mathrm{supp}(\rho_B)\cong\mathcal{H}_b=\mathcal{H}_{b^l}\otimes\mathcal{H}_{b^r},\\ \nonumber
    &I\tilde{\rho}_{ABC}I^\dagger=\tilde{\rho}_{Ab^l} \otimes \tilde{\rho}_{b^rC},\\ \nonumber
    &\tilde{\rho}_{Ab^l}:=V_A\sigma V^\dagger_A, \\ \nonumber
    &\tilde{\rho}_{b^rC}:=\mbox{tr}_M(V_C\sigma^{\frac{1}{2}}|+\rangle\langle +|\sigma^{\frac{1}{2}}V^\dagger_C),
\end{align*}
and is thus a quantum Markov chain.

\subsection{Ergodic pgFCS of period $p$}\label{subsec-erg}
For an ergodic pgFCS with $p \geq 1$, Proposition \ref{ergpgfc} states that the peripheral spectrum of $\mathcal{E}$ contains $p$ eigenvalues, and thus for $p>1$ the proof that $\tilde{\rho}_{ABC}$ is a quantum Markov chain must be modified.

The properties of $\mathcal{E}$ listed in Proposition \ref{ergpgfc} allow us to prove Lemma \ref{subsec-erg-lemma} below, stating that (assuming $\dim\mathcal{H}_B\geq d_M^2$) $\tilde{V}_B$ can be decomposed into $\tilde{V}_B=\sum_{k=0}^{p-1}\tilde{V}^k_{B}$, where each $\tilde{V}_B^{k}:\mathcal{H}_M\rightarrow\mathcal{H}_B\otimes\mathcal{H}_M$ is a partial isometry with $\mathrm{supp}(\tilde{V}_B^{k})=\Pi_k\mathcal{H}_M$, and $\tilde{V}_B^{k\dagger}\tilde{V}_B^{k}=\Pi_k$. The range of $\tilde{V}^k_B$ satisfies $$\mathrm{Range}(\tilde{V}_B^k)\subseteq\mathcal{H}_{B_k}\otimes\Pi_{k+|B|}\mathcal{H}_M\subseteq \mathcal{H}_B\otimes\mathcal{H}_M,$$ where the subspaces $\mathcal{H}_{B_k}\subseteq\mathcal{H}_{B}$ are mutually orthogonal. Below, each $\tilde{V}^k_{B}$ will be seen to have the form (\ref{tilde-isom-r-form}) similar to (\ref{tilde-iso-form}). The operator $\sigma$ is block diagonal with respect to the set of all projectors $\Pi_k,$ $k=0,\cdot \cdot \cdot, p-1.$

Anticipating the validity of Lemma \ref{subsec-erg-lemma}, below, we may express $\tilde{\rho}_{ABC}$ in the form 
\begin{widetext}
\begin{align}\label{ergpgfs-direct}
    \tilde{\rho}_{ABC} =\mbox{tr}_M \left(V_C \left(\sum_{k_1=0}^{p-1}\tilde{V}^{k_1}_{B}\right) V_A \sigma V_A^{\dagger} \left(\sum_{k_2=0}^{p-1}\tilde{V}^{k_2\dagger}_{B}\right)V_C^{\dagger} \right)&=\mbox{tr}_M \left(V_C \left(\sum_{k_1=0}^{p-1}\tilde{V}^{k_1}_{B}\right)\left(\sum_{k=0}^{p-1}\Pi_k V_A \sigma V_A^{\dagger} \Pi_k\right)\left(\sum_{k_2=0}^{p-1}\tilde{V}^{k_2\dagger}_{B}\right)V_C^{\dagger} \right)\\ \nonumber
    &=\bigoplus_{k=0}^{p-1} \mbox{tr}_M \left(V_C \tilde{V}^{k}_{B} V_A \sigma V_A^{\dagger}\tilde{V}_{B}^{k\dagger} V_C^{\dagger} \right),
\end{align}
\end{widetext}
where in the first line we substitute the decomposition of $\tilde{V}_B$ into partial isometries and make the replacement $V_A\sigma V^\dagger_A=\sum_{k=1}^p \Pi_kV_A\sigma V^\dagger_A\Pi_k$ (see Appendix \ref{app-c} for proof), and $\tilde{V}^{k'}_{B}\Pi_k=\delta_{kk'}\tilde{V}^{k}_{B}$ collapses the expression to a single sum in the second line. Since the ranges of $\tilde{V}^k_B$ are mutually orthogonal, the summands corresponding to different $k$ are supported on orthogonal subspaces, so that the sum is direct. 

Each term in the direct sum can be expressed as in (\ref{tilde-rho-explic}), and by the arguments of Section \ref{subsec-pgfcs} is shown to be (up to normalization) a quantum Markov chain. Hence by Corollary \ref{qmc-structure-cor}, $\tilde{\rho}_{ABC}$ is also a quantum Markov chain. Thus we can use the bound for the recovery error (\ref{error-bound}), and the statement of Theorem \ref{main-result} holds with the same constants $q$, $K$ and $\tilde{K}$ as in Section \ref{subsec-pgfcs}. 

We now prove that $\tilde{V}_B$ has the required decomposition.
\begin{lemma}\label{subsec-erg-lemma}
For an ergodic pgFCS of period $p$, the isometry $\tilde{V}_B:\mathcal{H}_M\rightarrow\mathcal{H}_B\otimes\mathcal{H}_M$ defined in (\ref{pgfcs-tilde-sec}) can be constructed as the sum $\tilde{V}_B=\sum_{k=0}^{p-1}\tilde{V}_B^k$, where 
\begin{enumerate}
    \item $\tilde{V}_{B}^k$ is a partial isometry, $\tilde{V}_B^{k\dagger}\tilde{V}_B^k=\Pi_k$, and $\tilde{V}^k_B\Pi_{k'}=\delta_{kk'}\tilde{V}_B^k$. 
    \item The range of $\tilde{V}_B^k$ satisfies $\mathrm{Range}(\tilde{V}_B^k)\subseteq\mathcal{H}_{B_k}\otimes\Pi_{k+|B|}\mathcal{H}_M\subseteq \mathcal{H}_B\otimes\mathcal{H}_M$. The subspaces $\mathcal{H}_{B_k}$ are mutually orthogonal.
    \item The density operator $p\,\mathrm{tr}_M(V_C\tilde{V}_B^kV_A\sigma V^\dagger_A\tilde{V}_B^{k\dagger}V_C^\dagger)$ is a quantum Markov chain. 
\end{enumerate}
\end{lemma}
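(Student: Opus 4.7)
The plan is to follow the strategy of Section \ref{subsec-pgfcs} blockwise with respect to the cyclic decomposition provided by Proposition \ref{ergpgfc}. The key preliminary step is to identify $\tilde{\mathcal{E}}^{|B|}$ explicitly, and then to exhibit a dilating isometry with the required direct-sum structure.

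First, I would Fourier-diagonalize $\mathcal{E}$ on its peripheral eigenspace. Since $\mathcal{E}(\Pi_j\sigma\Pi_j)=\Pi_{j+1}\sigma\Pi_{j+1}$ by Proposition \ref{ergpgfc}(3), the right eigenoperators associated with the peripheral eigenvalues $e^{2\pi i k/p}$ are $A_k=\sum_{j=0}^{p-1}e^{-2\pi ijk/p}\Pi_j\sigma\Pi_j$, and a dual calculation using $\mathcal{E}^*(\Pi_j)=\Pi_{j-1}$ identifies the dual operators $L_k^\dagger=\sum_je^{2\pi ijk/p}\Pi_j$ that satisfy the normalization $\mathrm{tr}(L_k^\dagger A_{k'})=\delta_{kk'}$. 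Substituting these into the spectral representation $\tilde{\mathcal{E}}^{|B|}(X)=\sum_k e^{2\pi ik|B|/p}A_k\,\mathrm{tr}(L_k^\dagger X)$ and collapsing the geometric sum with the identity $\sum_k e^{2\pi ikm/p}=p\,\delta_{m\equiv 0\,(p)}$ gives
\begin{equation*}
    \tilde{\mathcal{E}}^{|B|}(X)=p\sum_{k=0}^{p-1}\mathrm{tr}(\Pi_kX\Pi_k)\,\Pi_{k+|B|}\sigma\Pi_{k+|B|}.
\end{equation*}
Restricted to a single block $\Pi_k\mathcal{B}(\mathcal{H}_M)\Pi_k$, this is precisely the period-$1$ type channel $X\mapsto\mathrm{tr}(X)\cdot p\Pi_{k+|B|}\sigma\Pi_{k+|B|}$, with output memory residing in a \emph{different} block $\Pi_{k+|B|}\mathcal{H}_M$.

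Next, I would construct a candidate dilating isometry $\tilde{V}_B=\sum_k\tilde{V}_B^k$ directly. Choose mutually orthogonal subspaces $\mathcal{H}_{B_k}\subseteq\mathcal{H}_B$, which exist because $\sum_k\dim(\Pi_k\mathcal{H}_M)\cdot\dim(\Pi_{k+|B|}\mathcal{H}_M)\leq d_M^2\leq\dim\mathcal{H}_B$ by Cauchy--Schwarz and the hypothesis on $|B|$. Within each $\mathcal{H}_{B_k}$, pick an orthonormal family $\{|\xi^k_{ij}\rangle\}$ indexed by an eigenbasis $\{|i\rangle\}$ of $\sigma$ in $\Pi_{k+|B|}\mathcal{H}_M$ (with eigenvalues $\sigma_i$) and an arbitrary basis $\{|j\rangle\}$ of $\Pi_k\mathcal{H}_M$, and define
\begin{equation*}
    \tilde{V}_B^k:=\sum_{i,j}\sqrt{p\sigma_i}\,|\xi^k_{ij}\rangle\otimes|i\rangle\langle j|,
\end{equation*}
in analogy with (\ref{tilde-iso-form}). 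A computation identical in form to the one verifying (\ref{tilde-iso-form}), together with $\sum_ip\sigma_i=p\cdot\mathrm{tr}(\Pi_{k+|B|}\sigma\Pi_{k+|B|})=1$, shows that $\tilde{V}_B^{k\dagger}\tilde{V}_B^k=\Pi_k$; the mutual orthogonality of the $\mathcal{H}_{B_k}$ then yields $\tilde{V}_B^{k\dagger}\tilde{V}_B^{k'}=\delta_{kk'}\Pi_k$, so $\sum_k\tilde{V}_B^k$ is a bona fide isometry that dilates $\tilde{\mathcal{E}}^{|B|}$. Since dilations with a common dilation space $\mathcal{H}_B$ are unique up to a unitary there, we may absorb that unitary into the free choice of the $|\xi^k_{ij}\rangle$ and take this to be the $\tilde{V}_B$ of (\ref{pgfcs-tilde-sec}). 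Properties 1 and 2 of the lemma are then built into the construction.

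For property 3, I would repeat the calculation of Section \ref{subsec-pgfcs} blockwise. Only the $\Pi_{k-|A|}$-block of $\sigma$ survives inside $\tilde{V}_B^kV_A\sigma V_A^\dagger\tilde{V}_B^{k\dagger}$, because $V_A$ shifts blocks by $|A|$ and $\tilde{V}_B^k\Pi_{k'}=\delta_{kk'}\tilde{V}_B^k$. Expanding $p\,\mathrm{tr}_M(V_C\tilde{V}_B^kV_A\sigma V_A^\dagger\tilde{V}_B^{k\dagger}V_C^\dagger)$ in the basis $\{|\xi^k_{ij}\rangle\}$ as in (\ref{tilde-rho-explic}) and applying the unitary isomorphism $I^k:\mathrm{span}\{|\xi^k_{ij}\rangle\}\to\Pi_k\mathcal{H}_M\otimes\Pi_{k+|B|}\mathcal{H}_M$ defined by $|\xi^k_{ij}\rangle\mapsto|j\rangle\otimes|i\rangle$ factorizes the state as $\tilde{\rho}_{Ab^l_k}\otimes\tilde{\rho}_{b^r_kC}$, exactly as in Section \ref{subsec-pgfcs}, verifying condition 3 of Theorem \ref{qmc-structure}. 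The main obstacle is the careful bookkeeping of two distinct cyclic shifts (the input block $k$ and the output block $k+|B|$) when identifying $\tilde{\mathcal{E}}^{|B|}$ and when arranging the ranges of the $\tilde{V}_B^k$ to be orthogonal; once that structure is handled cleanly, each summand reduces to the period-$1$ QMC argument.
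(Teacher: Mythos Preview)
Your proposal is correct and follows essentially the same approach as the paper. Your Fourier-diagonalization argument to obtain the explicit form of $\tilde{\mathcal{E}}^{|B|}$ is precisely what the paper carries out in Appendix \ref{app-erg-e-tilde} (stated there in vectorized language), and the subsequent blockwise construction of $\tilde{V}_B^k$, the dimension count for orthogonality of the $\mathcal{H}_{B_k}$, the absorption of the dilation ambiguity into the choice of $|\xi_{ij}\rangle$, and the reduction of property~3 to the period-$1$ argument all match the paper's proof.
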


\noindent\textbf{Remarks}:\ 
\begin{enumerate}
\item Explicitly, each $\tilde{V}_{B}^k$, $k=0,\cdot\cdot\cdot,p-1$, has the form 
    \begin{equation}\label{tilde-isom-r-form}
        \tilde{V}_B^k=\sqrt{p}\sum_{(i,j)\in\mathcal{O}_k}\sqrt{\sigma_i}|\xi_{ij}\rangle\otimes|i\rangle\langle j|, 
    \end{equation}
where $\{|\xi_{ij}\rangle\in\mathcal{H}_B\;|\;i,j=1,\cdot\cdot\cdot,d_M\}$ is a set of orthonormal vectors, $\langle\xi_{i'j'}|\xi_{ij}\rangle=\delta_{ii'}\delta_{jj'}$, the vectors $|i\rangle,|j\rangle\in\mathcal{H}_M$ and the eigenvalues $\sigma_i>0$ are defined by the spectral decomposition $\sigma=\sum_{i=1}^{d_M}\sigma_i|i\rangle\langle i|$, and $\mathcal{O}_k:=\{(i,j)\;|\;|i\rangle\in\Pi_{k+|B|}\mathcal{H}_M, |j\rangle\in\Pi_{k}\mathcal{H}_M\}$. Notice that if $p>1$, then the set $\bigcup_{k=0}^{p-1}\mathcal{O}_k$ does not contain all pairs $(i,j)$.
\item The subspaces $\mathcal{H}_{B_k}$ are defined as $\mathcal{H}_{B_k}:=\mathrm{span}\{|\xi_{ij}\rangle\in\mathcal{H}_B\;|\; (i,j)\in\mathcal{O}_k\}$. It follows that the subspaces $\mathcal{H}_{B_k}$ are mutually orthogonal, and implies that for $X\in\mathcal{B}(\mathcal{H}_M)$ and $k\neq k'$, the sum $\tilde{V}^k_BX\tilde{V}^{k\dagger}_B+\tilde{V}^{k'}_{B}X\tilde{V}^{k'\dagger}_B$ is direct, i.e., $\tilde{V}^k_BX\tilde{V}^{k\dagger}_B\in\mathcal{B}(\mathcal{H}_{B_k})\otimes\mathcal{B}(\mathcal{H}_M)$, $\tilde{V}^{k'}_BX\tilde{V}^{k'\dagger}_B\in\mathcal{B}(\mathcal{H}_{B_{k'}})\otimes\mathcal{B}(\mathcal{H}_M)$ and $\tilde{V}^k_BX\tilde{V}^{k\dagger}_B+\tilde{V}^{k'}_{B}X\tilde{V}^{k'\dagger}_B\in\left(\mathcal{B}(\mathcal{H}_{B_k})\oplus\mathcal{B}(\mathcal{H}_{B_{k'}})\right)\otimes\mathcal{B}(\mathcal{H}_M)$.
\end{enumerate}

\begin{proof}
As we show in Appendix \ref{app-erg-e-tilde}, the quantum channel $\tilde{\mathcal{E}}$, obtained via Definition \ref{e-tilde-e} from $\mathcal{E}$ with the properties listed in Proposition \ref{ergpgfc}, has the form
\begin{equation}\label{E-tilde}
    \tilde{\mathcal{E}}(X)=p\sum_{k=0}^{p-1} \mbox{tr}\left(\Pi_k X \Pi_k\right)\Pi_{k+1}\sigma\Pi_{k+1}.
\end{equation}
In $\Pi_{k+k'}$ the addition in the subscript is interpreted modulo $p$. Then,
\begin{equation*}
    \tilde{\mathcal{E}}^{|B|}(X)=p\sum_{k=0}^{p-1} \mbox{tr}\left(\Pi_k X \Pi_k\right)\Pi_{k+|B|}\sigma\Pi_{k+|B|}.
\end{equation*}
Notice that for each $k$ and $X\in\mathcal{B}(\mathcal{H}_M)$ the map, $X\mapsto p\mbox{tr}\left(\Pi_k X \Pi_k\right)\Pi_{k+|B|}\sigma\Pi_{k+|B|}$, while not trace-preserving, is completely positive, and has an associated dilation $\tilde{V}_B^k:\mathcal{H}_M\rightarrow\mathcal{H}_B\otimes\mathcal{H}_M$, defined in (\ref{tilde-isom-r-form}).

It is easy to check that $\tilde{V}_B^k$ is a partial isometry, $$\tilde{V}_B^{k\dagger}\tilde{V}_B^k=\sum_{|j\rangle\in\Pi_k\mathcal{H}_M}|j\rangle\langle j|=\Pi_k.$$ We define the subspace $\mathcal{H}_{B_k}:=\mathrm{span}\{|\xi_{ij}\rangle\in\mathcal{H}_B\;|\; (i,j)\in\mathcal{O}_k\}$ that has the dimension $\dim\mathcal{H}_{B_k}=\mathrm{rank}(\Pi_k)\mathrm{rank}(\Pi_{k+|B|})$.

Now we prove that the assumption $\dim\mathcal{H}_B\geq d_M^2$ is consistent with mutual orthogonality of the subspaces $\mathcal{H}_{B_k}$. For $\mathcal{H}_B$ to contain $\bigoplus_{k=0}^{p-1}\mathcal{H}_{B_k}$, we require   $\sum_{k=0}^{p-1}\dim\mathcal{H}_{B_k}\leq\dim\mathcal{H}_B$, and since $\sum_{k=0}^{p-1} \mathrm{rank}(\Pi_k)\mathrm{rank}(\Pi_{k+|B|})\leq (\sum_{k=0}^{p-1} \mathrm{rank}(\Pi_k))^2=d^2_M$, this condition is satisfied. 

The channel $\tilde{\mathcal{E}}^{|B|}$ has the associated dilation isometry $\tilde{V}'_B:=\sum_{k=0}^{p-1} \tilde{V}^k_B$, since $$\tilde{V}_B^{'\dagger}\tilde{V}'_B=\sum_{k_1,k_2=0}^{p-1} \tilde{V}^{k_1\dagger}_B\tilde{V}^{k_2}_B=\sum_{k=0}^{p-1} \tilde{V}^{k\dagger}_B\tilde{V}^k_B=\sum_{k=0}^{p-1}\Pi_k=\mathbb{1}_B,$$ and
\begin{align*}
\mathrm{tr}_{B}(\tilde{V}'_BX\tilde{V}^{'\dagger}_B)&=\sum_{k_1,k_2=0}^{p-1}\mathrm{tr}_{B}(\tilde{V}^{k_1}_BX\tilde{V}^{k_2\dagger}_B)\\
&=\sum_{k=0}^{p-1}\mathrm{tr}_{B}(\tilde{V}^{k}_BX\tilde{V}^{k\dagger}_B)\\
&=p\sum_{k=0}^{p-1}\mbox{tr}\left(\Pi_k X \Pi_k\right)\Pi_{k+|B|}\sigma\Pi_{k+|B|}\\
&=\tilde{\mathcal{E}}^{|B|}(X).
\end{align*}
We have used mutual orthogonality of $\mathcal{H}_{B_k}$ in both calculations. Since both $\tilde{V}'_B,\tilde{V}_B:\mathcal{H}_M\rightarrow\mathcal{H}_B\otimes\mathcal{H}_M$ dilate $\tilde{\mathcal{E}}^B$, there exists unitary $U_B:\mathcal{H}_B\rightarrow\mathcal{H}_B$, such that $\tilde{V}_B=U_B\tilde{V}'_B$. Using the same reasoning as in Section \ref{subsec-pgfcs} we can, without loss of generality, take $\tilde{V}'_B=\tilde{V}_B$,
\begin{equation}\label{tilde-isom-erg-form}
    \tilde{V}_B=\sum_{k=0}^{p-1} \tilde{V}_B^k=\sqrt{p}\sum_{k=0}^{p-1}\sum_{(i,j)\in\mathcal{O}_k}\sqrt{\sigma_i}|\xi_{ij}\rangle\otimes|i\rangle\langle j|.
\end{equation}

Notice that the case $p=1$ corresponds to a single projector $\Pi_0=\mathbb{1}_M$, so that (\ref{E-tilde}) reduces to $\tilde{\mathcal{E}}^{|B|}(X)=\mathrm{tr}(X)\sigma$, and (\ref{tilde-isom-erg-form}) reduces to (\ref{tilde-iso-form}).

Comparing (\ref{tilde-isom-r-form}) to (\ref{tilde-iso-form}), we realize that the we can use the same arguments as in Section \ref{subsec-pgfcs} to prove that $p\mathrm{tr}_M(V_C\tilde{V}_B^kV_A\sigma V_A\tilde{V}_B^{k\dagger}V_C^\dagger)$ is a quantum Markov chain. The normalization constant comes from the observation that 
\begin{align*}
    \mathrm{tr}(V_C\tilde{V}^k_BV_A\sigma V^\dagger_A\tilde{V}_B^{k\dagger}V_C^\dagger)&=\mathrm{tr}(\Pi_kV_A\sigma V^\dagger_A\Pi_k)\\ \nonumber
    &=\mathrm{tr}\left(\Pi_{k}\mathcal{E}^{|A|}(\sigma)\Pi_k\right)\\ \nonumber
    &=\mathrm{tr}\left(\Pi_{k}\mathcal{E}^{|A|}\left(\bigoplus_{k'=0}^{p-1}\Pi_{k'}\sigma\Pi_{k'}\right)\Pi_k\right)\\ \nonumber
    &=\mathrm{tr}(\Pi_{k}\bigoplus_{k'=0}^{p-1}\Pi_{k'+|A|}\sigma\Pi_{k'+|A|}\Pi_k)& \\ \nonumber
    &=\mathrm{tr}(\Pi_{k}\sigma\Pi_k)=\frac{1}{p}.
\end{align*}
\end{proof}

\subsection{The general case of pgFCS}\label{sect-fcs}
We generalize the proof to the case of a convex sum of ergodic pgFCS considered in the previous subsection. The properties of pgFCS are given in Proposition \ref{gen-pgfcs}, whose notation and definitions we employ below.

We will prove that $\tilde{\rho}_{ABC}$, as defined in (\ref{pgfcs-tilde-sec}), is a quantum Markov chain by anticipating the Lemma \ref{gen-pgfcs-isom-decomp}, which states that the isometry $\tilde{V}_B$ can be decomposed as a sum of partial isometries corresponding to ergodic components, $\tilde{V}_B=\sum_{\alpha=1}^J\tilde{V}^\alpha_{B}$, where $\tilde{V}^\alpha_B:\mathcal{H}_M\rightarrow\mathcal{H}_{B}\otimes\mathcal{H}_M$ is a partial isometry with $\tilde{\Pi}_\alpha\mathcal{H}_{M}$ the orthogonal complement of its kernel, and where $\mathrm{Range}(\tilde{V}_B^\alpha)=\mathcal{H}_{B_\alpha}\otimes\tilde{\Pi}_\alpha \mathcal{H}_M$, with subspaces $\mathcal{H}_{B_\alpha}\subseteq\mathcal{H}_{B}$ being mutually orthogonal. The projectors $\tilde{\Pi}_\alpha$ are defined in Proposition \ref{gen-pgfcs}, and correspond to different ergodic components of $\rho_{ABC}$. Moreover, $\tilde{V}^\alpha_B$ are such that 
\begin{equation*}
\tilde{\rho}^\alpha_{ABC} := \mbox{tr}_M \left(V^\alpha_C \tilde{V}^\alpha_B V^\alpha_A \sigma_\alpha V_A^{\alpha\dagger} \tilde{V}_B^{\alpha\dagger} V_C^{\alpha\dagger} \right)
\end{equation*}
separately approximates the corresponding $\rho^\alpha_{ABC}$, defined in Proposition \ref{gen-pgfcs}. Then $\tilde{\rho}_{ABC}$ may be decomposed,
\begin{align*}
    \tilde{\rho}_{ABC}&=\mbox{tr}_M \left(V_C \left(\sum_{\alpha=1}^J\tilde{V}^\alpha_{B}\right) V_A \sigma V_A^{\dagger} \left(\sum_{\beta=1}^J\tilde{V}^{\beta\dagger}_{B}\right) V_C^{\dagger} \right)\\ \nonumber
    &=\bigoplus_{\alpha=1}^J\lambda_\alpha\mbox{tr}_{M_\alpha}\left(V^\alpha_C \tilde{V}^\alpha_{B}V^\alpha_A \sigma_\alpha V_A^{\alpha\dagger} \tilde{V}^{\alpha\dagger}_{B} V_C^{\alpha\dagger} \right)\\ \nonumber
    &=\bigoplus_{\alpha=1}^J\lambda_\alpha\tilde{\rho}^\alpha_{ABC},
\end{align*}
where we have used $V_{C}\tilde{V}^\alpha_{B}=V_C^\alpha\tilde{V}^\alpha_{B}$ and $\tilde{V}^\alpha_B V_A\sigma V^\dagger_A \tilde{V}_B^\beta=\delta_{\alpha\beta}\lambda_\alpha\tilde{V}^\alpha_B V^\alpha_A\sigma_\alpha V^{\alpha\dagger}_A\tilde{V}_B^\beta$. By the proof in Section \ref{subsec-erg}, each $\tilde{\rho}^\alpha_{ABC}$ is a quantum Markov chain, and by Corollary \ref{qmc-structure-cor} $\tilde{\rho}_{ABC}$ is a quantum Markov chain. This is sufficient to complete the proof. 

The following lemma shows that $\tilde{V}_B$ indeed possesses the requisite decomposition. We repeatedly use Proposition \ref{gen-pgfcs} in the proof.
\begin{lemma}\label{gen-pgfcs-isom-decomp}
For a pgFCS, the isometry $\tilde{V}_B:\mathcal{H}_M\rightarrow\mathcal{H}_B\otimes\mathcal{H}_M$ defined in (\ref{pgfcs-tilde-sec}) can be constructed as the sum $\tilde{V}_B=\sum_{\alpha=1}^{J}\tilde{V}_B^\alpha$, where $J$ is the number of ergodic components, and
\begin{enumerate}
    \item $\tilde{V}_{B}^\alpha$ is a partial isometry, $\tilde{V}_B^{\alpha\dagger}\tilde{V}_B^\alpha=\tilde{\Pi}_\alpha$, and $\tilde{V}^\alpha_B\tilde{\Pi}_{\beta}=\delta_{\alpha\beta}\tilde{V}_B^\alpha$. 
    \item $\mathrm{Range}(\tilde{V}_B^\alpha)=\mathcal{H}_{B_\alpha}\otimes\tilde{\Pi}_{\alpha}\mathcal{H}_M\subseteq \mathcal{H}_B\otimes\mathcal{H}_M$. The subspaces $\mathcal{H}_{B_\alpha}$ are mutually orthogonal.
    \item $\tilde{\rho}^\alpha_{ABC}=\mathrm{tr}_M(V^\alpha_C\tilde{V}_B^{\alpha}V^\alpha_A\sigma V^{\alpha\dagger}_A\tilde{V}_B^{\alpha\dagger}V_C^{\alpha\dagger})$ is a quantum Markov chain. 
\end{enumerate}
\end{lemma}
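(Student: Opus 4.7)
The plan is to reduce the decomposition of $\tilde{V}_B$ to the ergodic case (Lemma \ref{subsec-erg-lemma}) applied to each ergodic component separately, and then to glue the resulting partial isometries into a single isometry on $\mathcal{H}_M$ by exploiting the ambiguity in Stinespring's dilation. I would begin by invoking Proposition \ref{gen-pgfcs} to write $\mathcal{H}_M=\bigoplus_{\alpha=1}^J\mathcal{H}_{M_\alpha}$, $V=\sum_\alpha V_\alpha\tilde{\Pi}_\alpha$, and $\sigma=\bigoplus_\alpha\lambda_\alpha\sigma_\alpha$. Because $V$ has this block form, $\mathcal{E}$ preserves the bigrading $\mathcal{B}(\mathcal{H}_M)=\bigoplus_{\alpha,\beta}\tilde{\Pi}_\alpha\mathcal{B}(\mathcal{H}_M)\tilde{\Pi}_\beta$, with the restriction to each diagonal block equal to the ergodic channel $\mathcal{E}_\alpha$. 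The key structural fact, guaranteed by Property 5 of Proposition \ref{gen-pgfcs} via the arguments underlying Proposition 3.3 of Ref.\,\onlinecite{fcs}, is that the off-diagonal blocks carry no peripheral eigenvalues of $\mathcal{E}$. Consequently the peripheral projector $P_\mathcal{E}$ of Definition \ref{e-tilde-e} annihilates them, so $\tilde{\mathcal{E}}$ is block-diagonal in the ergodic decomposition and coincides on each diagonal block with $\tilde{\mathcal{E}}_\alpha$. Thus $\tilde{\mathcal{E}}^{|B|}(X)=\sum_\alpha\tilde{\mathcal{E}}_\alpha^{|B|}(\tilde{\Pi}_\alpha X\tilde{\Pi}_\alpha)$.

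Next, I would apply Lemma \ref{subsec-erg-lemma} to each ergodic pair $(V_\alpha,\sigma_\alpha)$ to obtain, on the memory space $\mathcal{H}_{M_\alpha}$, partial isometries dilating the period-$k$ pieces of $\tilde{\mathcal{E}}_\alpha^{|B|}$. Summing these over $k$ gives a partial isometry $\tilde{V}_B^\alpha:\mathcal{H}_{M_\alpha}\to\mathcal{H}_{B_\alpha}\otimes\tilde{\Pi}_\alpha\mathcal{H}_M$, which I extend by zero to all of $\mathcal{H}_M$. The crucial dimensional point is that the subspaces $\mathcal{H}_{B_\alpha}\subseteq\mathcal{H}_B$ carrying the ranges can be chosen mutually orthogonal: the construction in Lemma \ref{subsec-erg-lemma} gives $\dim\mathcal{H}_{B_\alpha}\leq d_{M_\alpha}^2$, whence $\sum_\alpha\dim\mathcal{H}_{B_\alpha}\leq\sum_\alpha d_{M_\alpha}^2\leq\bigl(\sum_\alpha d_{M_\alpha}\bigr)^2=d_M^2\leq\dim\mathcal{H}_B$ by the hypothesis on $|B|$.

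Setting $\tilde{V}_B':=\sum_\alpha\tilde{V}_B^\alpha$, a short computation using orthogonality of the $\mathcal{H}_{M_\alpha}$ and of the $\mathcal{H}_{B_\alpha}$ yields $\tilde{V}_B^{'\dagger}\tilde{V}_B'=\sum_\alpha\tilde{\Pi}_\alpha=\mathbb{1}_M$ and $\mathrm{tr}_B(\tilde{V}_B'X\tilde{V}_B^{'\dagger})=\tilde{\mathcal{E}}^{|B|}(X)$, so $\tilde{V}_B'$ dilates $\tilde{\mathcal{E}}^{|B|}$. By Stinespring's uniqueness (Theorem \ref{inf-dist-theorem}) one has $\tilde{V}_B=U_B\tilde{V}_B'$ for some unitary $U_B$ on $\mathcal{H}_B$, and, exactly as in Section \ref{subsec-pgfcs}, I would absorb $U_B$ into the freedom in choosing the orthonormal basis underlying the construction, taking $\tilde{V}_B=\tilde{V}_B'$. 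Properties 1 and 2 of the lemma are then immediate from the construction; property 3 follows from property 3 of Lemma \ref{subsec-erg-lemma} applied to the ergodic component $\alpha$, combined with Corollary \ref{qmc-structure-cor} to handle the direct-sum structure.

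The main obstacle I expect is justifying the block-diagonality of $\tilde{\mathcal{E}}$, i.e., that the off-diagonal blocks of $\mathcal{E}$ have spectral radius strictly less than one. This is precisely what Property 5 of Proposition \ref{gen-pgfcs} is designed to ensure, and is essentially the content of Proposition 3.3 of Ref.\,\onlinecite{fcs}; once this is in hand, the remainder is bookkeeping on top of Lemma \ref{subsec-erg-lemma} and the Stinespring uniqueness argument already used in the period-one case.
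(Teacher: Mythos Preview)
Your proposal is correct and follows essentially the same route as the paper's proof: block-diagonalize $\tilde{\mathcal{E}}$ over the ergodic sectors, dilate each block by the ergodic-case construction, check the dimension count $\sum_\alpha d_{M_\alpha}^2\leq d_M^2\leq\dim\mathcal{H}_B$ to place the $\mathcal{H}_{B_\alpha}$ orthogonally, and then invoke Stinespring uniqueness to identify $\tilde{V}_B'=\sum_\alpha\tilde{V}_B^\alpha$ with $\tilde{V}_B$. The only correction is your attribution for the off-diagonal spectral-radius bound: this is not Proposition~3.3 of Ref.~\onlinecite{fcs} (which concerns the period structure of a single ergodic channel) but rather Lemma~\ref{b-1} in Appendix~\ref{app-b}, which is exactly the statement that $X\mapsto\mathrm{tr}_s(V_\alpha X V_\beta^\dagger)$ has spectral radius strictly less than $1$ whenever $V_\alpha$ and $V_\beta$ are not unitarily equivalent up to a phase---precisely what Property~5 of Proposition~\ref{gen-pgfcs} rules out.
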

\begin{proof}
We express $\mathcal{E}$ as 
\begin{align}\label{ab-channels}
    \mathcal{E}(X)=\sum_{\alpha,\beta=1}^{J} \mathcal{E}\left(\tilde{\Pi}_{\alpha}X\tilde{\Pi}_\beta\right)=\sum_{\alpha,\beta=1}^{J}\mbox{tr}_{s}\left(V_\alpha \tilde{\Pi}_{\alpha}X\tilde{\Pi}_\beta V_{\beta}^{\dagger}\right),
\end{align}
where we have used $V\tilde{\Pi}_\alpha=V_\alpha\tilde{\Pi}_\alpha$. Since the range of $V_\alpha$ is $\mathcal{H}_s\otimes\tilde{\Pi}_\alpha\mathcal{H}_M$, then $\mbox{tr}_{s}\left(V_\alpha \tilde{\Pi}_{\alpha}X\tilde{\Pi}_\beta V_{\beta}^{\dagger}\right)\in\tilde{\Pi}_{\alpha}\mathcal{B}(\mathcal{H}_M)\tilde{\Pi}_\beta$. 
We decompose $\tilde{\mathcal{E}}$ similarly,
\begin{equation*}
    \tilde{\mathcal{E}}(X)=\sum_{\alpha,\beta=1}^{J} \tilde{\mathcal{E}}\left(\tilde{\Pi}_{\alpha}X\tilde{\Pi}_\beta\right).
\end{equation*}
We now show that the last equation reduces to the direct sum, $\tilde{\mathcal{E}}(X)=\bigoplus_{\alpha=1}^J\tilde{\mathcal{E}}(\tilde{\Pi}_{\alpha}X\tilde{\Pi}_\alpha)$. We observe that the map, $X\mapsto\tilde{\mathcal{E}}(\tilde{\Pi}_{\alpha}X\tilde{\Pi}_\beta)$, is obtained via Definition \ref{e-tilde-e} from the map $X\mapsto\mbox{tr}_{s}(V_\alpha \tilde{\Pi}_{\alpha}X\tilde{\Pi}_\beta V_{\beta}^{\dagger})$. Recall that for $\alpha\neq\beta$ there is no unitary $U:\mathcal{H}_M\rightarrow\mathcal{H}_M$ and $\phi\in\mathbb{R}$, such that $V_\alpha=e^{i\phi}UV_\beta U^\dagger$, which implies by Lemma \ref{b-1} that the eigenvalues of $X\mapsto\mbox{tr}_{s}(V_\alpha \tilde{\Pi}_{\alpha}X\tilde{\Pi}_\beta V_{\beta}^{\dagger})$ have magnitudes less than $1$. Then, following the construction in Definition \ref{e-tilde-e}, the map $X\mapsto\mbox{tr}_{s}(V_\alpha \tilde{\Pi}_{\alpha}X\tilde{\Pi}_\beta V_{\beta}^{\dagger})$ is identically zero, $\tilde{\mathcal{E}}(\tilde{\Pi}_{\alpha}X\tilde{\Pi}_\beta)=0$ for $\alpha\neq\beta$. Thus, $\tilde{\mathcal{E}}(X)=\sum_{\alpha=1}^J\tilde{\mathcal{E}}(\tilde{\Pi}_{\alpha}X\tilde{\Pi}_\alpha)$. Since $\mathcal{E}(\tilde{\Pi}_{\alpha}X\tilde{\Pi}_\alpha)\in\tilde{\Pi}_{\alpha}\mathcal{B}(\mathcal{H}_M)\tilde{\Pi}_\alpha$, from Definition \ref{e-tilde-e} follows that $\tilde{\mathcal{E}}(\tilde{\Pi}_{\alpha}X\tilde{\Pi}_\alpha)\in\tilde{\Pi}_{\alpha}\mathcal{B}(\mathcal{H}_M)\tilde{\Pi}_\alpha$. Since $\tilde{\Pi}_\alpha$ are orthonormal projectors, the sum is direct, $\tilde{\mathcal{E}}(X)=\bigoplus_{\alpha=1}^J\tilde{\mathcal{E}}(\tilde{\Pi}_{\alpha}X\tilde{\Pi}_\alpha)$, and
\begin{equation*}
    \tilde{\mathcal{E}}^{|B|}(X)=\bigoplus_{\alpha=1}^J\tilde{\mathcal{E}}^{|B|}(\tilde{\Pi}_{\alpha}X\tilde{\Pi}_\alpha).
\end{equation*}
We define $\tilde{V}_B^\alpha:\mathcal{H}_{M_\alpha}\rightarrow\mathcal{H}_{B_\alpha}\otimes\mathcal{H}_{M_\alpha}$ to be an isometric dilation of $X\mapsto\tilde{\mathcal{E}}(\tilde{\Pi}_\alpha X\tilde{\Pi}_{\alpha})$. Here $\mathcal{H}_{M_\alpha}:=\tilde{\Pi}_\alpha\mathcal{H}_M$, and $\mathcal{H}_{B_\alpha}$ is the dilation space embedded in $\mathcal{H}_B$ of dimension $\dim\mathcal{H}_{B_\alpha}\leq(\dim\mathcal{H}_{M_\alpha})^2=d^2_{M_\alpha}$. It immediately follows that $\tilde{V}_B^{\alpha\dagger}\tilde{V}_B^\alpha=\tilde{\Pi}_\alpha$, proving the statement 1. From property 4 of Proposition \ref{gen-pgfcs} we know that $(V_\alpha,\sigma_\alpha)$ generates an ergodic pgFCS with the induced quantum channel $X\mapsto\mathcal{E}(\tilde{\Pi}_\alpha X\tilde{\Pi}_\alpha)$. As mentioned before, $X\mapsto\tilde{\mathcal{E}}(\tilde{\Pi}_{\alpha}X\tilde{\Pi}_\alpha)$ is exactly the map obtained via Definition \ref{e-tilde-e} from the map $X\mapsto\mathcal{E}(\tilde{\Pi}_\alpha X\tilde{\Pi}_\alpha)$. Then, separately for each $\alpha$, we can repeat the reasoning of Section \ref{subsec-erg}, showing that each  $\tilde{V}_B^\alpha$, the dilating isometry of $X\mapsto\tilde{\mathcal{E}}^{|B|}(\tilde{\Pi}_\alpha X\tilde{\Pi}_\alpha)$, possesses the properties which ensure that $\tilde{\rho}^\alpha_{ABC}$ is a quantum Markov chain, proving the statement 3. 

Now we show that we can safely choose the subspaces $\mathcal{H}_{B_\alpha}$ to be mutually orthogonal. Since $\mathcal{H}_{M}=\bigoplus_{\alpha=1}^J\mathcal{H}_{M_\alpha}$, and since we assume $\dim\mathcal{H}_{B}\geq d_M^2$, then $\dim\mathcal{H}_B\geq(\sum_{\alpha=1}^Jd_{M_\alpha})^2\geq \sum_{\alpha=1}^Jd^2_{M_\alpha}\geq\sum_{\alpha=1}^J\dim\mathcal{H}_{B_\alpha}$, which implies that the subspaces $\mathcal{H}_{B_\alpha}$ may be chosen to be orthogonal, proving the statement 2. Then we can construct the isometry $\tilde{V}'_B=\sum_{\alpha=1}^J\tilde{V}^\alpha_B$. This isometry clearly dilates $\tilde{\mathcal{E}}^{|B|}$, which is also dilated by $\tilde{V}_B$. By the same reasoning as in Section \ref{subsec-pgfcs} and Section \ref{subsec-erg} we may take $\tilde{V}'_B=\tilde{V}_B$,
\begin{equation*}
    \tilde{V}_B=\sum_{\alpha=1}^J\tilde{V}^\alpha_B.
\end{equation*}
Thus, $\tilde{V}_B$ possesses all the required properties.
\end{proof}

\section{Recovery maps}\label{sec-rec-maps}
In this section we consider the recovery maps that can be used to restore $\tilde{\rho}_{ABC}$, defined in (\ref{pgfcs-tilde-sec}), from $\tilde{\rho}_{AB}$. The obvious choice is the Petz recovery map, defined on $\mathcal{B}(\mathrm{supp}(\tilde{\rho}_B))$ in (\ref{petz-map}), $\mathcal{P}_{B\rightarrow BC}(X)=\tilde{\rho}_{BC}^{\frac{1}{2}}\tilde{\rho}^{-\frac{1}{2}}_BX\tilde{\rho}^{-\frac{1}{2}}_B\tilde{\rho}_{BC}^{\frac{1}{2}}$, and which can be extended by a trivial embedding $\mathrm{supp}(\tilde{\rho}_B)^\perp\rightarrow\mathrm{supp}(\tilde{\rho}_B)^\perp\otimes\mathcal{H}_C$ to the map $\mathcal{P}_{B\rightarrow BC}:\mathcal{B}(\mathcal{H}_B)\rightarrow\mathcal{B}(\mathcal{H}_B)\otimes\mathcal{B}(\mathcal{H}_C)$.\cite{petz-1, petz-2, petz-recovery} The Petz recovery map can be defined for any state that is a quantum Markov chain.

The Petz recovery map is, however, not the only map that can exactly reconstruct a quantum Markov chain $\tilde{\rho}_{ABC}$. From Theorem \ref{qmc-structure}, we note that there is another obvious choice of the recovery channel, which relies on the isomorphism $I:\mathrm{supp}(\tilde{\rho}_B)\rightarrow\bigoplus_{k=1}^{k_{\mathrm{max}}}\mathcal{H}_{b^l_k}\otimes\mathcal{H}_{b^r_k}$ defined in Theorem \ref{qmc-structure}, and which provides the decomposition $I\tilde{\rho}_{ABC}I^\dagger=\bigoplus_{k=1}^{k_{\mathrm{max}}}\lambda_k\tilde{\rho}_{Ab^l_k}\otimes\tilde{\rho}_{b^r_k C}$. We gave an example of such an isomorphism in Section \ref{subsec-pgfcs} for the case of an ergodic pgFCS of period $1$. 
Using $I$, for a general quantum Markov chain, we can construct the map $\mathcal{R}_{B\rightarrow BC}:\mathcal{B}(\mathcal{H}_B)\rightarrow\mathcal{B}(\mathcal{H}_B\otimes\mathcal{H}_C)$,  
\begin{equation*}
\mathcal{R}_{B\rightarrow BC}(X):=I^\dagger \sum_{k=1}^{k_{\mathrm{max}}}\mathrm{tr}_{b^r_k}\left(\mathbb{P}_k (IXI^\dagger)\mathbb{P}_k\right)\otimes \tilde{\rho}_{b^r_k C} I,
\end{equation*}
where $\mathbb{P}_k$ is a projector onto $\mathcal{H}_{b^l_k}\otimes\mathcal{H}_{b^r_k}$. We verify that this map is indeed a recovery channel by direct computation,
\begin{align*}
    &\mathcal{R}_{B\rightarrow BC}(\tilde{\rho}_{AB})=I^\dagger \sum_{k=1}^{k_{\mathrm{max}}}\mathrm{tr}_{b^r_k}\left(\mathbb{P}_k (I\tilde{\rho}_{AB}I^\dagger)\mathbb{P}_k\right)\otimes \tilde{\rho}_{b^r_k C} I\\ \nonumber
    &=I^\dagger\sum_{k=1}^{k_{\mathrm{max}}}\mathrm{tr}_{b^r_k}(\mathbb{P}_k \left(\bigoplus_{k'=1}^{k_{\mathrm{max}}}\lambda_{k'}\tilde{\rho}_{Ab^l_{k'}}\otimes\tilde{\rho}_{b^r_{k'} }\right)\mathbb{P}_k)\otimes \tilde{\rho}_{b^r_k C} I\\ \nonumber
    &=I^\dagger\bigoplus_{k'=1}^{k_{\mathrm{max}}}\lambda_{k'}\mathrm{tr}_{b^r_{k'}} (\tilde{\rho}_{Ab^l_{k'}}\otimes\tilde{\rho}_{b^r_{k'} })\otimes \rho_{b^r_k C} I\\ \nonumber
    &=I^\dagger\bigoplus_{k'=1}^{k_{\mathrm{max}}}\lambda_{k'}\tilde{\rho}_{Ab^l_{k'}}\otimes\tilde{\rho}_{b^r_{k'} C} I\\ \nonumber
    &=\tilde{\rho}_{ABC},
\end{align*}
where we have used $I\tilde{\rho}_{AB}I^\dagger=I\mathrm{tr}_C\tilde{\rho}_{ABC}I^\dagger=\mathrm{tr}_C(I\tilde{\rho}_{ABC}I^\dagger)$ (recall that $I$ acts non-trivially only on $\mathcal{H}_B$), the structure of the quantum Markov chain from Theorem \ref{qmc-structure} in the second line, and $\mathbb{P}_k\rho_{Ab^l_{k'}}\otimes\rho_{b^r_{k'} C}\mathbb{P}_k=\delta_{{k'} k}\rho_{Ab^l_{k'}}\otimes\rho_{b^r_{k'} C}$ in the third line.

Unlike the Petz recovery map that is defined in terms of the reduced density operators $\tilde{\rho}_{BC}$ and $\tilde{\rho}_B$, the recovery map $\mathcal{R}_{B\rightarrow BC}$ requires knowledge of the isomorphism $I$, the structure of which is not obvious in the case of a general quantum Markov chain. In the case of a quantum Markov chain approximating a pgFCS, and possessing the form (\ref{pgfcs-tilde-sec}), we can construct the isomorphism $I$. For simplicity of argument, we illustrate this claim using the case of ergodic pgFCS of period $1$. Recall that $\tilde{\rho}_{ABC}$ has the form (\ref{pgfcs-tilde-sec}), and $\tilde{V}_B$ has the form (\ref{tilde-iso-form}), $\tilde{V}_B=\sum_{i,j=1}^{d_M}\sqrt{\sigma_i}|\xi_{ij}\rangle\otimes|i\rangle\langle j|$, where $|i\rangle$ and $|j\rangle$ are the eigenvectors of $\sigma$, and $|\xi_{ij}\rangle$ is an orthonormal basis set. This allows us to express $\tilde{\rho}_{ABC}$ as in (\ref{tilde-rho-explic}) and define $I$ in terms of the orthonormal vectors $|\xi_{ij}\rangle$, $I=\sum_{i,j=1}^{d_M}(|i\rangle\otimes|j\rangle)\langle\xi_{ij}|$. The required set of orthonormal vectors $\{|\xi_{ij}\rangle\in\mathcal{H}_B\;|\; i,j=1,\cdot\cdot\cdot,d_M\}$ is the solution of the minimization problem for the norm $\|V_B-\tilde{V}_{B}\|$ over unitary change of $d_M^2$ basis vectors in $\mathcal{H}_B$.

Interestingly, we can construct another recovery channel, that has a slightly less optimal bound in terms of the pre-exponential factor, but for which the vectors $|\xi_{ij}\rangle$ can be constructed more explicitly. Again we use the case of ergodic pgFCS of period $1$ for illustration. 

For $\tilde{V}_B$ in the form (\ref{tilde-iso-form}), \emph{any} choice of the $|\xi_{ij}\rangle$ induces an exact quantum Markov chain $\tilde{\rho}_{ABC}$ of the form (\ref{pgfcs-tilde-sec}). A particular choice of $|\xi_{ij}\rangle$ influences the bound on the recovery error. Expressing
\begin{align*}
   \|V_B-\tilde{V}_B\|&=\|\sum_{i,j=1}^{d_M^2} (\langle i|V_B|j\rangle-\sqrt{\sigma_i}|\xi_{ij}\rangle)\otimes|i\rangle\langle j|\|.
\end{align*}
we observe that any choice of $|\xi_{ij}\rangle$ close to $\sigma^{-\frac{1}{2}}_{i}\langle i|V_B|j\rangle$ is expected to be a good one. Let us construct the positive semidefinite matrix 
\begin{equation}\label{gram-m}
    Q_{i'j';ij}:=\langle j'|V^\dagger_B\left(\mathbb{1}_B\otimes|i'\rangle\langle i|\right) V_B|j\rangle,
\end{equation}
which is a Gramian matrix for the set of vectors $\{\langle i|V_B| j\rangle\; | \; i,j=1,\cdot\cdot\cdot, d_M\}$. For large $|B|$, the vectors in the latter set are almost orthogonal to each other, and $\|\langle i|V_B| j\rangle\|$ is close to $\sigma_i^{1/2}$. To prove this, notice that the map $$X\mapsto V^\dagger_B(\mathbb{1}_B\otimes X)V_B$$ with $X\in\mathcal{B}(\mathcal{H}_M)$ is adjoint (with respect to the Hilbert-Schmidt inner product) to the quantum channel $\mathcal{E}^{|B|}(X)=\mathrm{tr}_B(V_B X V^\dagger_B)$, which converges to $\tilde{\mathcal{E}}(X)=\mathrm{tr}(X)\sigma$ in the limit $|B|\rightarrow\infty$. Hence the map $X\mapsto V^\dagger_B(\mathbb{1}_B\otimes X)V_B$ converges to the adjoint of $\tilde{\mathcal{E}}$, the map $X\mapsto \mathrm{tr}(\sigma X)\mathbb{1}_M$. Thus, $Q_{i'j';ij}$ converges to $\langle j'|\mbox{tr}_M\left(\sigma|i'\rangle\langle i|\right)\mathbb{1}_M V_B|j\rangle=\sigma_i\delta_{ii'}\delta_{jj'}$. 

Then we define $|\xi_{ij}\rangle$ as
\begin{equation}\label{xi-constr}
    |\xi_{ij}\rangle=\sum_{m,n=1}^{d_M}(Q^{-\frac{1}{2}})_{mn;ij}\langle m|V_B|n\rangle.
\end{equation}
(As we show in Appendix \ref{app-alt-rec}, $Q$ is guaranteed to be invertible, if the region $B$ is large enough). We verify that the set of $|\xi_{ij}\rangle$ is orthonormal,
\begin{align*}
    \langle\xi_{i'j'}|\xi_{ij}\rangle&=\sum_{\mathclap{m,n,m',n'}}(Q^{-\frac{1}{2}})_{i'j';m'n'}\langle n'|V^\dagger_B|m'\rangle\langle m|V_B|n\rangle (Q^{-\frac{1}{2}})_{mn;ij}\\ \nonumber
    &=\sum_{\mathclap{m,n,m',n'}}(Q^{-\frac{1}{2}})_{i'j';m'n'}Q_{m'n';mn}(Q^{-\frac{1}{2}})_{mn;ij}\\ \nonumber
    &=\delta_{ii'}\delta_{jj'}.
\end{align*}
In Appendix \ref{app-alt-rec} we show that for this choice of $|\xi_{ij}\rangle$,  
\begin{equation}\label{alt-op-norm-bound-sec}
    \|V_B-\tilde{V}_B\|\leq\sqrt{2(1+2^{-\frac{11}{2}})c}\;d_M\nu^{\frac{|B|}{2}}, 
\end{equation}
with $c$ and $\nu$ defined in Lemma \ref{b-2}, leading to the recovery error
\begin{equation}\label{alt-rec-error}
    \|\mathcal{R}_{B\rightarrow BC}(\rho_{AB})-\rho_{ABC}\|_1\leq 4\sqrt{2(1+2^{-\frac{11}{2}})c}\;d_M\nu^{\frac{|B|}{2}},
\end{equation}
which is worse than the recovery error (\ref{error-bound}) by the factor $\sqrt{2(1+2^{-\frac{11}{2}})d_M}$. 

In Appendix \ref{app-alt-rec} we present a similar construction for the general case of pgFCS, which is based on the same idea, but is more delicate.

\section{Conclusion}\label{sec-concl}

Quantum conditional mutual information (QCMI) is non-negative by virtue of the strong subadditivity of von Neumann entropy. We have studied the QCMI for a homogeneous quantum spin chain described by a purely generated finitely correlated state (pgFCS) $\rho_{ABC}$. Explicitly, by separating the chain regions A and C by a domain B, and provided the size $| B |$ of region B is large enough, we show there exists a quantum Markov chain which differs from $\rho_{ABC}$ in trace distance by an error exponentially small in $| B |$. It follows that $\rho_{ABC}$ may be recovered, within the stated error, from the reduced density operator $\rho_{AB}$ by a quantum channel acting only on the domain B. As a consequence this implies that the QCMI, denoted $I(A:C | B),$ is also exponentially small in $|B|.$ We have presented quantum channels that, in principle, can perform the state recovery. Besides the obvious choice of the Petz recovery map, we give two other examples derived from the particular structure of the pgFCS. We have carried out numerical experiments on the decay of QCMI, the results of which are consistent with the bound presented in Theorem \ref{main-result}. The details will be reported elsewhere. 

A natural next step would be to prove the exponential decay of QCMI for generic FCS, as conjectured in Ref. \onlinecite{fcs-mpdo}. Both the results presented here and in Ref. \onlinecite{fcs-mpdo} point in this direction. As discussed further in Appendix \ref{app-comparison}, the pgFCS considered here and the class of states considered in Ref. \onlinecite{fcs-mpdo} are in general distinct. 

Generalization of our result to generic FCS seems to require an approach different from the one we used in this paper. It remains an open question whether a FCS exhibiting an exponential decay of QCMI is necessarily close to a quantum Markov chain in trace distance.  

\begin{acknowledgments}
The authors would like to thank Shivan Mittal for helpful discussions and the referee for helpful suggestions.
\end{acknowledgments}

\section*{Author declarations}
\subsection*{Conflict of interests}
The authors have no conflicts to disclose.

\section*{Data Availability Statement}
Data sharing is not applicable to this article as no new data were created or analyzed in this study.

\appendix

\section{}\label{app-dictionary}
We present in the form of Table \ref{table1} the dictionary for conversion from the language of pgFCS defined by the pair $(V,\sigma)$ (see Section \ref{subsec-fcs-itrod}) to the language of matrix product states (in diagrammatic notation).

\begin{table*}[tb]
\centering
\begin{tabular}{|c|c|}
\hline
\rule{0pt}{2em}
pgFCS notation & Diagrammatic notation 
\rule[-1.5em]{0pt}{1em}
\\
\hline\hline

$\left(\langle s|\otimes\langle i|\right)V|j\rangle$ &
\rule{0pt}{2em}
\raisebox{-.45\height}{
\begin{tikzpicture}
\node (isom) [roundnode] {M};
          \draw (isom.east) to [edge label=$j$] ++(3mm,0)                                 
                (isom.west) to [edge label'=$i$] ++ (-3mm,0)
                (isom.north) to [edge label=$s$] ++(0,3mm);
\end{tikzpicture}
}
\rule[-1.9em]{0pt}{1em}
\\
\hline

$V^\dagger V=\mathbb{1}_M$ &
\rule{0pt}{3.7em}
\raisebox{-.45\height}{
\begin{tikzpicture}
\node (up) [roundnode] {$M$};
\path ($(up)+(0,-12mm)$) node (dn) [roundnode] {$M^\dagger$};
\path ($0.5*(up.east)+0.5*(dn.east)+(8mm,0)$) node (eq) {{\LARGE $=$}};
\path ($(eq.east)+(4mm,0)$) coordinate (id);
\foreach \x in {up,dn} {\draw[-] (\x.west)--++(-3mm,0);}
\draw[-] (up) -- (dn);
\draw[-] (up.east) -- ++(3mm,0) |- (dn.east);
\draw[-] let \p1 = ($(dn)-(id)$) in (id) -- ++(0,\y1) -- ++(-2mm,0);
\draw[-] let \p1 = ($(up)-(id)$) in (id) -- ++(0,\y1) -- ++(-2mm,0);
\end{tikzpicture}
}
\rule[-3.2em]{0pt}{1em}
\\
\hline

$\langle i|\sigma|j\rangle$ &
\rule{0pt}{2em}
\raisebox{-.45\height}{
\begin{tikzpicture}
\node (sigma) [sigmanode] {$\sigma$};
\draw (sigma.north) to [edge label=$i$] ++(0,3mm)
      (sigma.south) to [edge label'=$j$] ++(0,-3mm);
\end{tikzpicture}
}
\rule[-1.7em]{0pt}{1em}
\\
\hline

$\mathrm{tr}_M(V\sigma V^\dagger)=\mathcal{E}(\sigma)=\sigma$ &
\rule{0pt}{3.7em}
\raisebox{-.45\height}{
\begin{tikzpicture}
\node (up1) [roundnode] {$M$};
\path ($(up1)+(0,-12mm)$) node (dn1) [roundnode] {$M^\dagger$};
\path ($0.5*(up1.west)+0.5*(dn1.west)+(-3mm,0)$) node (sigma)  [sigmanode] {$\sigma$};
\path ($0.5*(up1.east)+0.5*(dn1.east)+(10mm,0)$) node (eq1) {{\LARGE $=$}};
\path ($(eq1.east)+(3mm,0)$) node (sigma2) [anchor=west,sigmanode] {$\sigma$};
\draw[-] (sigma)|-(up1);
\draw[-] (sigma)|-(dn1);
\draw[-] (up1.east)--++(3mm,0);
\draw[-] (dn1.east)--++(3mm,0);
\draw[-] (up1) -- (dn1);
\draw[-] let \p1 = ($(dn1)-(sigma.south)$) in (sigma2.south) -- ++(0,\y1) -- ++(3mm,0);
\draw[-] let \p1 = ($(up1)-(sigma.north)$) in (sigma2.north) -- ++(0,\y1) -- ++(3mm,0);
\end{tikzpicture}
}
\rule[-3.2em]{0pt}{1em}
\\
\hline

$\rho_{ABC} = \mbox{tr}_M \left(V_C V_B V_A \sigma V_A^{\dagger} V_B^{\dagger} V_C^{\dagger} \right)$ &
\rule{0pt}{4.7em}
\raisebox{-.5\height}{
\begin{tikzpicture}
\matrix (rabc) [matrix of nodes, nodes in empty cells, nodes=roundnode, row sep=8mm,column sep=2mm] {
$M$\& $M$\&$M$ \& $M$\&$M$ \&$M$ \& $M$ \\
$M^\dagger$\& $M^\dagger$\& $M^\dagger$\& $M^\dagger$\& $M^\dagger$\& $M^\dagger$\& $M^\dagger$ \\
};

\path ($0.5*(rabc-1-1.west)+0.5*(rabc-2-1.west)+(-5mm,0)$) node (sigma)  [sigmanode] {$\sigma$};

\foreach \x [evaluate=\x as \xprev using \x-1] in {2,3,...,7} {
                            \draw (rabc-1-\x.north) -- ++(0,2mm);
                            \draw (rabc-2-\x.south) -- ++(0,-2mm);
                            \draw (rabc-1-\x) -- (rabc-1-\xprev);
                            \draw (rabc-2-\x) -- (rabc-2-\xprev);}
\draw[-] (rabc-1-1.north) -- ++(0,2mm)
         (rabc-2-1.south) -- ++(0,-2mm)
         (rabc-1-7.east) -- ++ (3mm,0) |- (rabc-2-7.east);
\foreach \x in {1,2} {\draw[-] (sigma)|-(rabc-\x-1.west);}

\draw [decorate,decoration={brace,amplitude=3mm}, thick] ($(rabc-1-2.south east)+(2mm,-1mm)$) --             ($(rabc-1-1.south west)+(-2mm,-1mm)$) node [black,midway,below,yshift=-3mm] {$A$};
\draw [decorate,decoration={brace,amplitude=3mm}, thick] ($(rabc-1-5.south east)+(2mm,-1mm)$) -- ($(rabc-1-3.south west)+(-2mm,-1mm)$) node [black,midway,below,yshift=-3mm] {$B$};
\draw [decorate,decoration={brace,amplitude=3mm}, thick] ($(rabc-1-7.south east)+(2mm,-1mm)$) -- ($(rabc-1-6.south west)+(-2mm,-1mm)$) node [black,midway,below,yshift=-3mm] {$C$};
\end{tikzpicture}
}
\rule[-4.7em]{0pt}{1em}
\\
\hline

\end{tabular}
\caption{The dictionary for conversion from the language of pgFCS defined by the pair $(V,\sigma)$ to the language of matrix product states (in diagrammatic notation)}
\label{table1}
\end{table*}

\section{}\label{app-bound}
In this appendix we derive the right-hand side of the inequality (\ref{qcmi-rec}). In Refs. \onlinecite{universal-rec-1,universal-rec-2, universal-rec-3, kim-mpdo}, the bound is based on the Alicki-Fannes inequality, \cite{alicki-fannes} which is the extension of the Fannes inequality. \cite{fannes-ineq} Here we employ an improved version of the Alicki-Fannes inequality from Ref. \onlinecite{tight-bounds} (we refer the reader to this paper for further references on the topic). 

In order to apply the results of Ref. \onlinecite{tight-bounds}, we first express the QCMI in terms of \emph{quantum relative entropy}, $D(\rho||\sigma):=\mathrm{tr}\rho(\ln\rho-\ln\sigma)$,
\begin{align*}
    I(A:C|B)&=S(\rho_{BC})-S(\rho_{ABC})+S(\rho_{AB})-S(\rho_{B})\\ \nonumber
    &=D(\rho_{ABC}||\rho_{BC})-D(\rho_{AB}||\rho_B).
\end{align*}
By the monotonicity of quantum relative entropy under the action of quantum channels, \cite{rel-entr-mon, watrouse} $D(\mathcal{R}_{B\rightarrow BC}(\rho_{AB})||\mathcal{R}_{B\rightarrow BC}(\rho_B))\leq D(\rho_{AB}||\rho_B)$, thus we can estimate
\begin{equation*}
    I(A:C|B)\leq D(\rho_{ABC}||\rho_{BC})-D\left(\mathcal{R}_{B\rightarrow BC}(\rho_{AB})||\mathcal{R}_{B\rightarrow BC}(\rho_B)\right).
\end{equation*}
To make the result of Ref. \onlinecite{tight-bounds} directly applicable, we recast the above inequality in terms of \emph{conditional entropy}, $S(R_1|R_2)_\rho:=S(\rho_{R_1R_2})-S(\rho_{R_2})$ (we adopt the notation of Ref. \onlinecite{tight-bounds}). We notice that $D(\rho_{ABC}||\rho_{BC})=-S(A|BC)_\rho$ and $D(\mathcal{R}_{B\rightarrow BC}(\rho_{AB})||\mathcal{R}_{B\rightarrow BC}(\rho_B))=-S(A|BC)_{\rho'}$, where $\rho_{ABC}'=\mathcal{R}_{B\rightarrow BC}(\rho_{AB})$. Then
\begin{equation*}
    I(A:C|B)\leq S(A|BC)_{\rho'}-S(A|BC)_{\rho}.
\end{equation*}
We apply Lemma 2 of Ref. \onlinecite{tight-bounds} to obtain the bound
\begin{equation*}
    I(A:C|B)\leq \epsilon'\ln\dim\mathcal{H}_{A}+(1+\epsilon')h\left(\frac{\epsilon'}{1+\epsilon'}\right),
\end{equation*}
where $h(p)=-p\ln p-(1-p)\ln(1-p)$ is the binary entropy and $\epsilon'=\frac{1}{2}\|\rho_{ABC}-\mathcal{R}_{B \rightarrow BC}(\rho_{AB})\|_1$. This bound, unlike the one that follows from the Alicki-Fannes inequality, is applicable for any $0\leq\epsilon'\leq 1$. Notice that we expressed the inequality (\ref{qcmi-rec}) in terms of $\epsilon=\|\rho_{ABC}-\mathcal{R}_{B\rightarrow BC}(\rho_{AB})\|_1=2\epsilon'$. Reverting to our notation,  
\begin{equation*}
    I(A:C|B)\leq \frac{1}{2}\epsilon |A|\ln d_s+(1+\frac{1}{2}\epsilon)h\left(\frac{\epsilon}{2+\epsilon}\right),
\end{equation*}
as presented in (\ref{qcmi-rec}), where we have substituted $\dim\mathcal{H}_A=d_s^{|A|}$.
We can also derive the simplified bound by estimating
\begin{align*}
    (1+\epsilon')h\left(\frac{\epsilon'}{1+\epsilon'}\right)&=-\epsilon'\ln\left(\frac{\epsilon'}{1+\epsilon'}\right)+\ln(1+\epsilon')\\ \nonumber
    &=-\epsilon'\ln\epsilon'+(1+\epsilon')\ln(1+\epsilon')\\ \nonumber
    &\leq \epsilon'(2-\ln\epsilon'),
\end{align*}
where we have used $0\leq\epsilon'\leq 1$ and $\ln(1+\epsilon')\leq\epsilon'$. This leads to the bound for the QCMI,
\begin{equation}\label{qcmi-1-1-norm}
    I(A:C|B)\leq \epsilon\left(\frac{1}{2}|A|\ln d_s+1-\frac{1}{2}\ln\frac{\epsilon}{2}\right),
\end{equation}
where $0< \epsilon <2$, which we use in the proof of Theorem \ref{main-result}.

\section{}\label{disk-setup}
In this appendix we extend Theorem \ref{main-result} for the configuration of the regions $A$, $B$, and $C$ depicted in Figure \ref{fig1}(b). In this case $\rho_{ABC}$ has the form 
\begin{equation*}
    \rho_{ABC}=\mathrm{tr}_M\left(V_{A_2}V_{B_2}V_CV_{B_1}V_{A_1}\sigma V_{A_1}^\dagger V_{B_1}^\dagger V^\dagger_CV^\dagger_{B_2}V^\dagger_{A_2}\right).
\end{equation*}
We introduce approximating isometries $\tilde{V}_{B_1}:\mathcal{H}_M\rightarrow\mathcal{H}_{B_1}\otimes\mathcal{H}_M$ and $\tilde{V}_{B_2}:\mathcal{H}_M\rightarrow\mathcal{H}_{B_2}\otimes\mathcal{H}_M$ in exactly the same way as in Section \ref{sec-proof}, which we use to construct the state
\begin{equation}\label{tilde-rho-disk}
    \tilde{\rho}_{ABC}=\mathrm{tr}_M\left(V_{A_2}\tilde{V}_{B_2}V_C\tilde{V}_{B_1}V_{A_1}\sigma V_{A_1}^\dagger\tilde{V}_{B_1}^\dagger V^\dagger_C\tilde{V}^\dagger_{B_2}V^\dagger_{A_2}\right).
\end{equation}
Then, repeating the calculations in (\ref{error-leq-1-1}) and (\ref{1-1-leq-opnorm}) and using the triangle inequality, we bound the recovery error for the recovery channel $\mathcal{R}_{B\rightarrow BC}:\mathcal{B}(\mathcal{H}_B)\rightarrow\mathcal{B}(\mathcal{H}_B\otimes\mathcal{H}_C)$, for which $\mathcal{R}_{B\rightarrow BC}(\tilde{\rho}_{AB})=\tilde{\rho}_{ABC}$, as
\begin{equation*} 
    \|\rho_{ABC}-\mathcal{R}_{B\rightarrow BC}\left(\rho_{AB}\right)\|_1\leq 4\|V_{B_1}-\tilde{V}_{B_1}\|+4\|V_{B_2}-\tilde{V}_{B_2}\|.
\end{equation*}
For both terms we apply the bound (\ref{isonorm-leq-exp}), which leads to 
\begin{align*}
\|\rho_{ABC}-\mathcal{R}_{B\rightarrow BC}\left(\rho_{AB}\right)\|_1&\leq 4\sqrt{d_Mc} \ \ (\nu^{|B_1|/2}+\nu^{|B_2|/2})\\ \nonumber &\leq 8\sqrt{d_Mc}\nu^{d(A,C)/2},
\end{align*}
where $d(A,C)=\min\{|B_1|,|B_2|\}$. 

We can prove that $\tilde{\rho_{ABC}}$ is a quantum Markov chain using the same reasoning as in Sections \ref{subsec-pgfcs}, \ref{subsec-erg}, and \ref{sect-fcs}. We present the proof only for the case of ergodic pgFCS of period 1. The extensions of the argument to the general case of pgFCS are analogous to the ones of  Sections \ref{subsec-erg} and \ref{sect-fcs}.

Both $\tilde{V}_{B_1}$ and $\tilde{V}_{B_2}$ have the form (\ref{tilde-iso-form}),
\begin{align}\label{isom-disk}
    \tilde{V}_{B_1}&=\sum_{i,j=1}^{d_M}\sqrt{\sigma_i} |\xi_{ij}\rangle \otimes |i\rangle\langle j|, \\ \nonumber
    \tilde{V}_{B_2}&=\sum_{m,n=1}^{d_M}\sqrt{\sigma_m} |\zeta_{mn}\rangle \otimes |m\rangle\langle n|.
\end{align}
The vectors $|\xi_{ij}\rangle$ and $|\zeta_{mn}\rangle$ define the subspaces $\mathcal{H}_{B_1}\supseteq\mathcal{H}_{b_1}:=\mathrm{span}\{|\xi_{ij}\rangle\in\mathcal{H}_{B_1}\;|\;i,j=1,\cdot\cdot\cdot,d_M\}$ and $\mathcal{H}_{B_2}\supseteq\mathcal{H}_{b_2}:=\mathrm{span}\{|\zeta_{mn}\rangle\in\mathcal{H}_{B_1}\;|\;m,n=1,\cdot\cdot\cdot,d_M\}$. We observe that we should demand $\dim\mathcal{H}_{B_1}\geq d_M^2$ and $\dim\mathcal{H}_{B_2}\geq d_M^2$ for the construction to be possible.
Substituting (\ref{isom-disk}) into (\ref{tilde-rho-disk}), we explicitly express
\begin{widetext}
\begin{align*}
    \tilde{\rho}_{ABC}=\sum_{all\, indices=1}^{d_M}&\langle j|V_{A_1}\sigma V_{A_1}^\dagger|j'\rangle\otimes\sqrt{\sigma_i} |\xi_{ij}\rangle|\xi_{i'j'}\rangle\sqrt{\sigma_{i'}}\otimes \langle n|V_C|i\rangle\rangle i'| V^\dagger_C|n'\rangle\otimes\sqrt{\sigma_m} |\zeta_{mn}\rangle\langle\zeta_{m'n'}|\sqrt{\sigma_{m'}}\otimes\mathrm{tr}_M(V_{A_2}|m\rangle\langle m'|V^\dagger_{A_2}).
\end{align*}
\end{widetext}
We introduce the isomorphism $I:\mathcal{H}_{b_1}\otimes\mathcal{H}_{b_2}\rightarrow\mathcal{H}_{M}\otimes\mathcal{H}_{M}\otimes\mathcal{H}_{M}\otimes\mathcal{H}_{M}$, defined by $I|\xi_{ij}\rangle\otimes|\zeta_{mn}\rangle=|j\rangle\otimes|i\rangle\otimes|n\rangle\otimes|m\rangle$, which maps $\tilde{\rho}_{ABC}$ to
\begin{align*}
    I\tilde{\rho}_{ABC}I^\dagger=V_{A_1}\sigma V_{A_1}^\dagger &\otimes V_C\sqrt{\sigma}|+\rangle\langle +|\sqrt{\sigma} V^\dagger_C \\ \nonumber &\otimes \mathrm{tr}_M\left(V_{A_2}\sqrt{\sigma}|+\rangle\langle +|\sqrt{\sigma}V^\dagger_{A_2}\right),
\end{align*}
where $|+\rangle=\sum_{i=1}^{d_M}|i\rangle\otimes|i\rangle\in\mathcal{H}_M\otimes\mathcal{H}_M$. By Proposition \ref{qmc-structure} this state is a quantum Markov chain.

\section{}\label{qmi-examples}
In this appendix we construct explicit examples of pgFCS, for which QMI converges to a finite limit as the size of the region $B$ is increased, $\lim_{|B|\rightarrow+\infty}I(A:C)>0$. By contrast, Theorem \ref{main-result} proves that QCMI converges to zero for any pgFCS.

To construct the examples we use two isometries. The first one is 
$V_1:\mathbb{C}^2\rightarrow\mathbb{C}^3\otimes\mathbb{C}^2$, defined by 
\begin{align}\label{v1-expl}
    V_1=&\frac{1}{\sqrt{2}}|1\rangle\otimes|-\rangle\langle+|-\frac{1}{\sqrt{2}}|0\rangle\otimes|+\rangle\langle+|\\ \nonumber &+\frac{1}{\sqrt{2}}|0\rangle\otimes|-\rangle\langle -|-\frac{1}{\sqrt{2}}|-1\rangle\otimes|+\rangle\langle -|,
\end{align}
where $\{|-1\rangle,|0\rangle,|1\rangle\}$ is an orthonormal basis in $\mathbb{C}^3$, and $\{|-\rangle,|+\rangle\}$ is an orthonormal basis in $\mathbb{C}^2$.
This isometry is of the type defined in Ref. \onlinecite{fcs}, that generates the ground state of the generalized AKLT model. We define $\mathcal{E}_1:\mathcal{B}(\mathbb{C}^2)\rightarrow\mathcal{B}(\mathbb{C}^2)$ by $\mathcal{E}_1(X)=\mathrm{tr}_3(V_1XV_1^\dagger)$, where $\mathrm{tr}_3$ is the partial trace over $\mathbb{C}^3$. For the current discussion it is important that the pair $(V_1,\mathbb{1}_2/2)$ generates an ergodic pgFCS of period 1. Hence, $\tilde{\mathcal{E}}_1$, obtained from $\mathcal{E}_1$ using Definition \ref{e-tilde-e} satisfies $\tilde{\mathcal{E}}_1(X)=\mathrm{tr}(X)\mathbb{1}_2/2$. We obtain the second isometry, $V_2:\mathbb{C}^2\rightarrow\mathbb{C}^3\otimes\mathbb{C}^2$, from $V_1$ by performing the cyclic permutation $|-1\rangle\mapsto|0\rangle\mapsto|1\rangle\mapsto|-1\rangle$, 
\begin{align}\label{v2-expl}
    V_2=&\frac{1}{\sqrt{2}}|-1\rangle\otimes|-\rangle\langle+|-\frac{1}{\sqrt{2}}|1\rangle\otimes|+\rangle\langle+|\\ \nonumber 
    &+\frac{1}{\sqrt{2}}|1\rangle\otimes|-\rangle\langle -|-\frac{1}{\sqrt{2}}|0\rangle\otimes|+\rangle\langle -|.
\end{align}
We observe $\mathrm{tr}_3(V_2XV_2^\dagger)=\mathrm{tr}_3(V_1\mathbb{1}_2V_1^\dagger)=\mathcal{E}_1(X)$ for any $X\in\mathcal{B}(\mathcal{H}_M)$.

Our first example is a pgFCS with two ergodic components. For a Hilbert space of the spin $\mathcal{H}_s:=\mathbb{C}^3$ and a memory Hilbert space $\mathcal{H}_M:=\mathbb{C}^4\cong\mathbb{C}^2\oplus\mathbb{C}^2$, the quantum state is generated by
\begin{equation*}
    V=\begin{pmatrix}
            V_1 & 0\\
            0 & V_2
        \end{pmatrix},
    \quad    
    \sigma=\frac{1}{4}\begin{pmatrix}
                        \mathbb{1}_2 & 0\\
                        0 & \mathbb{1}_2
                        \end{pmatrix}.
\end{equation*}
It is obvious that this state by construction has two ergodic components (see Proposition \ref{gen-pgfcs}). We choose the regions $A$ and $C$ to each consist of a single spin, i.e., $\mathcal{H}_A,\mathcal{H}_C=\mathbb{C}^2$. In the limit $|B|\rightarrow+\infty$, 
\begin{align*}
    \tilde{\rho}_{AC}&:=\lim_{|B|\rightarrow+\infty}\mathrm{tr}_M(V\mathrm{tr}_B(V_B V\sigma V^\dagger V_B^\dagger)V^\dagger)\\ \nonumber
    &=\mathrm{tr}_M(V\tilde{\mathcal{E}}(V\sigma V^\dagger)V^\dagger)\\ \nonumber
    &=\frac{1}{8}\mathrm{tr}_2(V_1V^\dagger_1)\otimes\mathrm{tr}_2(V_1V^\dagger_1)+\frac{1}{8}\mathrm{tr}_2(V_2V^\dagger_2)\otimes\mathrm{tr}_2(V_2V^\dagger_2),
\end{align*}
where $\mathrm{tr}_2$ is the partial trace over $\mathbb{C}^2$. In the second line $\tilde{\mathcal{E}}$ is the channel obtained from $\mathcal{E}:\mathbb{C}^4\rightarrow\mathbb{C}^4$, $\mathcal{E}(X)=\mathrm{tr}_3(VXV^\dagger)$, via Definition \ref{e-tilde-e}, which acts as 
\begin{align*}
    \tilde{\mathcal{E}}\begin{pmatrix}
                            X_{11} & X_{12}\\ 
                            X_{21} & X_{22}
                       \end{pmatrix}
                       &=
                       \begin{pmatrix}
                            \tilde{\mathcal{E}}_1(X_{11}) & 0\\ 
                            0 & \tilde{\mathcal{E}}_1(X_{22})
                       \end{pmatrix}
                       \\ \nonumber
                       &=
                       \begin{pmatrix}
                            \frac{1}{2}\mathrm{tr}_2(X_{11})\mathbb{1}_2 & 0\\ 
                            0 & \frac{1}{2}\mathrm{tr}_2(X_{22})\mathbb{1}_2
                       \end{pmatrix}.
\end{align*}
Then 
\begin{align*}
    \tilde{\rho}_{A}\otimes\tilde{\rho}_{C}=\rho_{A}\otimes\rho_{C}=\frac{1}{16}&\left(\mathrm{tr}_2(V_1V^\dagger_1)+\mathrm{tr}_2(V_2V^\dagger_2)\right)\\ \nonumber 
    &\otimes \left(\mathrm{tr}_2(V_1V^\dagger_1)+\mathrm{tr}_2(V_2V^\dagger_2)\right).
\end{align*}
Using the explicit expressions (\ref{v1-expl}) and (\ref{v2-expl}), we calculate
\begin{align*}
    \tilde{\rho}_{AC}=&\frac{1}{32}\begin{pmatrix}
                            1 & 0 & 0\\
                            0 & 2 & 0\\
                            0 & 0 & 1
                          \end{pmatrix}
                          \otimes
                          \begin{pmatrix}
                            1 & 0 & 0\\
                            0 & 2 & 0\\
                            0 & 0 & 1
                          \end{pmatrix} \\ \nonumber
                          &+
                          \frac{1}{32}
                          \begin{pmatrix}
                            1 & 0 & 0\\
                            0 & 1 & 0\\
                            0 & 0 & 2
                          \end{pmatrix}
                          \otimes
                          \begin{pmatrix}
                            1 & 0 & 0\\
                            0 & 1 & 0\\
                            0 & 0 & 2
                          \end{pmatrix}, \\ \nonumber
    \tilde{\rho}_A\otimes\tilde{\rho}_C=&\frac{1}{64}\begin{pmatrix}
                            2 & 0 & 0\\
                            0 & 3 & 0\\
                            0 & 0 & 3
                          \end{pmatrix}
                          \otimes
                          \begin{pmatrix}
                            2 & 0 & 0\\
                            0 & 3 & 0\\
                            0 & 0 & 3
                          \end{pmatrix}.
\end{align*}
We observe that $\tilde{\rho}_{AC}\neq\tilde{\rho}_A\otimes\tilde{\rho}_C$, which implies that $I(A:C)\neq 0$. We verify this by explicitly calculating $I(A:C)={17}\ln 2/{16} -9 \ln 3/8+5 \ln 5/16\approx 0.0035$.

Now we construct an ergodic pgFCS of period $2$ for which QMI does not converge to zero. We use the same setup as above, changing only the elementary isometry generating the state to
\begin{equation*}
    V=\begin{pmatrix}
            0 & V_1\\
            V_2 & 0
        \end{pmatrix},
\end{equation*}
where $V_1$ and $V_2$ are the same isometries (\ref{v1-expl}) and (\ref{v2-expl}), respectively. One can check that the only eigenvalues of absolute value $1$ of $\mathcal{E}(X)=\mathrm{tr}_3(VXV^\dagger)$ are $1$ and $-1$, corresponding to $\left(\begin{smallmatrix}\mathbb{1}_2/4 & 0\\ 0 & \mathbb{1}_2/4 \end{smallmatrix}\right)$ and $\left(\begin{smallmatrix}\mathbb{1}_2/4 & 0\\ 0 & -\mathbb{1}_2/4 \end{smallmatrix}\right)$, respectively. Hence the generated state is indeed an ergodic pgFCS with period 2. To determine the asymptotic behavior of $\rho_{AC}$ we explicitly express it in terms of $V_1$ and $V_2$,
\begin{equation*}
    \rho_{ABC}=
    \begin{cases}
        \begin{aligned}[t]
        &\frac{1}{4}\mathrm{tr}_2(V_1V_2\cdot\cdot\cdot V_2V_1 \frac{\mathbb{1}_2}{2}V^\dagger_1V^\dagger_2\cdot\cdot\cdot V^\dagger_2V^\dagger_1) \\
        &+\frac{1}{4}\mathrm{tr}_2(V_2V_1\cdot\cdot\cdot V_1V_2 \frac{\mathbb{1}_2}{2}V^\dagger_2V^\dagger_1\cdot\cdot\cdot V^\dagger_1V^\dagger_2),\\ &\quad\quad\quad \text{if} \ \ |B|\,\mbox{mod}\, 2 \equiv 1,
        \end{aligned}   \\ 
        \begin{aligned}[t]
        &\frac{1}{4}\mathrm{tr}_2(V_2V_1\cdot\cdot\cdot V_2V_1 \frac{\mathbb{1}_2}{2}V^\dagger_1V^\dagger_2\cdot\cdot\cdot V^\dagger_1V^\dagger_2)\\
        &+\frac{1}{4}\mathrm{tr}_2(V_1V_2\cdot\cdot\cdot V_1V_2 \frac{\mathbb{1}_2}{2}V^\dagger_2V^\dagger_1\cdot\cdot\cdot V^\dagger_2V^\dagger_1),\\  
        &\quad\quad\quad \text{if} \ \ |B|\,\mbox{mod}\, 2 \equiv 0.  \end{aligned} 
    \end{cases}
\end{equation*}
Since $V_2=U_sV_1$ for a unitary $U_s:\mathbb{C}^3\rightarrow \mathbb{C}^3$, then $\mathrm{tr}_3(V_1 X V_1^\dagger)=\mathrm{tr}_3(V_2 X V^\dagger_2)$ for any $X\in\mathcal{B}(\mathcal{H}_M)$. Notice that for any choice of the indices $\{i_k\in\{1,2\}\;|\;k=1,2,\cdot\cdot\cdot,|B|\}$, we have $\lim_{|B|\rightarrow +\infty}\mathrm{tr}_B({V_{i_{|B|}}}\cdot\cdot\cdot V_{i_2}V_{i_1}X V_{i_1}^\dagger V_{i_2}^\dagger\cdot\cdot\cdot V^\dagger_{i_{|B|}})=\lim_{|B|\rightarrow +\infty}\mathcal{E}_1^{|B|}(X)=\tilde{\mathcal{E}}_1^{|B|}(X)=\mathrm{tr}(X)\mathbb{1}_2/2$. This implies that in the asymptotic limit $|B|\rightarrow +\infty$, the state $\rho_{AC}$ oscillates between two states, 
\begin{align*}
    \tilde{\rho}^{(1)}_{AC}&=\frac{1}{8}\mathrm{tr}_2(V_1V^\dagger_1)\otimes\mathrm{tr}_2(V_1V^\dagger_1)+\frac{1}{8}\mathrm{tr}_2(V_2V^\dagger_2)\otimes\mathrm{tr}_2(V_2V^\dagger_2), \\ \nonumber
    \tilde{\rho}^{(2)}_{AC}&=\frac{1}{8}\mathrm{tr}_2(V_1V^\dagger_1)\otimes\mathrm{tr}_2(V_2V^\dagger_2)+\frac{1}{8}\mathrm{tr}_2(V_2V^\dagger_2)\otimes\mathrm{tr}_2(V_1V^\dagger_1).
\end{align*}
For both of these states,
\begin{align*}
    &\tilde{\rho}^{(1)}_{A}\otimes\tilde{\rho}^{(1)}_{C}=\tilde{\rho}^{(2)}_{A}\otimes\tilde{\rho}^{(2)}_{C} \\ \nonumber
    &=\frac{1}{16}\left(\mathrm{tr}_2(V_1V^\dagger_1)+\mathrm{tr}_2(V_2V^\dagger_2)\right)\otimes \left(\mathrm{tr}_2(V_1V^\dagger_1)+\mathrm{tr}_2(V_2V^\dagger_2)\right).
\end{align*}
Using the explicit expressions (\ref{v1-expl}) and (\ref{v2-expl}), we obtain 
\begin{align*}
    &\tilde{\rho}^{(1)}_{AC}=\frac{1}{32}\begin{pmatrix}
                            1 & 0 & 0\\
                            0 & 2 & 0\\
                            0 & 0 & 1
                          \end{pmatrix}
                          \otimes
                          \begin{pmatrix}
                            1 & 0 & 0\\
                            0 & 2 & 0\\
                            0 & 0 & 1
                          \end{pmatrix}\\ \nonumber
                          &\quad\quad\quad +
                          \frac{1}{32}
                          \begin{pmatrix}
                            1 & 0 & 0\\
                            0 & 1 & 0\\
                            0 & 0 & 2
                          \end{pmatrix}
                          \otimes
                          \begin{pmatrix}
                            1 & 0 & 0\\
                            0 & 1 & 0\\
                            0 & 0 & 2
                          \end{pmatrix}, \\ \nonumber
    &\tilde{\rho}^{(2)}_{AC}=\frac{1}{32}\begin{pmatrix}
                            1 & 0 & 0\\
                            0 & 2 & 0\\
                            0 & 0 & 1
                          \end{pmatrix}
                          \otimes
                          \begin{pmatrix}
                            1 & 0 & 0\\
                            0 & 1 & 0\\
                            0 & 0 & 2
                          \end{pmatrix}\\ \nonumber
                          &\quad\quad\quad+
                          \frac{1}{32}
                          \begin{pmatrix}
                            1 & 0 & 0\\
                            0 & 1 & 0\\
                            0 & 0 & 2
                          \end{pmatrix}
                          \otimes
                          \begin{pmatrix}
                            1 & 0 & 0\\
                            0 & 2 & 0\\
                            0 & 0 & 1
                          \end{pmatrix}, \\ \nonumber
    &\tilde{\rho}^{(1)}_{A}\otimes\tilde{\rho}^{(1)}_{C}=\tilde{\rho}^{(2)}_{A}\otimes\tilde{\rho}^{(2)}_{C}=\frac{1}{64}
                        \begin{pmatrix}
                            2 & 0 & 0\\
                            0 & 3 & 0\\
                            0 & 0 & 3
                          \end{pmatrix}
                          \otimes
                          \begin{pmatrix}
                            2 & 0 & 0\\
                            0 & 3 & 0\\
                            0 & 0 & 3
                          \end{pmatrix}.
\end{align*}
An explicit calculation shows that $\tilde{\rho}^{(1)}_{AC}$ and $\tilde{\rho}^{(2)}_{AC}$ have the same QMI, calculated in the previous example, $I(A:C)={17}\ln 2/{16} -9\ln 3/8+5\ln 5/16\approx 0.0035$. Thus, while for this ergodic pgFCS with period 2 QMI actually converges, it still does not converge to zero.

\section{}\label{app-b}
In this appendix we present a pair of lemmata for completeness.

The first lemma is a restatement of the Lemma A.2 from Ref.\onlinecite{mpdsrenorm} in a slightly different language.
\begin{lemma}[Lemma A.2 of Ref. \onlinecite{mpdsrenorm}]\label{b-1}
Let isometries $V_1:\mathcal{H}_M\rightarrow\mathcal{H}_s\otimes\mathcal{H}_M$ and $V_{2}:\mathcal{H}_M\rightarrow\mathcal{H}_s\otimes\mathcal{H}_M$ be such that full-rank density operators $\sigma_1, \sigma_2\in\mathcal{B}(\mathcal{H}_M)$ are the only fixed points of the quantum channels $X\mapsto\mathrm{tr}_s(V_1 X V^{\dagger}_1)$ and $X\mapsto\mathrm{tr}_s(V_2X V^{\dagger}_2)$, $X\in\mathcal{B}(\mathcal{H}_M)$, respectively, and that these quantum channels have no other eigenvalues of magnitude 1. Then the eigenvalues $\nu_i$ of the map $X\mapsto\mathrm{tr}_s(V_1 X V^{\dagger}_2)$ are such that $|\nu_i|\leq 1.$ Furthermore, if there exists the eigenvalue $\nu_i$ of magnitude $1$, then $V_1=e^{-i\phi}(\mathbb{1}_s\otimes W^\dagger)V_2W$ for some $\phi\in\mathbb{R}$ and unitary $W:\mathcal{H}_M\rightarrow\mathcal{H}_M$.
\end{lemma}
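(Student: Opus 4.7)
My plan is to exhibit the map $\Phi_{12}(X):=\mathrm{tr}_s(V_1XV_2^\dagger)$ as one block of a bona fide quantum channel $\mathcal{F}$ on an enlarged Hilbert space, so that the universal bound on the spectral radius of quantum channels yields $|\nu_i|\le 1$ automatically; the equality case is then extracted from the peripheral-spectrum structure of $\mathcal{F}$. Concretely, I would define the composite isometry $V:\mathcal{H}_M\otimes\mathbb{C}^2\to\mathcal{H}_s\otimes\mathcal{H}_M\otimes\mathbb{C}^2$ by $V(|\psi\rangle\otimes|i\rangle):=V_i|\psi\rangle\otimes|i\rangle$ for $i\in\{1,2\}$; since $V_1,V_2$ are isometries, so is $V$, and the induced channel $\mathcal{F}(Y):=\mathrm{tr}_s(VYV^\dagger)$ acts block-diagonally on matrix units: $\mathcal{F}(X\otimes|i\rangle\langle j|)=\mathrm{tr}_s(V_iXV_j^\dagger)\otimes|i\rangle\langle j|$. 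The spectrum of $\Phi_{12}$ is therefore contained in the spectrum of $\mathcal{F}$, giving $|\nu_i|\le 1$ and settling the first claim.

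Now assume $\Phi_{12}(X)=\nu X$ with $|\nu|=1$ and $X\neq 0$, so that $Y:=X\otimes|1\rangle\langle 2|$ is a peripheral eigenvector of $\mathcal{F}$, and note that $\tau:=\tfrac{1}{2}(\sigma_1\otimes|1\rangle\langle 1|+\sigma_2\otimes|2\rangle\langle 2|)$ is a full-rank fixed point of $\mathcal{F}$. I would invoke the standard structure theorem for peripheral eigenvectors of a quantum channel with a faithful fixed point (obtained from the Kadison--Schwarz inequality for the unital adjoint $\mathcal{F}^*$ combined with the strict positivity of $\tau$): $\mathcal{F}(YY^\dagger)=YY^\dagger$ and $\mathcal{F}(Y^\dagger Y)=Y^\dagger Y$. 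Reading off the matrix-unit components produces the clean operator identities $\Phi_1(XX^\dagger)=XX^\dagger$ and $\Phi_2(X^\dagger X)=X^\dagger X$, and by the assumed uniqueness of the fixed points of $\Phi_1,\Phi_2$ one concludes $XX^\dagger=c\sigma_1$ and $X^\dagger X=c\sigma_2$ for a common constant $c>0$.

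Polar-decomposing $X$ then forces $X=\sqrt{c}\,W\sigma_2^{1/2}=\sqrt{c}\,\sigma_1^{1/2}W$ for a unitary $W:\mathcal{H}_M\to\mathcal{H}_M$ satisfying $W\sigma_2W^\dagger=\sigma_1$, the full rank of $\sigma_1,\sigma_2$ together with the shared constant $c$ being exactly what promotes $W$ from a partial isometry to a unitary. Substituting this form of $X$ back into $\Phi_{12}(X)=\nu X$, cancelling $\sigma_1^{1/2}$ on the left, and rearranging yields an operator identity of the form $\sum_s F_sG_s=\nu\,\mathbb{1}_M$ for suitable families $\{F_s\},\{G_s\}$ built from the Kraus operators $A^{(i)}_s:=(\langle s|\otimes\mathbb{1}_M)V_i$, the unitary $W$, and the square roots $\sigma_2^{\pm 1/2}$. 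Tracking the equality case in the operator Cauchy--Schwarz inequality applied to this sum, and using the full rank of $\sigma_1,\sigma_2$, forces a pointwise-in-$s$ proportionality $F_s=\nu G_s^\dagger$; unpacking gives $A^{(1)}_s=e^{-i\phi}W^\dagger A^{(2)}_s W$ with $e^{i\phi}=\nu$, equivalent to $V_1=e^{-i\phi}(\mathbb{1}_s\otimes W^\dagger)V_2W$.

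The main obstacle is the invocation of the peripheral-eigenvector structure in the second paragraph: promoting the mere spectral condition $\mathcal{F}(Y)=\nu Y$ to the sharp operator identities $\mathcal{F}(YY^\dagger)=YY^\dagger$ and $\mathcal{F}(Y^\dagger Y)=Y^\dagger Y$ depends crucially on both the unitality of $\mathcal{F}^*$ (needed for Kadison--Schwarz to apply) and the strict positivity of $\tau$ (to convert $\mathrm{tr}(\tau\Delta)=0$ with $\Delta\ge 0$ into $\Delta=0$). Once these identities are in hand, the polar-decomposition step is essentially algebra and the final Kraus-level extraction is a careful bookkeeping of the equality case in the operator Cauchy--Schwarz inequality, where the full rank of the fixed points is precisely what rules out merely averaged, as opposed to pointwise, proportionality.
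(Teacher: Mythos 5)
Your first step is sound and genuinely different from the paper: embedding $\Phi_{12}(X)=\mathrm{tr}_s(V_1XV_2^\dagger)$ as the off-diagonal block of the channel $\mathcal{F}$ induced by the block isometry $V$ does give $|\nu_i|\le 1$ at once, whereas the paper obtains this bound from a weighted Cauchy--Schwarz estimate with weight $\sigma_1^{-1}$.

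The equality case, however, contains a genuine gap. The structure result you invoke --- that a peripheral eigenvector $Y$ of a channel with a faithful fixed point satisfies $\mathcal{F}(YY^\dagger)=YY^\dagger$ and $\mathcal{F}(Y^\dagger Y)=Y^\dagger Y$ --- is a theorem about the \emph{unital adjoint} $\mathcal{F}^*$ and \emph{its} peripheral eigenvectors (that is where Kadison--Schwarz applies); it is false for eigenvectors of the trace-preserving map $\mathcal{F}$ itself, and your parenthetical justification conflates the two. Concretely, your downstream conclusions $XX^\dagger=c\,\sigma_1$ and $X^\dagger X=c\,\sigma_2$ already fail in the simplest instance $V_1=V_2=V$ with a non-maximally-mixed fixed point $\sigma$: there $\nu=1$ is a peripheral eigenvalue of $\Phi_{12}=\mathcal{E}$ with eigenvector $X=\sigma$, and $XX^\dagger=\sigma^2$ is neither proportional to $\sigma$ nor fixed by $\mathcal{E}$ (more generally, when the lemma's conclusion holds the eigenvector has the form $X=W^\dagger\sigma_2$ with $W$ unitary, so $X^\dagger X=\sigma_2^2$). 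The paper avoids this by extracting the equality condition of its Cauchy--Schwarz step directly, which yields $W=\sigma_1^{-1}X$ and then shows $WW^\dagger$ is fixed by the adjoint of $\Phi_1$, hence equals $\mathbb{1}_M$. Your construction can be repaired without abandoning the block channel by passing to the adjoint: $\bar\nu$ is a peripheral eigenvalue of $\Phi_{12}^*(A)=V_1^\dagger(\mathbb{1}_s\otimes A)V_2$, so $A\otimes|1\rangle\langle 2|$ is a peripheral eigenvector of the unital map $\mathcal{F}^*$, and Kadison--Schwarz together with faithfulness of $\tau$ gives $\Phi_2^*(A^\dagger A)=A^\dagger A$ and $\Phi_1^*(AA^\dagger)=AA^\dagger$; uniqueness of the fixed point $\mathbb{1}_M$ of these unital maps makes $A$ proportional to a unitary $W$, and then $V_1^\dagger\bigl((\mathbb{1}_s\otimes W)V_2W^\dagger\bigr)=\bar\nu\,\mathbb{1}_M$ forces $(\mathbb{1}_s\otimes W)V_2W^\dagger=\bar\nu V_1$, since both operators in the inner product are isometries and $|\nu|=1$ saturates Cauchy--Schwarz vector by vector. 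That corrected route recovers the stated relation and is in fact cleaner than the polar-decomposition and Kraus-level bookkeeping sketched in your third paragraph.
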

\begin{proof}
Let $\nu_i$ be an eigenvalue, so that for some $X$:
\begin{equation*}
    \mbox{tr}_s(V_{1}X V^{\dagger}_{2})=\nu_i X.
\end{equation*}
Using the fact that $\sigma_1$ is invertible, consider:
\begin{align}\label{eigenvleq1}
    &\left|\nu_i\mbox{tr}(X^\dagger \sigma_1^{-1}X)\right|^2=\left|\mbox{tr}(X^\dagger \sigma_1^{-1} \mbox{tr}_s(V_1X V^{\dagger}_2))\right|^2\\ \nonumber
    &=\left|\mbox{tr}((\mathbb{1}_s\otimes X^\dagger \sigma_1^{-1}) V_1{\sigma_1}^{1/2}\sigma_1^{-1/2}X V^{\dagger}_2 )\right|^2 \\ \nonumber
    &\leq\mbox{tr}(X^\dagger\sigma_1^{-1}\mbox{tr}_s(V_1{\sigma_1}V^{\dagger}_1)\sigma_1^{-1}X)\mbox{tr}(V_2 X^\dagger \sigma_1^{-1}X V^{\dagger}_2 )\\ \nonumber
    &\leq \mbox{tr}(X^\dagger\sigma_1^{-1} X)^2,
\end{align}
where we have used the Cauchy–Schwarz inequality for the Hilbert-Schmidt inner product $|\mathrm{tr}(A^\dagger B)|^2\leq\mathrm{tr}(A^\dagger A)\mathrm{tr}(B^\dagger B)$ to get from the second line to the third. Thus $|\nu_i|\leq 1$. Moreover $|\nu_i|=1$ only in the case of equality in the Cauchy–Schwarz inequality, which implies: 
\begin{equation*}
    \sigma_1^{-1/2}X V^{\dagger}_2=e^{i\phi}{\sigma_1}^{1/2} V^{\dagger}_1(\mathbb{1}_s\otimes \sigma_1^{-1}X),
\end{equation*}
where $\phi\in\mathbb{R}$.
Using that $\sigma_1^{-1/2}$ is invertible and denoting $W=\sigma_1^{-1}X$, we get:
\begin{equation*}
    W V^{\dagger}_2=e^{i\phi}V^{\dagger}_1(\mathbb{1}_s\otimes W).
\end{equation*}
Using that $V_2^\dagger V_2=\mathbb{1}_M$, we obtain:
\begin{equation*}
    WW^\dagger=V_1^\dagger(\mathbb{1}_s\otimes WW^\dagger)V_1.
\end{equation*}
This means that $WW^\dagger$ is the eigenvector with eigenvalue $1$ of the map $X\mapsto V_1^\dagger(\mathbb{1}_s \otimes X)V_1$. The latter map is adjoint to the quantum channel $X\mapsto\mathrm{tr}_s(V_1XV_1^\dagger),$ with respect to the Hilbert-Schmidt inner product, therefore it has the same spectrum. Hence the adjoint map has only one eigenvector with eigenvalue $1$.  Since $V_1$ is an isometry, $V_1^\dagger(\mathbb{1}_s\otimes \mathbb{1}_M)V_1=\mathbb{1}_M$, the identity operator $\mathbb{1}_M$ is this eigenvector, and $W^\dagger W=\mathbb{1}_M$. Replacing $\mathrm{tr}(X^\dagger\sigma_1^{-1}X)$ with  $\mathrm{tr}(X\sigma_2^{-1}X^\dagger)$ in (\ref{eigenvleq1}) leads through the same reasoning to $W W^\dagger=\mathbb{1}_M$. Thus $W$ is unitary and
\begin{equation*}
    V_2=e^{-i\phi}(\mathbb{1}_s\otimes W^\dagger)V_1W.
\end{equation*}
\end{proof}

The second lemma is a restatement of the result used in Ref.\onlinecite{fcs}, the formal proof of which can be found in Ref.\onlinecite{wolf-szehr}. We present the proof using our notation.
\begin{lemma}[Theorem III.2 of Ref. \onlinecite{wolf-szehr}, Ref. \onlinecite{fcs}]\label{b-2-app}
Let $\mathcal{E}:\mathcal{B}(\mathcal{K})\rightarrow\mathcal{B}(\mathcal{K})$ be a map with the spectral radius $1$, and let $\tilde{\mathcal{E}}$ be the map obtained from $\mathcal{E}$ as described in Definition \ref{e-tilde-e}. Then for any $\nu\in\mathbb{R}$, such that $\nu_{\mathrm{gap}}<\nu<1$, where $\nu_{\mathrm{gap}}:=\max_{|\nu_i|<1}|\nu_i|$, there exist the constant $c>0$, depending on $\nu$, such that
$$\|\mathcal{E}^n-\tilde{\mathcal{E}}^n\|_{2\rightarrow 2}\leq c\nu^n.$$
\end{lemma}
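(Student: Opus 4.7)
The plan is to vectorize and reduce the problem to bounding powers of an operator on $\mathcal{K}\otimes\mathcal{K}$. Set $E:=\mathrm{vec}(\mathcal{E})$; since vectorization is an isometric isomorphism between $(\mathcal{B}(\mathcal{K}),\|\cdot\|_2)$ and $(\mathcal{K}\otimes\mathcal{K},\|\cdot\|_2)$, one has $\|\mathcal{E}^n-\tilde{\mathcal{E}}^n\|_{2\to 2}=\|E^n-\tilde{E}^n\|_{\mathrm{op}}$. Using the Jordan decomposition $E=\sum_i(\nu_i P_{\nu_i}+N_{\nu_i})$ from Definition \ref{e-tilde-e}, I would split $E=\tilde{E}+E'$ with $E':=\sum_{|\nu_i|<1}(\nu_i P_{\nu_i}+N_{\nu_i})$. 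Because distinct spectral projectors annihilate one another and each $N_{\nu_i}$ is supported inside the generalized eigenspace of $\nu_i$, the operators $\tilde{E}$ and $E'$ commute and satisfy $\tilde{E}E'=E'\tilde{E}=0$. By induction this gives $E^n-\tilde{E}^n=E'^n$, collapsing the problem to estimating $\|E'^n\|_{\mathrm{op}}$.

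The key structural step is to argue that the peripheral nilpotents vanish, i.e.\ $N_{\nu_i}=0$ whenever $|\nu_i|=1$. If a nontrivial Jordan block existed at some peripheral $\nu_i$ of block size $k\ge 2$, expanding $(\nu_i P_{\nu_i}+N_{\nu_i})^n$ by the binomial theorem would produce a term proportional to $\binom{n}{k-1}\nu_i^{\,n-k+1}N_{\nu_i}^{k-1}$, whose norm grows polynomially in $n$. This contradicts the uniform boundedness of $\|E^n\|_{\mathrm{op}}$, which holds because $\mathcal{E}$ is (or derives from) a quantum channel and is therefore contractive in the induced trace norm on the finite-dimensional space $\mathcal{B}(\mathcal{K})$, on which all relevant norms are equivalent. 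Consequently $\tilde{E}^n=\sum_{|\nu_i|=1}\nu_i^{\,n} P_{\nu_i}$ and only the non-peripheral blocks contribute to $E'^n$.

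Finally I would expand each non-peripheral block,
\[
E'^n=\sum_{|\nu_i|<1}\sum_{j=0}^{k_i-1}\binom{n}{j}\nu_i^{\,n-j}N_{\nu_i}^{\,j},
\]
and use $|\nu_i|\le \nu_{\mathrm{gap}}<\nu$ together with the coarse bound $\binom{n}{j}\le n^{K}$, where $K$ is the maximum Jordan block size of the non-peripheral part. The resulting prefactor $n^{K}(\nu_{\mathrm{gap}}/\nu)^n$ is bounded uniformly in $n$ since $\nu_{\mathrm{gap}}/\nu<1$, so absorbing it together with the finitely many constants $\|P_{\nu_i}\|$ and $\|N_{\nu_i}^{\,j}\|$ into a single $c=c(\nu)>0$ yields $\|E'^n\|_{\mathrm{op}}\le c\,\nu^n$. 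Translating back through the vectorization isomorphism then gives the claimed bound. The main obstacle is the middle step: the spectral-radius-$1$ hypothesis by itself does not preclude polynomial growth of $\|E^n\|$, and the quantum-channel origin of $\mathcal{E}$ (ensuring that iterates are uniformly bounded) is essential in order to force the peripheral nilpotents to vanish; once this is secured, the rest reduces to standard Jordan-form estimation.
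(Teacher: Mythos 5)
Your proposal is correct and follows essentially the same route as the paper's Appendix \ref{app-b} proof: vectorize so that $\|\mathcal{E}^n-\tilde{\mathcal{E}}^n\|_{2\rightarrow 2}=\|E^n-\tilde{E}^n\|$, use the Jordan block structure so the difference reduces to the non-peripheral blocks, expand those blocks binomially, and absorb the polynomial factor $n^{K}(\nu_{\mathrm{gap}}/\nu)^n$ into a $\nu$-dependent constant. The only real difference is that you supply the boundedness-of-iterates argument for the vanishing of peripheral nilpotents (the paper cites Ref.\ \onlinecite{fcs} for this), and in fact that step is not needed for this particular bound, since the peripheral blocks cancel identically in $E^n-\tilde{E}^n$ whether or not their nilpotents vanish.
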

\begin{proof}
Following the procedure described in Appendix \ref{app-vec}, we represent the maps $\mathcal{E}$ and $\tilde{\mathcal{E}}$ with the operators $E$ and $\tilde{E}$, respectively, that inherit the spectra of the maps. 

As is shown in Appendix \ref{app-vec},
\begin{equation}\label{app-b-norm}
    \|\mathcal{E}^n-\tilde{\mathcal{E}}^n\|_{2\rightarrow 2}=\|E^n-\tilde{E}^n\|.
\end{equation}
Then to prove the statement of the lemma we can estimate $\|E^n-\tilde{E}^n\|$.

We make the Jordan decomposition of $E$,
\begin{equation*}
    E=\sum_{i}(\nu_iP_{\nu_i}+N_{\nu_i}),
\end{equation*}
where $\nu_i$ is an eigenvalue, $P_{\nu_i}$ is the projector onto the subspace corresponding to the eigenvalue $\nu_i$, and $N_{\nu_i}$ is a nilpotent operator with index $(K_i+1)$, $N_{\nu_i}^{K_i+1}=0$. The Jordan decomposition for $\tilde{E}$ can be obtained by setting all $\nu_i$, except for the ones of magnitude 1, to 0,
\begin{equation*}
    \tilde{E}=\sum_{|\nu_i|=1}(\nu_iP_{\nu_i}+N_{\nu_i}),
\end{equation*}

From the properties of the operators $P_{\nu_i}$ and $N_{\nu_i}$ in the Jordan decomposition, $P_{\nu_i}P_{\nu_j}=\delta_{ij}P_{\nu_i}$, $P_{\nu_i}N_{\nu_j}=N_{\nu_j}P_{\nu_i}=\delta_{ij}N_{\nu_i}$, and $N_{\nu_i}N_{\nu_k}=\delta_{ik}N_{\nu_i}^2$, it follows that we can express $E^n$ as 
\begin{equation*}
    E^n=\sum_{i} ({\nu_i}P_{\nu_i}+N_{\nu_i})^n.
\end{equation*}

For all peripheral eigenvalues $\nu_i$, \emph{i.e.} such that $|\nu_i|=1$, $N_{\nu_i}=0$ (see, for example, the proof of Proposition 3.1 in Ref.\onlinecite{fcs}). Thus, we can write the expression for $E^n$ as
\begin{equation*}
    E^n=\sum_{|\nu_i|=1}\nu^n_i P_{\nu_i}+\sum_{|{\nu_i}|<1} ({\nu_i}P_{\nu_i}+N_{\nu_i})^n.
\end{equation*}
It immediately follows that 
\begin{equation*}
    \tilde{E}^n=\sum_{|\nu_i|=1}\nu^n_i P_{\nu_i}.
\end{equation*}
Then we can estimate
\begin{align}\label{app-b-n}
    &\|E^n-\tilde{E}^n\|^2=\|\sum_{|\nu_i|<1}({\nu_i} P_{\nu_i}+N_{\nu_i})^{n}\|^2\\ \nonumber
    &=\sup_{\| |\psi \rangle \|=1}\langle\psi|\sum_{|\nu_i|<1}(\nu_i^* P_{\nu_i}+N_{\nu_i}^\dagger)^{n}({\nu_i} P_{\nu_i}+N_{\nu_i})^{n}|\psi\rangle \\ \nonumber
    &=\sum_{|\nu_i|<1}\sup_{ \| |\psi \rangle \|=1} \langle\psi|P_{\nu_i}(\nu_i^* P_{\nu_i}+N_{\nu_i}^\dagger)^{n}({\nu_i} P_{\nu_i}+N_{\nu_i})^{n}P_{\nu_i}|\psi\rangle\\ \nonumber
    &\leq\sum_{|\nu_i|<1}\|(\nu_i P_{\nu_i}+N_{\nu_i})^n\|^2 \sup_{\| |\psi \rangle \|=1}\langle\psi|P_{\nu_i}^2|\psi\rangle \\ \nonumber
    &\leq \max_{0<|\nu_i|<1}\|(\nu_i P_{\nu_i}+N_{\nu_i})^n\|^2 \sup_{ \| |\psi \rangle \|=1} \langle\psi|\sum_{|\nu_i|<1}P_{\nu_i}|\psi\rangle\\ \nonumber
    &\leq \max_{0<|\nu_i|<1}\|(\nu_i P_{\nu_i}+N_{\nu_i})^n\|^2 \sup_{\| |\psi \rangle \|=1} \langle\psi|\sum_{\nu_i}P_{\nu_i}|\psi\rangle\\ \nonumber
    &\leq \max_{0<|\nu_i|<1}\|(\nu_i P_{\nu_i}+N_{\nu_i})^{n}\|^2.
\end{align}
We have used the definition of the operator norm and orthogonality of the terms $\nu_iP_{\nu_i}+N_{\nu_i}$ to get the second line. The third line follows from the identity $\nu_iP_{\nu_i}+N_{\nu_i}=(\nu_iP_{\nu_i}+N_{\nu_i})P_{\nu_i},$ and the bound $\langle\psi|A^\dagger A|\psi\rangle=\|A|\psi\rangle\|^2\leq\|A\|^2\langle\psi|\psi\rangle$ is used in the fourth line. The sixth and seventh lines follow from properties of the orthogonal projectors $\sum_{|\nu_i|<1}P_{\nu_i} \le \sum_{\nu_i}P_{\nu_i}=\mathbb{1}.$ 

Now, using the fact that $N_{\nu_i}$ is nilpotent with the index $K_i+1$, we estimate
\begin{align*}
    \|(\nu_i P_{\nu_i}+N_{\nu_i})^{n}\|&=\|P_{\nu_i}\sum_{k=0}^{K_i} \binom{n}{k}\nu_i^{n-k}N_{\nu_i}^k\|\\ \nonumber
    &\leq|\nu_i|^n\sum_{k=0}^{K_i} \binom{n}{k} \frac{\|N_{\nu_i}\|^k}{|\nu_i|^k}
    \\ \nonumber
    &\leq |\nu_i|^n\sum_{k=0}^{K_i} \frac{n^k}{k!}\frac{\|N_{\nu_i}\|^k}{|\nu_i|^k}\\ \nonumber
    &\leq |\nu_i|^n\sum_{k=0}^{K_i} \binom{K_i}{k}\left(\frac{n\|N_{\nu_i}\|}{|\nu_i|}\right)^k\\ \nonumber
    &= |\nu_i|^n\left(\frac{n\|N_{\nu_i}\|}{|\nu_i|}+1\right)^{K_i},
\end{align*}
where we have used the triangle inequality and the inequality $\binom{n}{k}\leq \frac{n^k}{k!}\leq\binom{K_i}{k}n^k$ for $k\leq K_i$. Denoting $c_1:=\max_{0<|\nu_i|<1}\left\{\|N_{\nu_i}\|/|\nu_i|)\right\}$, and $K_{\mathrm{max}}=\max_{0<|\nu_i|<1}K_i$, we bound
\begin{align}\label{app-b-max}
    \max_{0<|\nu_i|<1}\|(\nu_i P_{\nu_i}+N_{\nu_i})^{n}\|\leq\nu_{\mathrm{gap}}^n \left(c_1n+1\right)^{K_{\mathrm{max}}}.
\end{align}
Combining (\ref{app-b-norm}), (\ref{app-b-n}), and (\ref{app-b-max}) together,
\begin{equation*}
\|\mathcal{E}^n-\tilde{\mathcal{E}}^n\|_{2\rightarrow 2} \leq \nu_{\mathrm{gap}}^n \left(c_1n+1\right)^{K_{\mathrm{max}}},
\end{equation*}
manifesting an almost exponential decay. Using a slightly weaker bound, we can make the decay to be strictly exponential. We choose some $\nu_{\mathrm{gap}}<\nu<1$, expand
\begin{equation*}
    \nu_{\mathrm{gap}}^n \left(c_1n+1\right)^{K_{\mathrm{max}}}=\left(\frac{\nu_{\mathrm{gap}}}{\nu}\right)^n \left(c_1n+1\right)^{K_{\mathrm{max}}}\nu^n,
\end{equation*}
and notice that first two factors can be bounded by a constant $c$, depending on $\nu$ 
\begin{equation*}
    \left(\frac{\nu_{\mathrm{gap}}}{\nu}\right)^n \left(c_1n+1\right)^{K_{\mathrm{max}}}\leq c,
\end{equation*}
which leads to the stated estimate,
\begin{equation*}
\|\mathcal{E}^n-\tilde{\mathcal{E}}^n\|_{2\rightarrow 2} \leq c\nu^n .
\end{equation*}

\end{proof}

\section{}\label{app-vec}
In this appendix we review the matrix representation of a quantum channel.\cite{watrouse}

The map $\mathcal{E}:\mathcal{B}(\mathcal{H}_M)\rightarrow\mathcal{B}(\mathcal{H}_M)$ can be represented by the operator $E:\mathcal{H}_M\otimes\mathcal{H}_M\rightarrow\mathcal{H}_M\otimes\mathcal{H}_M$ through the process of vectorization.\cite{watrouse} For the chosen orthonormal basis $\{|i\rangle : i=1\cdot \cdot \cdot d_M \}$ of $\mathcal{H}_M$ the invertible map, $\mathrm{vec}:\mathcal{B}(\mathcal{H}_M)\rightarrow\mathcal{H}_M\otimes\mathcal{H}_M$, is defined through its action on the operators $|i\rangle\langle j|$, which constitute an orthonormal basis of $\mathcal{B}(\mathcal{H}_M)$ (with respect to the Hilbert-Schmidt inner product),
\begin{equation}\label{map-vec}
    \mbox{vec}(|i\rangle\langle j|)=|i\rangle\otimes|j\rangle, \ \ i,j=1\cdot \cdot \cdot d_M.
\end{equation}
The inner product on the Hilbert space $\mathcal{H}_M\otimes\mathcal{H}_M$ may then be related to the Hilbert-Schmidt inner product on the Hilbert space $\mathcal{B}(\mathcal{H}_M)$, by
\begin{equation}\label{a-1}
    \langle \mbox{vec} (X)|\mbox{vec} (Y)\rangle_{\mathcal{H}_M\otimes\mathcal{H}_M} =\mbox{tr}(X^{\dagger} Y) = \langle X, Y \rangle_{\mathcal{B}(\mathcal{H}_M)},
\end{equation}
for $X, Y \in\mathcal{B}(\mathcal{H}_M).$
It follows immediately that 
\begin{equation}\label{a-2}
    \|X\|_2=\|\mathrm{vec}(X)\|,
\end{equation}
where $\||\Psi\rangle\|=\langle\Psi|\Psi\rangle^{\frac{1}{2}}_{\mathcal{H}_M\otimes\mathcal{H}_M}$ is the norm of the vector.

Define the operator $E:\mathcal{H}_M\otimes\mathcal{H}_M\rightarrow\mathcal{H}_M\otimes\mathcal{H}_M$ by
\begin{equation}\label{map-to-op}
    E :=\mbox{vec}\circ\mathcal{E}\circ\mbox{vec}^{-1},
\end{equation}
leading to the explicit form of the matrix elements of $E$,
\begin{equation*}
    \langle m|\otimes\langle n|E|i\rangle\otimes|j\rangle=\langle m| \mathcal{E}(|i\rangle\langle j|) |n \rangle, \quad i,j,m,n=1\cdot \cdot \cdot d_M.    
\end{equation*}

The relations (\ref{a-1}) and (\ref{a-2}) imply that the $2\rightarrow 2$ norm of the map $\mathcal{E}$ can be expressed through the operator norm of $E$,
\begin{align*}
    \|\mathcal{E}\|_{2\rightarrow 2}& :=\sup_{\|X\|_2=1}\|\mathcal{E}(X)\|_2\\ \nonumber
    &=\sup_{\|\mbox{vec}(X)\|=1}\|\mbox{vec}\circ\mathcal{E}\circ\mbox{vec}^{-1}\circ\mbox{vec}(X)\|\\ \nonumber
    &=\sup_{\| |x \rangle\|=1}\|E |x \rangle\|\\ \nonumber
    &=\|E\|.
\end{align*}

The spectra of the map $\mathcal{E}$ and the operator $E$ coincide: if $\mathcal{E}(X)=\lambda X$ for some $X\in\mathcal{B}(\mathcal{H}_M)$, then $E\mathrm{vec}(X)=\lambda\cdot\mathrm{vec}(X)$. Conversely, if $E|\psi\rangle=\lambda|\psi\rangle$ for some $|\psi\rangle\in\mathcal{H}_M\otimes\mathcal{H}_M$, then $\mathcal{E}\left(\mathrm{vec}^{-1}(|\psi\rangle)\right)=\lambda\cdot\mathrm{vec}^{-1}(|\psi\rangle)$.

\section{}\label{app-d}
In this appendix we show that without loss of generality, we can assume that the decomposition (\ref{conv-erg}) of a pgFCS into the sum of ergodic components does not contain terms that are essentially equal in the sense described below.

Let us suppose there exists $\phi \in \mathbb R$ and a unitary $W:\mathcal{H}_M\rightarrow\mathcal{H}_M$, such that the isometries $V_{\alpha}$ and $V_{\beta}$ satisfy
\begin{equation}\label{c-2}
V_{\beta}=e^{-i\phi}(\mathbb{1}_{s}\otimes W^\dagger)V_{\alpha}W.
\end{equation}
This condition defines an equivalence relation and the isometries $V_{\alpha}$ and $V_{\beta}$ are then said to be equivalent.

In this appendix we show that the general convex sum representation of the FCS may be written as
\begin{equation}\label{c-1}
    \rho_{ABC} = \sum_{\alpha=1}^{J}\lambda_{\alpha} \rho_{ABC}^{\alpha},
\end{equation}
where the $\rho_{ABC}^\alpha$ define a collection of $J$ ergodic pgFCS constructed as in (\ref{pgfcs-form}), with a collection of inequivalent isometries $V_\alpha$ and full-rank density operators $\sigma_\alpha.$

Suppose the converse is true, then
\begin{align}\label{c-3}
    \rho^\beta_{ABC}&=\mbox{tr}_M \left(V^\beta_C V^\beta_B V^\beta_A \sigma_\beta V_A^{\beta\dagger} V_B^{\beta\dagger} V_C^{\beta\dagger} \right)\\ \nonumber
    &=\mbox{tr}_M \left((\mathbb{1}_{ABC}\otimes W^\dagger)V^\alpha_C V^\alpha_B V^\alpha_A W\sigma_\beta W^\dagger\times\right.\\ \nonumber 
    &\qquad\qquad\qquad\qquad
    \left. V_A^{\alpha\dagger} V_B^{\alpha\dagger} V_C^{\alpha\dagger} (\mathbb{1}_{ABC}\otimes W)\right)\\ \nonumber
    &=\mbox{tr}_M \left(V^\alpha_C V^\alpha_B V^\alpha_A W\sigma_\beta W^\dagger V_A^{\alpha\dagger} V_B^{\alpha\dagger} V_C^{\alpha\dagger} \right).
\end{align}
Now we show that the density operator $W\sigma_\beta W^\dagger$ is a fixed point of the quantum channel $X\mapsto \mathrm{tr}_{s}(V_\alpha X V^\dagger_\alpha)$,
\begin{align*}
    \mbox{tr}_{s}(V_\alpha W\sigma_\beta W^\dagger V^\dagger_\alpha)&=W\mbox{tr}_{s}\left((\mathbb{1}\otimes W^\dagger)V_\alpha W\sigma_\beta W^\dagger\times\right.\\ \nonumber
    &\qquad\qquad\qquad
    \left. V^\dagger_\alpha(\mathbb{1}\otimes W)\right)W^\dagger \\ \nonumber
    &=W\mbox{tr}_{s}(V_\beta\sigma_\beta V^\dagger_\beta))W^\dagger \\ \nonumber
    &=W\sigma_\beta W^\dagger,
\end{align*}
where we have used the fact that $\sigma_\beta$ is a fixed point of the map $X\mapsto \mathrm{tr}_{s}(V_\beta X V^\dagger_\beta)$. Thus, $W\sigma_\beta W^\dagger$ is the fixed point of the quantum channel $X\mapsto \mathrm{tr}_s(V_\alpha X V^\dagger_\alpha)$. Since $\rho_{ABC}^\alpha$ is ergodic, the density operator $\sigma_\alpha$ is the only fixed point of this map. Then $W \sigma_\beta W^\dagger=\sigma_\alpha$. Combining this fact with (\ref{c-3}), we obtain that $\rho^\alpha_{ABC}=\rho^\beta_{ABC}$. Thus, $\lambda_\alpha\rho^\alpha_{ABC}+\lambda_\beta\rho^\beta_{ABC}=(\lambda_\alpha+\lambda_\beta)\rho^\alpha_{ABC}$. We can combine in the convex sum (\ref{c-1}) all the terms satisfying the relation (\ref{c-2}) together.

\section{}\label{app-c}
In this appendix we present a technical result, required in the proof of Theorem \ref{main-result} in Section \ref{subsec-erg},
$$V_A\sigma V^\dagger_A\in \mathcal{B}(\mathcal{H}_A)\otimes\bigoplus_{k=0}^{p-1} \Pi_k\mathcal{B}(\mathcal{H}_M)\Pi_k.$$
We closely follow the reasoning of the proof of Proposition 3.3 of Ref. \onlinecite{fcs}.
Here $V_A:\mathcal{H}_M\rightarrow\mathcal{H}_A\otimes\mathcal{H}_M$ is the isometry generating the spins in the region $A$ as in (\ref{pgfcs-form}), and  $\sigma\in\mathcal{D}(\mathcal{H}_M)$ and the projectors $\{\Pi_k\in\mathcal{B}(\mathcal{H}_M)\;|\;k=0,\cdot\cdot\cdot,p-1\}$, $\sum_{k=0}^{p-1}\Pi_k=\mathbb{1}_M$, are defined in Proposition \ref{ergpgfc}. 

Let us express $$V_A\sigma V_A^\dagger=\sum_{n,m=1}^{d_s^{|A|}}|\phi_n\rangle\langle \phi_m|\otimes\langle\phi_n|V_A\sigma V_A^\dagger|\phi_m\rangle,$$ where $\{|\phi_n\rangle\in\mathcal{H}_A|\;n=1,\cdot\cdot\cdot,d_s^{|A|}\}$ is an orthonormal basis. It is clear that $$\langle\phi_n|V_A\sigma V_A^\dagger|\phi_m\rangle\in\mathrm{span}\{\langle \zeta|V_A\sigma V^\dagger_A|\psi\rangle \;|\; |\zeta\rangle,|\psi\rangle\in\mathcal{H}_A\}.$$ Notice that $\langle \zeta|V_A\sigma V^\dagger_A|\psi\rangle=\mathrm{tr}_A(|\psi\rangle\langle\zeta|V_A\sigma V^\dagger_A)$ and the vector space $\mathrm{span}\{|\psi\rangle\langle\zeta|\;|\; |\zeta\rangle,|\psi\rangle\in\mathcal{H}_A\}=\mathcal{B}(\mathcal{H}_A)$ can be spanned by positive semi-definite operators $O_A\in\mathcal{B}(\mathcal{H}_A)$. Hence, 
\begin{align*}
    \mathcal{B}(\mathcal{H}_M)=\mathrm{span}\{\mathrm{tr}_A(O_AV_A\sigma V^\dagger_A)\;|\; O_A\geq 0,\, O_A \in\mathcal{B}(\mathcal{H}_A)\}.
\end{align*} 
Since any $O_A\geq 0$ satisfies $O_A\leq\|O_A\|\mathbb{1}_A$, it follows that 
$$\mathrm{tr}_A(O_AV_A\sigma V_A^\dagger)\leq\|O_A\|\mathrm{tr}_A(V_A\sigma V^\dagger_A)=\|O_A\|\sigma,$$ where we have used $\mathrm{tr}_A(V_A \sigma V^\dagger_A)=\sigma$. The operators $\mathrm{tr}_A(O_AV_A\sigma V_A^\dagger)$ and $\|O_A\|\sigma$ can be ordered in this way only if $\mathrm{tr}_A(O_AV_A\sigma V_A^\dagger)\in\bigoplus_{k=0}^{p-1}\Pi_k\mathcal{B}(\mathcal{H}_M)\Pi_k$, since $\sigma\in\bigoplus_{k=0}^{p-1}\Pi_k\mathcal{B}(\mathcal{H}_M)\Pi_k$. Then 
\begin{align*}
    \mathrm{span}\{\mathrm{tr}_A(O_AV_A\sigma V^\dagger_A)\;|\; O_A\geq 0,\, O_A\in\mathcal{B}(\mathcal{H}_A)\}\\ \in\bigoplus_{r=1}^p\Pi_k\mathcal{B}(\mathcal{H}_M)\Pi_k,
\end{align*}from which it follows that $\langle\zeta|V_A\sigma V^\dagger_A|\psi\rangle\in \bigoplus_{k=0}^{p-1} \Pi_k\mathcal{B}(\mathcal{H}_M)\Pi_k$, and $$V_A\sigma V^\dagger_A\in \mathcal{B}(\mathcal{H}_A)\otimes\bigoplus_{k=0}^{p-1} \Pi_k\mathcal{B}(\mathcal{H}_M)\Pi_k.$$ This leads to the identity $V_A\sigma V_A=\sum_{k=0}^{p-1}\Pi_k V_A\sigma V_A^\dagger \Pi_k$. 

\section{}\label{app-erg-e-tilde}
In this appendix we prove that the quantum channel $\tilde{\mathcal{E}}$ obtained from $\mathcal{E}$ by Definition \ref{e-tilde-e} of Section \ref{subsec-quant-chan}, and used in Section \ref{subsec-erg}, takes the form 
\begin{equation*}
    \tilde{\mathcal{E}}(X)=p\sum_{r=0}^{p-1} \mbox{tr}\left(\Pi_r X \Pi_r\right)\Pi_{r+1}\sigma\Pi_{r+1}.
\end{equation*}
We use the notation established in Section \ref{subsec-erg}.

Let $E\in\mathcal{B}(\mathcal{H}_M\otimes\mathcal{H}_M)$ and $\tilde{E}\in\mathcal{B}(\mathcal{H}_M\otimes\mathcal{H}_M)$ be the operators representing $\mathcal{E}$ and $\tilde{\mathcal{E}}$ as described in Appendix \ref{app-vec}. The maps adjoint to $\mathcal{E}$ and $\tilde{\mathcal{E}}$, with respect to Hilbert-Schmidt inner product, are represented by the operators $E^\dagger$ and $\tilde{E}^\dagger$, respectively. From Proposition 3.3 of Ref. \onlinecite{fcs} we know that the peripheral spectrum of $E^\dagger$ consists of $p$ non-degenerate eigenavalues $\exp(\frac{2\pi i}{p}r)$, where $r=0,\cdot\cdot\cdot,p-1$, which correspond to the eigenvectors $u_r$, which are unitary operators. Each of $u_r$ can be decomposed as $u_r=\sum_{k=0}^{p-1}\exp(\frac{2\pi i}{p}rk)\Pi_k$, where $\{\Pi_k\}_{k=0,\cdot\cdot\cdot,p-1}$ is the set of projectors defined in Proposition \ref{ergpgfc}, $\sum_{k=0}^{p-1}\Pi_k=\mathbb{1}_{M}$.

For brevity of notation in this appendix we denote the vectors $\mathrm{vec}(X)\in\mathcal{H}_M\otimes\mathcal{H}_M$, corresponding to the operators $X\in\mathcal{B}(\mathcal{H}_M)$ through the isomorphism $\mathrm{vec}$ defined in Appendix \ref{app-vec}, as $|X\rangle:=\mathrm{vec}(X)$. Then, in this notation,
\begin{equation*}
    |u_r\rangle=\sum_{k=0}^{p-1}\exp\left(\frac{2\pi i}{p}rk\right)|\Pi_k\rangle.
\end{equation*}
Since the matrix with the elements $\{\frac{1}{\sqrt{p}}\exp(\frac{2\pi i}{p}rk)\}_{rk}$ is unitary, we can invert the above relation,
\begin{equation*}
    |\Pi_k\rangle=\frac{1}{p}\sum_{r=0}^{p-1}\exp\left(-\frac{2\pi i}{p}rk\right)|u_r\rangle.
\end{equation*}
Notice that since $E^\dagger|u_r\rangle=\exp(\frac{2\pi i}{p}r)|u_r\rangle$, it follows that 
\begin{equation*}
    E^\dagger|\Pi_k\rangle=|\Pi_{k-1}\rangle,
\end{equation*}
as was obtained in the proof of Proposition 3.3 of Ref. \onlinecite{fcs}.

As $\tilde{E}$ constructed using Definition \ref{e-tilde-e} is defined by the peripheral spectrum of $E$, it follows that $\tilde{E}^\dagger|u_r\rangle=\exp(\frac{2\pi i}{p}r)|u_r\rangle$, $\tilde{E}^\dagger|\Pi_k\rangle=|\Pi_{k-1}\rangle$ and $\tilde{E}^\dagger=\sum_{r=0}^{p-1}\exp\left(\frac{2\pi i}{p}r\right)P_r$,
where $P_r$ is the projection onto the subspace spanned by $|u_r\rangle$. We can represent $P_r=|u_r\rangle\langle v_r|$, where $| v_r\rangle$ satisfy $\langle v_{r'}|u_r\rangle=\delta_{rr'}$. It immediately follows that 
\begin{equation*}
    \tilde{E}=\sum_{r=0}^{p-1}\exp\left(-\frac{2\pi i}{p}r\right)|v_r\rangle\langle u_r|,
\end{equation*}
and that $|v_r\rangle$ are eigenvectors of $E$ corresponding to the eigenvalues $\exp(-\frac{2\pi i}{p}rk)$. We introduce the vectors $|\omega_k\rangle$, defined by
\begin{equation*}
    |\omega_k\rangle:=\frac{1}{p}\sum_{r=0}^{p-1}\exp\left(-\frac{2\pi i}{p}rk\right)|v_r\rangle.
\end{equation*}
It is easy to see that $\tilde{E}|\omega_k\rangle=|\omega_{k+1}\rangle$ and $\langle \omega_{k'}|\Pi_k\rangle=\delta_{kk'}$. Moreover, we can express $\tilde{E}$ in terms of $|\omega_k\rangle$ and $|\Pi_k\rangle$,
\begin{align*}
    \tilde{E}&=\sum_{r=0}^{p-1}\exp\left(-\frac{2\pi i}{p}r\right)|v_r\rangle\langle u_r| \\ \nonumber
    &=\sum_{k,k',r=0}^{p-1}\exp\left(\frac{2\pi i}{p}r(k-k'-1)\right)|\omega_k\rangle\langle \Pi_{k'}|\\ \nonumber
    &=\sum_{k=0}^{p-1}|\omega_k\rangle\langle \Pi_{k-1}|.
\end{align*}

Since $\tilde{E}\sum_{k=0}^{p-1}|\omega_k\rangle=\sum_{k=0}^{p-1}|\omega_{k+1}\rangle=\sum_{k=0}^{p-1}|\omega_k\rangle$, then $\sum_{k=0}^{p-1}|\omega_k\rangle$ is the fixed point of $\tilde{E}$. In the case of the ergodic state the only fixed point of $\mathcal{E}$ is $\sigma=\sum_{k=0}^{p-1}\Pi_k\sigma\Pi_k$. Then the only fixed point of $E$, and hence of $\tilde{E}$ as well, is $|\sigma\rangle=\sum_{k=0}^{p-1}|\Pi_k\sigma\Pi_k\rangle$ with $|\Pi_k\sigma\Pi_k\rangle:=\mathrm{vec}(\Pi_k\sigma\Pi_k)$. Then $|\sigma\rangle=|\omega\rangle$. Moreover since $\langle\Pi_{k'}|\Pi_k\sigma\Pi_k\rangle=\delta_{kk'}$, then $|\omega_k\rangle=|\Pi_k\sigma\Pi_k\rangle$. Then $\tilde{E}=p\sum_{k=0}^{p-1}|\Pi_{k+1}\sigma\Pi_{k+1}\rangle\langle \Pi_{k}|$, which corresponds to the quantum channel
\begin{equation*}
    \tilde{\mathcal{E}}(X)=p\sum_{k=0}^{p-1}\mbox{tr}(\Pi_k X\Pi_k)\Pi_{k+1}\sigma\Pi_{k+1},
\end{equation*}
as required.

\section{}\label{app-alt-rec}
In this appendix we extend the construction of the recovery quantum channel defined in equations (\ref{gram-m})-(\ref{alt-rec-error}) to the general case of pgFCS.

We use the intuition built in Section \ref{sec-proof}. For a general pgFCS $\rho_{ABC}$ (\ref{pgfcs-form}) (which can be decomposed into a convex sum (\ref{conv-erg})) we build the approximating state $\tilde{\rho}_{ABC}$ (\ref{pgfcs-tilde-sec}) induced by 
\begin{equation}\label{isom-gen-form}
    \tilde{V}_B=\sum_{\alpha=1}^J\sum_{r_\alpha=0}^{p_\alpha-1}\sum_{(i_\alpha,j_\alpha)\in\mathcal{O}_{r_\alpha}}\sqrt{\lambda_\alpha p_\alpha\sigma_{\alpha,i}}|\xi^\alpha_{ij}\rangle\otimes|i_\alpha\rangle\langle j_\alpha|,
\end{equation}
where the index $\alpha$ distinguishes the components in the convex sum decomposition (\ref{conv-erg}). The terms corresponding to different $\alpha$ are the isometries inducing ergodic pgFCS and possessing the form (\ref{tilde-isom-erg-form}). The sets $\mathcal{O}_{r_\alpha}$ are defined in (\ref{tilde-isom-r-form}). From the Sections \ref{subsec-pgfcs}, \ref{subsec-erg}, and \ref{sect-fcs} we know that $\tilde{V}_B$ of the form (\ref{isom-gen-form}) (with any set of orthonormal vectors $|\xi^\alpha_{ij}\rangle$) guarantees that $\tilde{\rho}_{ABC}$ is a quantum Markov chain.

At this point it is convenient to restrict ourselves to the case of $J=1$, since the further arguments can be extended to the general case in an obvious manner. Thus in the sequel we deal with an ergodic pgFCS with period $p\geq 1$, for which 
\begin{equation*}
    \tilde{V}_B=\sum_{r=0}^{p-1}\sqrt{p}\sum_{(i,j)\in\mathcal{O}_r}\sqrt{\sigma_i}|\xi_{ij}\rangle\otimes|i\rangle\langle j|
\end{equation*}

In contrast to the main text, we here present an explicit procedure to choose $|\xi_{ij}\rangle$, which leads to a recovery error exponentially small in the size of the region $B$. These will be derived with the use of the $d_M^2\times d_M^2$ matrix $Q$. 

We define $Q$ together with the auxiliary matrix $\tilde{Q}$ of the same dimension by their matrix elements,
\begin{align}\label{m-matr-def}
    Q_{i'j';ij}=\langle j'|V_B^\dagger|i' \rangle\langle i|V_B|j\rangle, \\ \nonumber
    \tilde{Q}_{i'j';ij}=\langle j'|\tilde{V}_B^\dagger|i' \rangle\langle i|\tilde{V}_B|j\rangle,
\end{align}
where $i,j=1,\cdot\cdot\cdot,d_M$. If the region $B$ is large, these matrices are close to each other in matrix norm. Notice that as Gramian matrices, both $Q$ and $\tilde{Q}$ are positive semidefinite. We can explicitly express $\tilde{Q}$, 
\begin{equation}\label{tilde-m-expl}
    \tilde{Q}_{i'j';ij}=p\sigma_i\delta_{ii'}\delta_{jj'}\chi_{\mathcal{O}_r}\left((i,j)\right),
\end{equation}
where $\chi_{\mathcal{O}_r}$ is a characteristic function of the set $\mathcal{O}_r$ defined in (\ref{tilde-isom-r-form}). We observe that $\tilde{Q}$ is diagonal and positive semidefinite, but not full-rank if $p>1$ (since $\chi_{\mathcal{O}_r}((i,j))=0$ for some pairs $(i,j)$). Meanwhile, $Q$ differs from $\tilde{Q}$ by a small correction, and in general can have different rank. We want to construct the number of vectors $|\xi_{ij}\rangle$ equal to the cardinality of the set $\mathcal{O}_r$, which is equal to the rank of $\tilde{Q}$, hence we want to force the ranks of $Q$ and $\tilde{Q}$ to be equal. But first let us rigorously determine how close $Q$ and $\tilde{Q}$ are.

Matrices $Q$ and $\tilde{Q}$ are partial transposes of $E$ and $\tilde{E}$, respectively, defined in Appendix \ref{app-b}. The latter are related to the quantum channels $\mathcal{E}$ and $\tilde{\mathcal{E}}$, respectively, defined in Section \ref{sec-proof}. Then from Ref. \onlinecite{tomiyama},
\begin{equation}\label{m-closeness}
    \|Q-\tilde{Q}\|\leq d_M\|E-\tilde{E}\| \leq cd_M\nu^{|B|},
\end{equation}
where we have used Lemma \ref{b-2} in the second inequality. Now we can represent $Q$ in the form
\begin{equation}\label{m-repr}
    Q=\tilde{Q}+cd_M\nu^{|B|}Z,
\end{equation}
where $Z$ is a Hermitian matrix with $\|Z\|\leq 1$.

From (\ref{tilde-m-expl}) it is clear that the smallest non-zero eigenvalue of $\tilde{Q}$ is equal to $\sigma_{\mathrm{min}}$, the smallest eigenvalue of $\sigma$. The matrices $Q$ and $\tilde{Q}$ are close if the region $B$ is large enough. By dropping from $Q$ the part of the spectrum below $\sigma_{\mathrm{min}}$, we obtain from $Q$ the matrix $Q'$, which approximates to both $Q$ and $\tilde{Q}$ and has the same rank as the latter. We first make the spectral decomposition, $Q=W\Lambda W^\dagger$. We now define $\Pi_{>\epsilon}$, the projector onto the part of the spectrum with eigenvalues greater than $\epsilon=cd_M\nu^{|B|}$, and further define $Q':=W\Lambda_{>\epsilon}W^\dagger$, where $\Lambda_{>\epsilon}=\Pi_{>\epsilon}\Lambda\Pi_{>\epsilon}$. As follows from (\ref{m-closeness}), if $\sigma_{\mathrm{min}}>2cd_M\nu^{|B|}$, then as $|B|\rightarrow\infty$ there is a part of the spectrum of $Q$ with eigenvalues converging to $0$ that is well separated from the rest of the spectrum. We assume that this condition is satisfied. For $Q'$ we have the decomposition analogous to (\ref{m-repr}), $Q'=\tilde{Q}+c\nu^{|B|}Z'$, where $Z'$ is Hermitian and $\|Z'\|\leq 1$.

Since $Q'$ is positive semidefinite, $Q'^{\frac{1}{2}}$ is well-defined. We take the pseudo-inverse of $Q'^{\frac{1}{2}}$, and define $|\xi_{ij}\rangle$,
\begin{equation}\label{xi-def}
    |\xi_{ij}\rangle:=\sum_{m,n=1}^{d_M} (Q'^{-\frac{1}{2}})_{mn;ij}\langle m|V_B|n\rangle. 
\end{equation}

Now we determine the bound on $\|V_B-\tilde{V}_B\|$. By definition of the operator norm,
\begin{widetext}
\begin{align}\label{op-norm-alt}
    \|V_B-\tilde{V}_B\|^2 &=\|\sum_{i,j=1}^{d_M}\left(\langle i|V_B|j\rangle-\sqrt{p\sigma_i}|\xi_{ij}\rangle\right)\otimes|i\rangle\langle j|\|^2=\|\sum_{i,j=1}^{d_M}\left(\langle i|V_B|j\rangle-\sqrt{p\sigma_i}\sum_{m,n=1}^{d_M}(Q'^{-\frac{1}{2}})_{mn;ij}\langle m|V_B|n\rangle\right)\otimes|i\rangle\langle j|\|^2\\ \nonumber
    &=\max_{\sum_j|\eta_j|^2=1}\sum_{j,j'=1}^{d_M}\bar{\eta}_{j'}\sum_{i=1}^{d_M}\left(Q_{ij';ij}+\tilde{Q}_{ij';ij}-2\sqrt{p\sigma_{i}}Q'^{\frac{1}{2}}_{ij';ij}\right)\eta_j=\|\mathrm{tr}_1Q+\mathrm{tr}_1\tilde{Q}-2\mathrm{tr}_1((\sqrt{\sigma}\otimes\mathbb{1})Q'^{\frac{1}{2}})\|,
\end{align}
\end{widetext}
where we have used the definition of $Q$ (\ref{m-matr-def}), the identity $p\sigma_{i}\langle\xi_{ij'}|\xi_{ij}\rangle=\tilde{Q}_{ij';ij}$, which follows from (\ref{m-matr-def}) and (\ref{xi-def}), and the identity $\langle j'|V^\dagger_B|i\rangle\langle m|V_B|n\rangle(Q'^{-\frac{1}{2}})_{mn;ij}=Q_{ij';mn}(Q'^{-\frac{1}{2}})_{mn;ij}=Q'^{\frac{1}{2}}_{ij';ij}$. In the last line $\mathrm{tr}_1$ denotes the partial trace over the first pair of indices.

Notice that 
\begin{align*}
   \sum_{i=1}^{d_M}\tilde{Q}_{ij';ij}&=\sum_{i=1}^{d_M}p\sigma_i\delta_{jj'}\chi_{\mathcal{O}_r}\left((i,j)\right)\\ \nonumber
    &=p\left(\sum_{i=1}^{d_M}\chi_{\mathcal{O}_r}\left((i,j)\right)\sigma_i\right)\delta_{jj'}\\ \nonumber
    &=p\mathrm{tr}(\Pi_{r+|B|}\sigma\Pi_{r+|B|}))\delta_{jj'}\\ \nonumber
    &=p\frac{1}{p}\delta_{jj'}=\delta_{jj'}.
\end{align*}
Thus, 
\begin{equation}\label{tr1-tilde-m}
    \mathrm{tr}_1(\tilde{Q})=\mathbb{1}.
\end{equation}

Now we need to bound $\mathrm{tr}_1 Q$ and $\mathrm{tr}_1((\sqrt{\sigma}\otimes \mathbb{1})Q'^{\frac{1}{2}})$. Using the representation (\ref{m-repr}),
\begin{equation*}
    \mathrm{tr}_1Q=\mathrm{tr}_1\tilde{Q}+cd_M\nu^{|B|}\mathrm{tr}_1Z=\mathbb{1}+cd_M\nu^{|B|}\mathrm{tr}_1Z,
\end{equation*}
where we have used (\ref{tr1-tilde-m}). Since $\|Z\|\leq 1$, then $\|\mathrm{tr}_1Z\|\leq d_M$, and then
\begin{equation}\label{tr1-m-bound}
    (1-cd_M^2\nu^{|B|})\mathbb{1}\leq\mathrm{tr}_1Q\leq(1+cd_M^2\nu^{|B|})\mathbb{1}.
\end{equation}

To bound $\mathrm{tr}_1((\sqrt{\sigma}\otimes \mathbb{1})Q'^{\frac{1}{2}})$, recall that $Q'=\tilde{Q}+cd_M\nu^{|B|}Z'$, hence
\begin{align*}
    Q'^{\frac{1}{2}}&=\sqrt{\tilde{Q}+cd_M\nu^{|B|}Z'}&\\ \nonumber
    &=\tilde{Q}^{\frac{1}{4}}\sqrt{\mathbb{1}+c\nu^{|B|}d_M\tilde{Q}^{-\frac{1}{2}}Z'\tilde{Q}^{-\frac{1}{2}}}\tilde{Q}^{\frac{1}{4}},
\end{align*}
where $\tilde{Q}^{-\frac{1}{2}}$ should be regarded as a pseudo-inverse. As $Z'$ is Hermitian, we can decompose $Z'=Z'_+-Z'_-$, where both $Z'_+$ and $Z'_{-}$ are positive semidefinite. Then, by the monotonicity of matrix square root with respect to partial ordering of non-negative operators, 

\begin{align*}
    &\tilde{Q}^{\frac{1}{4}}\sqrt{\mathbb{1}-cd_M\nu^{|B|}\tilde{Q}^{-\frac{1}{2}}Z'_-\tilde{Q}^{-\frac{1}{2}}}\tilde{Q}^{\frac{1}{4}}\leq Q'^{\frac{1}{2}}\\ \nonumber &\quad\quad\quad \leq\tilde{Q}^{\frac{1}{4}}\sqrt{\mathbb{1}+cd_M\nu^{|B|}\tilde{Q}^{-\frac{1}{2}}Z'_+\tilde{Q}^{-\frac{1}{2}}}\tilde{Q}^{\frac{1}{4}}.
\end{align*}
Using Taylor expansion we can estimate
\begin{widetext}
\begin{align*}
    \sqrt{\mathbb{1}-cd_M\nu^{|B|}\tilde{Q}^{-\frac{1}{2}}Z'_-\tilde{Q}^{-\frac{1}{2}}}&\geq \mathbb{{1}}-\frac{1+2^{-\frac{9}{2}}}{2}cd_M\nu^{|B|}\tilde{Q}^{-\frac{1}{2}}Z'_-\tilde{Q}^{-\frac{1}{2}},\\ \nonumber
    \sqrt{\mathbb{1}+cd_M\nu^{|B|}\tilde{Q}^{-\frac{1}{2}}Z'_+\tilde{Q}^{-\frac{1}{2}}}&\leq\mathbb{1}+\frac{1}{2}cd_M\nu^{|B|}\tilde{Q}^{-\frac{1}{2}}Z'_+\tilde{Q}^{-\frac{1}{2}}.
\end{align*}
\end{widetext}
The former inequality is based on the previously assumed condition $2cd_M\nu^{|B|}\leq\sigma_{\mathrm{min}}$, which leads to $\|cd_M\nu^{|B|}\tilde{Q}^{-\frac{1}{2}}Z'_-\tilde{Q}^{-\frac{1}{2}}\|\leq \frac{cd_M\nu^{|B|}}{\sigma_{\mathrm{min}}}\leq\frac{1}{2}$. We have used the latter bound to estimate the remainder term in the Taylor expansion. We have also used $\|Z'_-\|\leq\|Z\|\leq 1$. Thus,
\begin{align*}
    &\tilde{Q}^{\frac{1}{2}}-\frac{1+2^{-\frac{9}{2}}}{2}cd_M\nu^{|B|}\tilde{Q}^{-\frac{1}{4}}Z'_-\tilde{Q}^{-\frac{1}{4}}\leq Q'^{\frac{1}{2}}\\ \nonumber
    &\quad\quad\quad\leq\tilde{Q}^{\frac{1}{2}}+\frac{1}{2}cd_M\nu^{|B|}\tilde{Q}^{-\frac{1}{4}}Z'_+\tilde{Q}^{-\frac{1}{4}},
\end{align*}
which leads to 
\begin{align*}
    \mathbb{1}-\frac{1+2^{-\frac{9}{2}}}{2}cd_M\nu^{|B|}\mathrm{tr}_1Z'_-&\leq \mathrm{tr}_1((\sqrt{\sigma}\otimes\mathbb{1})Q'^{\frac{1}{2}})\\ \nonumber &\leq \mathbb{1}+\frac{1}{2}cd_M\nu^{|B|}\mathrm{tr}_1Z'_+,
\end{align*}
where we have used $\mathrm{tr}_1((\sqrt{\sigma}\otimes\mathbb{1})\tilde{Q}^{\frac{1}{2}})=\mathrm{tr}_1\tilde{Q}=\mathbb{1}$, and $\mathrm{tr}_1((\sqrt{\sigma}\otimes\mathbb{1})\tilde{Q}^{-\frac{1}{4}}Z'_{\pm}\tilde{Q}^{-\frac{1}{4}}))=\mathrm{tr}_1(\tilde{Q}^{-\frac{1}{4}}\tilde{Q}^{\frac{1}{2}}\tilde{Q}^{-\frac{1}{4}}Z'_{\pm})=\mathrm{tr}_1Z'_{\pm}$. Since $\|\mathrm{tr}_1Z'_\pm\|\leq \|\mathrm{tr}_1(\mathbb{1}\otimes\mathbb{1})\|\leq d_M$, then
\begin{align}\label{tr1-cross-term}
    \left(1-\frac{1+2^{-\frac{9}{2}}}{2}cd_M^2\nu^{|B|}\right)\mathbb{1}&\leq \mathrm{tr}_1((\sqrt{\sigma}\otimes\mathbb{1})Q'^{\frac{1}{2}})\\ \nonumber &\leq \left(1+\frac{1}{2}cd_M^2\nu^{|B|}\right)\mathbb{1}.
\end{align}
Combining (\ref{tr1-tilde-m}), (\ref{tr1-m-bound}), and (\ref{tr1-cross-term}) together,
\begin{equation*}
    \|\mathrm{tr}_1Q+\mathrm{tr}_1\tilde{Q}-2\mathrm{tr}_1((\sqrt{\sigma}\otimes\mathbb{1})Q'^{\frac{1}{2}})\|\leq 2(1+2^{-\frac{11}{2}})cd_M^2\nu^{|B|}.
\end{equation*}
From (\ref{op-norm-alt}) we obtain 
\begin{equation}\label{alt-op-norm-bound}
    \|V_B-\tilde{V}_B\|\leq \sqrt{2(1+2^{-\frac{11}{2}})c}d_M\nu^{\frac{|B|}{2}}.
\end{equation}
This bound is only a factor $\sqrt{2(1+2^{-\frac{11}{2}})d_M}$ worse than the bound (\ref{isonorm-leq-exp}), though it requires the condition $\sigma_{\mathrm{min}}>2cd_M\nu^{|B|}$ to be satisfied. 

Analogous to the bound (\ref{isonorm-leq-exp}) in Section \ref{sec-proof}, (\ref{alt-op-norm-bound}) remains true for a general pgFCS. In the case $J>1$ of the convex decomposition (\ref{conv-erg}), the matrices $Q$, $Q'$ and $\tilde{Q}$ decompose into direct sums over ergodic components, i.e., $Q=\bigoplus_{\alpha=1}^{J}Q_\alpha$. The generalization to this case is straightforward. For the ergodic pgFCS of period $1$, $\tilde{Q}$ is full-rank, hence we have the simplification $Q'=Q$, where $Q$ is guaranteed to be invertible provided $\sigma_{\mathrm{min}}>cd_M\nu^{|B|}$. We have discussed this case in Section \ref{sec-rec-maps}.

\section{}\label{app-comparison}
In this appendix we make some remarks comparing our Theorem \ref{main-result} to Theorems III.1, III.2 (a generalization of III.1), and Proposition III.3 of Ref \onlinecite{fcs-mpdo}. In particular, we show that the classes of states considered in Ref. \onlinecite{fcs-mpdo} are in general distinct from pgFCS. 

In our notation, the states considered in the mentioned theorems and proposition of Ref. \onlinecite{fcs-mpdo} have the form 
\begin{equation}\label{brandao-state-form}
     \rho_{ABM}=\Phi^{|B|}\circ\Phi^{|A|}(\sigma),
\end{equation}
where the compatibility condition $\mathrm{tr}_s\Phi(\sigma)=\sigma$ is not assumed. Note that what is called system $C$ in Ref. \cite{fcs-mpdo}, is the memory system $M$ in our article, hence the indexing in (\ref{brandao-state-form}). The states $\rho_{ABM}$ of (\ref{brandao-state-form}) are related to the states $\rho_{ABC}$ of (\ref{fcs-r}) and (\ref{pgfcs-form}) in the current paper through
\begin{equation}\label{c-from-m}
    \rho_{ABC}=\mathrm{tr}_M\circ\Phi^{|C|}(\rho_{ABM}). 
\end{equation} 

In Theorems III.1 and III.2 of Ref. \onlinecite{fcs-mpdo}, the considered quantum channels $\Phi$ satisfy the condition (in our notation) 
\begin{equation}\label{brandao-cond}
    \Phi(\zeta_M)=\chi_s\otimes\zeta_M,
\end{equation}
for some states $\chi\in\mathcal{D}(\mathcal{H}_s)$ and $\zeta\in\mathcal{D}(\mathcal{H}_M)$, which are maximally mixed states in the setup of Theorem III.1. It is proven that
\begin{align}
    \|\rho_{ABM}-\rho_{AB}\otimes\zeta_M\|_1 &=O(\eta^{|B|}) \label{brandao-mem-aqmc}\\ 
    I(A:M|B)&=O(|B|\eta^{|B|}), \label{brandao-mem-qcmi-1}
\end{align}
where $\eta<1$ under some additional assumptions. We note that, while it is not stated in Ref. \onlinecite{fcs-mpdo}, (\ref{brandao-mem-aqmc}) implies that $\rho_{ABM}$ is approximated in trace norm by a manifestly QMC $\rho_{AB}\otimes\zeta_M$. Since action of a quantum channel on the system $M$ increases neither trace norm, nor QCMI, the state $\rho_{ABC}$ in (\ref{c-from-m}) inherits the properties (\ref{brandao-mem-aqmc}) and (\ref{brandao-mem-qcmi-1}),
\begin{align}
    \|\rho_{ABC}-\tilde{\rho}_{ABC}\|_1 &=O(\eta^{|B|})\\ 
    I(A:C|B)&=O(|B|\eta^{|B|}),
\end{align}
where $\tilde{\rho}_{ABC}=\rho_{AB}\otimes\mathrm{tr}_M\circ\Phi^{|C|}(\zeta_{M})$ is a QMC. In these bounds our result is similar, however the pgFCS that we consider are generally an independent class of states. Indeed, for pgFCS, the generating channel $\Phi$ has the form $\Phi(X)=VXV^\dagger$, where $V:\mathcal{H}_M\rightarrow \mathcal{H}_s\otimes\mathcal{H}_M$ is an isometry, consistent with the condition (\ref{brandao-cond}) only under the restrictive choice $\chi=|\phi\rangle\langle\phi|$ and $V=|\phi\rangle\otimes\mathbb{1}_M$ for some normalized vector $|\phi\rangle\in\mathcal{H}_s$. Hence a pgFCS in general does not belong to the class of states considered in Theorem III.2 (the converse is also true), and never belongs to the class considered in Theorem III.1.

The quantum channels considered in Proposition III.3 of Ref. \onlinecite{fcs-mpdo} have a forgetful component, i.e., can be decomposed as
\begin{equation}\label{brand-forget-comp}
    \Phi(X)=(1-\eta)\mathrm{tr}(X)\chi+\eta\mathcal{N}(X),
\end{equation}
where $\chi\in\mathcal{D}(\mathcal{H}_s\otimes\mathcal{H}_M)$, $0<\eta<1$ and $\mathcal{N}:\mathcal{B}(\mathcal{H}_M)\rightarrow\mathcal{B}(\mathcal{H}_s)\otimes\mathcal{B}(\mathcal{H}_M)$ is some quantum channel. This class of states is also distinct from pgFCS, since no channel of the form $\Phi(X)=VXV^\dagger$ can be decomposed as (\ref{brand-forget-comp}). It is intuitively obvious that a channel constructed as a conjugation by isometries does not have a forgetful component. As proof, consider two orthogonal vectors $|\psi\rangle,|\phi\rangle\in \mathcal{H}_M$, $\langle\psi|\phi\rangle=0$. Then, on the one hand, 
\begin{align}
    &\mathrm{tr}\left(\Phi(|\psi\rangle\langle\psi|)\Phi(|\phi\rangle\langle\phi|)\right) \\ \nonumber
    &\qquad\qquad=\mathrm{tr}\left(V|\psi\rangle\langle\psi|V^\dagger V|\phi\rangle\langle\phi|V^\dagger\right)\\ \nonumber
    &\qquad\qquad=\left|\langle\psi|\phi\rangle\right|^2\\ \nonumber
    &\qquad\qquad=0,
\end{align}
and on the other hand, for quantum channels admitting the decomposition (\ref{brand-forget-comp}),
\begin{align}
    &\mathrm{tr}\left(\Phi(|\psi\rangle\langle\psi|)\Phi(|\phi\rangle\langle\phi|)\right) \\ \nonumber
    &=\mathrm{tr}\left(\left((1-\eta)\chi+\eta\mathcal{N}(|\psi\rangle\langle\psi|)\right)\right.\\ \nonumber
    &\qquad\qquad\times\left.\left((1-\eta)\chi+\eta\mathcal{N}(|\phi\rangle\langle\phi|)\right)\right)\\ \nonumber
    &=(1-\eta)^2\mathrm{tr}(\chi^2)+\eta(1-\eta)\mathrm{tr}\left(\chi\mathcal{N}(|\psi\rangle\langle\psi|+|\phi\rangle\langle\phi|)\right)\\ \nonumber
    &\qquad\qquad+\eta^2\mathrm{tr}\left({N}(|\psi\rangle\langle\psi|){N}(|\phi\rangle\langle\phi|)\right)\\ \nonumber
    &>0.
\end{align} 
since all the summands are non-negative, and $(1-\eta)^2\mathrm{tr}(\chi^2)>0$. Hence, the intersection between the class of pgFCS and the states generated by a partially forgetful channel (\ref{brand-forget-comp}) is empty.

\bibliography{bibliography}

\end{document}